\title{Index coding via linear programming}
\author{ {Anna Blasiak\thanks{Department of Computer Science, Cornell University, Ithaca NY 14853. E-mail: {\tt ablasiak@cs.cornell.edu}. Supported by an NDSEG Graduate Fellowship, an AT\&T
Labs Graduate Fellowship, and an NSF Graduate Fellowship.}} \qquad
{Robert Kleinberg\thanks{Department of Computer Science, Cornell University, Ithaca NY 14853. E-mail:
{\tt rdk@cs.cornell.edu}. Supported by NSF grant CCF-0729102, a grant from the Air Force Office of Scientific Research, a Microsoft Research New
Faculty Fellowship, and an Alfred P. Sloan Foundation Fellowship.}} \qquad
{Eyal Lubetzky\thanks{Microsoft
Research, One Microsoft Way, Redmond, WA 98052, USA. Email:
{\tt eyal@microsoft.com}.}}}
\documentclass [11pt]{article}
\usepackage[]{amsmath,amssymb,amsfonts,latexsym,amsthm,enumerate,paralist,url}
\usepackage{amssymb,graphicx}
\usepackage{booktabs}
\usepackage[numeric,initials,nobysame]{amsrefs}
\usepackage[compact]{titlesec}

\date{}
%\input amssym.def \input amssym

%\oddsidemargin  0pt     %   Left margin on odd-numbered pages.
%\evensidemargin 0pt     %   Left margin on even-numbered pages.
%\marginparwidth 40pt    %   Width of marginal notes.
%\marginparsep 10pt      % Horizontal space between outer margin and
%                        % marginal note

\setlength{\textwidth}{6.5in} \setlength{\evensidemargin}{0.0in}
\setlength{\oddsidemargin}{0.0in} \setlength{\textheight}{9.0in}
\setlength{\topmargin}{-0.5in} \setlength{\parskip}{1.3mm}
\setlength{\baselineskip}{1.7\baselineskip}

\topmargin 0pt \headsep 0pt

% VERTICAL SPACING:
%\topmargin 0pt           % Nominal distance from top of page to top of
%                         %    box containing running head.
%\headsep 10pt            %    Space between running head and text.

% DIMENSION OF TEXT:

%\textheight 8.5in      %Height of text(including footnotes and figures,
%                         % excluding running head and foot).
%\textwidth 6.7in         % Width of text line.

\numberwithin{equation}{section}

%%% THEOREMS %%%
\newtheorem{maintheorem}{Theorem}

\newtheorem{theorem}{Theorem}[section]
\newtheorem*{theorem*}{Theorem}
\newtheorem{lemma}[theorem]{Lemma}
\newtheorem{claim}[theorem]{Claim}

\newtheorem*{observation*}{Observation}
\newtheorem{corollary}[theorem]{Corollary}

\newtheorem{fact}[theorem]{Fact}

\theoremstyle{definition}{

\newtheorem{definition}[theorem]{Definition}

\newtheorem{remark}[theorem]{Remark}
\newtheorem*{remark*}{Remark}
}

%\theoremstyle{remark}{
%\newtheorem{remark}{Remark}
%\newtheorem*{remark*}{Remark}
%\newtheorem{remarks}{Remarks}
%}

%%% MATH

\newcommand{\deq}{\stackrel{\scriptscriptstyle\triangle}{=}}

\newcommand{\Z}{\mathbb Z}
\newcommand{\F}{\mathbb F}

%%% PROBABILITY

\renewcommand{\P}{\mathbb{P}}

\newcommand{\cF}{\mathcal{F}}

\newcommand{\cC}{\mathcal{C}}

%%% GREEK
\renewcommand{\epsilon}{\varepsilon}

%%% THIS PAPER
\newcommand{\chibar}{\overline{\chi}}
\newcommand{\minrk}{\operatorname{minrk}}
\newcommand{\rank}{\operatorname{rank}}
\newcommand{\conf}{\mathfrak{C}}
\newcommand{\orprod}{\ensuremath{\mathaccent\cdot\vee}}
\newcommand{\aac}{{\mathsf{AAC}}}

\newcommand{\hb}[1]{{\overline{H}({#1})}}

\newcommand{\hclq}{{\mathcal{J}}}
\newcommand{\decode}{\rightsquigarrow}
\newcommand{\Encode}{{\mathcal{E}}}
\newcommand{\Decode}{{\mathcal{D}}}
\newcommand{\msg}{V}
\newcommand{\weakhyp}{\psi_f}
%\titlespacing{\subsection}{0pt}{*1}{*1}
%\titlespacing{\section}{0pt}{*1}{*1}

\newcommand{\xhdr}[1]{\paragraph{#1.}}
\newcommand{\groundset}{V}
\newcommand{\fcn}{F}

% ============= stuff for 10-page version with full version in appendix

\newif\ifshortver
\shortvertrue

\newcommand{\svapdx}[2]{\ifshortver#1\else#2\fi}
\newcommand{\svlabel}[1]{\ifshortver \label{sv:#1} \else \label{#1} \fi}
\newcommand{\svref}[1]{\ifshortver\ref{sv:#1}\else\ref{#1}\fi}
\newcommand{\sveqref}[1]{\ifshortver\eqref{sv:#1}\else\eqref{#1}\fi}

\begin{document}
\maketitle
\vspace*{-5mm}
%%%%%%%%%%%%%%%%%%%%%%%%%%%%%%%%%%%%%%%%%%%%%%%%%%%%%%%%%%%%%%%%
%%% Abstract
%%%%%%%%%%%%%%%%%%%%%%%%%%%%%%%%%%%%%%%%%%%%%%%%%%%%%%%%%%%%%%%%

\begin{abstract}
Index Coding has received considerable attention recently motivated in part by applications such as fast video-on-demand and efficient communication in wireless networks and in part by its connection to Network Coding.
Optimal encoding schemes and efficient heuristics were studied in various settings, while also leading to new results for Network Coding such as improved gaps between linear and non-linear capacity as well as hardness of approximation.
The basic setting of Index Coding encodes the side-information relation, the problem input, as an undirected graph and the fundamental parameter is the broadcast rate $\beta$, the average communication cost per bit for sufficiently long messages (i.e.\ the non-linear vector capacity).
Recent nontrivial bounds on $\beta$ were derived from the study of other Index Coding capacities (e.g.\ the scalar capacity $\beta_1$) by Bar-Yossef \emph{et al} (2006), Lubetzky and Stav (2007) and Alon \emph{et al} (2008).
However, these indirect bounds shed little light on the behavior of $\beta$: there was no known polynomial-time algorithm for approximating $\beta$ in a general network to within a nontrivial (i.e.\ $o(n)$) factor, and the exact value of $\beta$ remained unknown for \emph{any graph} where Index Coding is nontrivial.

Our main contribution is a direct information-theoretic analysis of the broadcast rate $\beta$ using linear programs, in contrast to previous approaches that compared $\beta$ with graph-theoretic parameters.
This allows us to resolve the aforementioned two open questions.  We provide a polynomial-time algorithm with a nontrivial approximation ratio for computing $\beta$ in a general network along with a polynomial-time decision procedure for recognizing instances with $\beta=2$.   In addition, we pinpoint $\beta$ precisely for various classes of graphs (e.g.\ for various Cayley graphs of cyclic groups) thereby simultaneously improving the previously known upper and lower bounds for these graphs.
Via this approach we construct graphs where the difference between $\beta$ and its trivial lower bound is linear in the number of vertices and ones where $\beta$ is uniformly bounded while its upper bound derived from the naive encoding scheme is polynomially worse.
\end{abstract}

\shortverfalse

%%%%%%%%%%%%%%%%%%%%%%%%%%%%%%%%%%%%%%%%%%%%%%%%%%%%%%%%%%%%%%%%
%%% Intro
%%%%%%%%%%%%%%%%%%%%%%%%%%%%%%%%%%%%%%%%%%%%%%%%%%%%%%%%%%%%%%%%
%%%%%%%%%%%%%%%%%%%%%%%%%%%%%%%%%%%%%%%%%%%%
\section{Introduction}\svlabel{sec:intro}
%%%%%%%%%%%%%%%%%%%%%%%%%%%%%%%%%%%%%%%%%%%%

In the Index Coding problem a server holds a set of messages that it wishes to broadcast over a noiseless channel to a set of receivers. Each receiver is interested in one of the messages and has side-information comprising some subset of the other messages. Given the side-information map as an input, the objective is to devise an optimal encoding scheme for the messages (e.g., one minimizing the broadcast length) that allows all the receivers to retrieve their required information.

This notion of source coding that optimizes the encoding scheme given the side-information map of the clients was introduced by Birk and Kol~\cite{BK} and further developed by Bar-Yossef \emph{et al.} in~\cite{BBJK}. Motivating applications include satellite transmission of large files (e.g.\ video on demand), where a slow uplink may be used to inform the server of the side-information map, namely the identities of the files currently stored at each client due to past transmissions. The goal of the server is then to issue a shortest possible broadcast that allows every client to decode its target file while minimizing the overall latency. See~\cites{BK,BBJK,CS} and the references therein for further applications of the model and an account of various heuristic/rigorous Index Coding protocols.

The basic setting of the problem (see~\cite{AHLSW}) is formalized
as follows:
the server holds $n$ messages $x_1,\ldots,x_n \in \Sigma$ where
$|\Sigma|> 1$, and there are $m$ receivers
$R_1,\ldots,R_m$.  Receiver $R_j$ is interested in one message,
denoted by $x_{f(j)}$, and knows some subset $N(j)$ of the other
messages.
A solution of the problem must specify a finite alphabet $\Sigma_P$
to be used by the server, and an encoding scheme
$\Encode: \Sigma^n \to \Sigma_P$ such that,
for any possible values of $x_1,\ldots,x_n$,
every receiver $R_j$ is able to decode the message
$x_{f(j)}$ from the value of $\Encode(x_1,\ldots,x_n)$ together
with that receiver's side-information.  The minimum
encoding length $\ell = \left\lceil \log_2 |\Sigma_P|\right\rceil$ for
messages that are $t$ bits long (i.e.~$|\Sigma|=2^t$)
is denoted by $\beta_t(G)$, where $G$ refers to the data
specifying the communication requirements, i.e.~the functions
$f(j)$ and $N(j)$.
As noted in \cite{LuSt}, due to the  overhead associated
with relaying the side-information map to the server the
main focus is on the case $t\gg1$ and namely on the
following \emph{broadcast rate}.
\begin{equation}
  \svlabel{eq-beta-limit}
  \beta(G) \deq \lim_{t\to\infty}\frac{\beta_t(G)}t = \inf_t \frac{\beta_t(G)}{t}
\end{equation}
(The limit exists by sub-additivity.) This is interpreted as the average asymptotic number of broadcast bits needed per bit of input, that is, the asymptotic broadcast rate for long messages.
 In Network Coding terms, $\beta$ is the \emph{vector} capacity whereas $\beta_1$ is a \emph{scalar} capacity.

An important special case of the problem arises when there is exactly
one receiver for each message, i.e.~$m=n$ and $f(j)=j$
for all $j$.  In this case, the side-information map $N(j)$ can equivalently
be described in terms of the binary relation consisting of pairs
$(i,j)$ such that $x_j \in N(i)$.  These pairs can be thought of as
the edges of a directed graph on the vertex set $[n]$ or, in case the
relation is symmetric, as the edges of an undirected graph.
This special case of the problem (which we will hereafter identify
by stating that \emph{$G$ is a graph}) corresponds to the original
Index Coding problem introduced by Birk and Kol~\cite{BK}, and
has been extensively studied due to its rich connections with
graph theory and Ramsey theory.  These connections stem from
simple relations between broadcast rates and other graph-theoretic
parameters.
Letting $\alpha(G),\chibar(G)$ denote the independence and clique-cover numbers of $G$, respectively, one has
 \begin{equation}
   \svlabel{eq-trivial-ineqs}
   \alpha(G) \leq \beta(G) \leq \beta_1(G)\leq \chibar(G)\,.
 \end{equation}
The first inequality above is due to an independent set being identified with a set of receivers with no mutual information, whereas the last one due to~\cites{BK,BBJK} is obtained by broadcasting the bitwise XOR of the vertices per clique in the optimal clique-cover of $G$.

\svapdx{      \xhdr{History of the problem}}
       {\subsection{History of the problem}}\svlabel{subsec:related}
The framework of graph Index Coding and its scalar capacity $\beta_1$
were introduced in~\cite{BK}, where Reed-Solomon based protocols
hinging on a greedy clique-cover (related to the bound $\beta_1 \leq
\chibar$) were proposed and empirically analyzed.
In a breakthrough paper~\cite{BBJK}, Bar-Yossef \emph{et al.}\ proposed
a new class
of linear index codes based on a matrix rank minimization problem. The
solution to this problem, denoted by $\minrk_2(G)$, % (which is unfortunately NP-hard to compute),
was shown to achieve the optimal linear scalar capacity over $GF(2)$ and in particular to be superior to the clique-cover method, i.e.\ $\beta_1 \leq \minrk_2 \leq \chibar$.
%% EL 10-27-2010: Added the above "i.e.,...".
% Furthermore, the authors of~\cite{BBJK} conjectured that $\beta_1(G)=\minrk_2(G)$ after establishing this equality for various families of graphs.
The parameter $\minrk_2$ was extended to general fields in~\cite{LuSt},
% This was thereafter refuted in~\cite{LuSt},
where arguments from Ramsey Theory
% combined with an extension of the $\minrk_2$ bound to general fields
showed that for any $\epsilon>0$ there is a family of graphs on $n$ vertices where $\beta_1 \leq n^\epsilon$ while $\minrk_2 \geq n^{1-\epsilon}$ for any fixed $\epsilon>0$.
% In that paper, the authors discussed several generalizations of the problem, including the case where the number of receivers is $m\geq n$,  as well as Index Coding with messages of longer size. In particular, they asked if there is a graph $G$ for which the limiting broadcast rate $\beta$ is strictly smaller than the scalar capacity $\beta_1$.
The first proof of a separation $\beta < \beta_1$ for graphs was presented by Alon \emph{et al.}\ in~\cite{AHLSW}; % where the separation % $\beta < \beta_1$
the proof introduces  a new capacity parameter $\beta^*$ such that $\beta \leq \beta^* \leq \beta_1$ and shows that the second inequality can be strict using a graph-theoretic characterization of  $\beta^*$.
% as the fractional chromatic number of a certain graph with $2^{|V(G)|}$ vertices.
In addition, the paper studied hypergraph Index Coding (i.e.~the general broadcasting with side information problem, as defined above), for which several hard instances were constructed --- ones where $\beta=2$ while $\beta^*$ is unbounded and others where $\beta^*<3$ while $\beta_1$ is unbounded.
The first proof of a separation $\alpha < \beta$ for graphs is presented in a companion paper~\cite{BKL11a}; the proof makes use of a new technique for bounding $\beta$ from below using a linear program whose constraints express information inequalities.  The paper then uses lexicographic products to amplify this separation, yielding a sequence of graphs in which the ratio $\beta/\alpha$ tends to infinity.  The same technique of combining linear programs with lexicographic products also leads to an unbounded multiplicative separation between non-linear and vector-linear Index Coding in hypergraphs.

As is clear from the foregoing discussion,
the prior work on Index Coding has been highly successful in
bounding the broadcast rate above and below by various parameters
(all of which are, unfortunately, NP-hard to compute) and in coming
up with examples that exhibit separations between these parameters.
However it has been less successful at providing general techniques
that allow the determination (or even the approximation)
of the broadcast rate $\beta$ for large classes of problem
instances.  The following two facts starkly illustrate this
limitation.  First, the exact
value of $\beta(G)$ remained unknown for \emph{every}
graph $G$ except those for which trivial lower and upper
bounds $\alpha(G), \chibar(G)$ coincide.  Second, it was not
known whether the broadcast rate $\beta$ could be approximated
by a polynomial-time algorithm whose approximation ratio
improves the trivial factor $n$ (achieved by simply broadcasting
all $n$ messages) by more than a constant
factor.\footnote{When $G$ is a graph, it is not hard to derive a
polynomial-time $o(n)$-approximation from~\sveqref{eq-trivial-ineqs}.}

In this paper, we extend and apply the linear programming
technique recently introduced in~\cite{BKL11a} to obtain
a number of new results on Index Coding, including
resolving both of the open questions stated in the preceding
paragraph.
The following two sections discuss our contributions, first to the
general problem of broadcasting with side information, and then to
the case when $G$ is a graph.

\medskip

\svapdx{      \xhdr{New techniques for bounding and approximating
the broadcast rate}}
       {\subsection{New techniques for bounding and approximating
the broadcast rate}}

\svlabel{subsec:new-techniques}

The technical tool at the heart of our paper is a pair of linear programs
whose values bound $\beta$ above and below.  The linear program
that supplies the lower bound was introduced in~\cite{BKL11a}
and discussed above;
the one that supplies the upper bound is strikingly similar,
and in fact the two linear programs fit into a hierarchy
defined by progressively strengthening the constraint set
(although the relevance of the middle levels of this hierarchy
to Index Coding, if any, is unclear).

\begin{maintheorem}\svlabel{thm-hierarchy}
Let $G$ be a broadcasting with side information problem,
having $n$ messages and $m$ receivers.
There is an explicit sequence of $n$ information-theoretic
linear programs, each one a relaxation of its successors, whose
respective solutions $b_1 \leq b_2 \leq \ldots  \leq b_n$ are such
that:
\begin{compactenum}[(i)]
\item \svlabel{item-b2-bn}
The broadcast rate $\beta$ satisfies $b_2 \leq \beta \leq b_n$,
and both of the inequalities can be strict.
\item \svlabel{item-b1-bn}
When $G$ is a graph, the extreme LP solutions $b_1$
and $b_n$ coincide with the independence number $\alpha(G)$ and
the fractional clique-cover number $\chibar_f(G)$
 respectively.
\end{compactenum}
\end{maintheorem}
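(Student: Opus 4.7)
The plan is to work with an information-theoretic LP whose variables $z_S$ are indexed by subsets $S$ of the message set augmented by a distinguished broadcast symbol $p$. The full (level-$n$) LP imposes the standard polymatroid axioms---nonnegativity, monotonicity, and submodularity---together with the normalization $z_{\{i\}}=1$ for every message $i$ and the decodability equalities $z_{N(j)\cup\{p,f(j)\}}=z_{N(j)\cup\{p\}}$ for every receiver $j$. Weaker levels $k<n$ are obtained by restricting the variables and constraints to subsets of size at most $k$, producing a relaxation chain whose values form the sequence $b_1\leq b_2\leq\cdots\leq b_n$; the objective throughout is $\min z_{\{p\}}$. Monotonicity of $b_k$ in $k$ is then immediate from the nesting of constraint sets, which takes care of the ``relaxation of its successors'' clause.

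\textbf{Lower bound $b_2 \leq \beta$.} I would recall the argument from the companion paper~\cite{BKL11a}: any valid $t$-bit code with encoding length $\ell$ yields a feasible point of the level-2 LP via $z_S=H(x_S)/t$, where $p$ is interpreted as the codeword. All level-2 constraints are Shannon inequalities and decodability translates into $H(x_{f(j)}\mid p,x_{N(j)})=0$. Since $z_{\{p\}}\leq\ell/t$, infimizing over codes and invoking \sveqref{eq-beta-limit} gives $b_2\leq\beta$.

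\textbf{Upper bound $\beta \leq b_n$.} This is the step I expect to be the main obstacle. The difficulty is that Shannon inequalities do not characterize the entropy cone, so an optimal $z$ at level $n$ need not be the joint entropy profile of any actual random variables. The plan is to pass from an optimal $z$ to a coding scheme with matching rate, either by a random-coding / typical-set construction at the rates prescribed by $z$, or by an algebraic construction leveraging the decodability equalities as parity conditions. In the graph case this obstacle is side-stepped once part~(ii) is established, since the bound $\beta\leq\chibar_f$ is already known~\cite{BBJK}.

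\textbf{Part (ii) and strictness.} For $b_1=\alpha(G)$, with only singleton subsets carrying variables, the decodability equalities collapse and what remains simplifies to the standard fractional independence-number LP, whose value on a graph is $\alpha(G)$. For $b_n=\chibar_f(G)$ I would take LP duality and argue that on a graph the extremal entropy profiles compatible with all submodularity and decodability constraints arise precisely from partitions of $V(G)$ into cliques; the dual then places nonnegative weights on cliques that fractionally cover every vertex, which is exactly the fractional clique cover LP. Strictness in~(i) I would demonstrate via small witnesses---a graph with $\beta<\chibar_f$ to separate $\beta$ from $b_n$, and a hypergraph instance of the type in~\cite{AHLSW} to separate $b_2$ from $\beta$.
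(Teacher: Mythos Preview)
Your hierarchy is not the one the theorem is about, and the discrepancy is exactly where your ``main obstacle'' lies. You take level $k$ to mean the standard polymatroid LP restricted to subsets of size at most $k$; the paper instead keeps \emph{all} subset variables at every level and distinguishes the levels by adding, for $2\le|R|\le k$, the \emph{$|R|$-th order submodularity} constraints
\[
\sum_{T\subseteq R}(-1)^{|R\setminus T|}X(T\cup Z)\le 0.
\]
For $|R|\ge 3$ these are \emph{not} Shannon inequalities and are not satisfied by entropy in general. That is the whole point: with only Shannon-type constraints (your level-$n$ program), every valid index code furnishes a feasible point, so the LP optimum is a \emph{lower} bound on $\beta$, not an upper bound. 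Your proposed random-coding argument cannot rescue this---your level-$n$ value coincides essentially with the paper's $b_2$, and the paper itself exhibits (via a V\'amos-matroid construction) instances with $b_2<\beta$. The missing idea is that the non-Shannon higher-order constraints force any feasible $X$ to satisfy $X(S)=|S|+\sum_{T\not\subseteq S}w(T)$ for nonnegative weights $w$ (a weighted-coverage characterization), after which the decoding constraints pin the support of $w$ to strong hypercliques; this identifies $b_n$ with the strong fractional hyperclique-cover number $\chibar_f$, and $\beta\le\chibar_f$ is then proved by an explicit code, not by realizing an entropy vector.

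Your treatment of $b_1=\alpha(G)$ also has a gap. Restricting to singleton variables discards the decoding constraints entirely (they involve $N(j)\cup\{p\}$), and in any case the ``fractional independence-number LP'' does not have value $\alpha(G)$. In the paper's $\mathcal{B}_1$ all subset variables are present but there are \emph{no} submodularity constraints; the equality $b_1=\alpha$ is obtained by exhibiting the explicit feasible primal solution $X(S)=|S|+\max\{|I|:I\text{ independent},\,I\cap S=\emptyset\}$ and checking slope, monotonicity and decoding directly.
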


As a first  application of this tool, we obtain the following
pair of algorithmic results.

\begin{maintheorem}\svlabel{thm-hypergraph-approx}
Let $G$ be a broadcasting with side information problem,
having $n$ messages and $m$ receivers.
Then there is a polynomial time algorithm which computes a
parameter $\tau=\tau(G)$ such that
$1 \leq \frac{\tau(G)}{\beta(G)} \leq O\big(n \frac {\log\log n}{\log n}\big)$.
There is also a polynomial time algorithm to decide
whether $\beta(G)=2$.
\end{maintheorem}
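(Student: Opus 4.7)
My plan is to obtain both algorithmic conclusions as consequences of the LP hierarchy of Theorem~\ref{thm-hierarchy}, combined with Ramsey-theoretic decomposition on the graph of receivers.

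For the approximation, I would produce a polynomial-time quantity $\tau(G)$ sandwiched between $\beta$ and $\beta\cdot O(n\log\log n/\log n)$ by bounding $\beta$ above and below separately. On the upper side, the naive scheme of broadcasting all messages gives $\tau\le n$, and in the graph case this can be sharpened by a greedy clique-cover of the Boppana--Halld\'orsson type. On the lower side, Theorem~\ref{thm-hierarchy} supplies $\alpha(G)=b_1\le b_2\le\beta$ in the graph case, and a Ramsey-style recursion finds in polynomial time either an independent set of size $\Omega(\log n/\log\log n)$ or a clique of the same size (which is absorbed into the clique cover). Iterating this decomposition simultaneously produces an independent set $I$ and a clique cover $\mathcal{C}$ with $|\mathcal{C}|/|I|=O(n\log\log n/\log n)$; setting $\tau(G)=|\mathcal{C}|$ yields the ratio since $|I|\le \alpha\le\beta\le\bar{\chi}_f\le|\mathcal{C}|$. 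For the hypergraph setting, I would replace $\alpha(G)$ by the strongest polynomial-time computable lower bound implicit in $b_2$ and construct a matching upper bound from an explicit coding scheme, so that the same Ramsey-bookkeeping gives the claimed ratio.

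For the decision procedure recognizing $\beta=2$, I would exploit the sandwich $b_2\le\beta\le b_n$ together with the observation that $\beta=2$ pins both endpoints at~$2$. In the graph case this reduces to the conjunction of $\alpha(G)\le 2$ (i.e.\ $\overline{G}$ is triangle-free) and $\bar{\chi}(G)\le 2$ (i.e.\ $\overline{G}$ is bipartite), both polynomially checkable, and one verifies that $\beta\ge 2$ unless $G$ is a clique. In the hypergraph case I would characterize $\beta\le 2$ by the existence of a bipartition of the messages admitting a two-symbol coding scheme (a structural condition decidable by enumerating candidate bipartitions of constant-size "confusion" witnesses), and $\beta\ge 2$ by the presence of a constant-size subproblem that obstructs a single-symbol broadcast, which is again polynomially detectable.

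The main obstacle is the hypergraph case throughout: the LPs producing $b_i$ have variables indexed by subsets of $[n]$, so neither $b_2$ nor $b_n$ is solvable outright. I expect the technical heart of the proof to be the observation that at the two regimes of interest here --- coarse approximation and the boundary value $\beta=2$ --- polynomial-size witnesses always suffice (primal feasible solutions of size $\mathrm{poly}(n)$ for the upper bound and dual solutions supported on $\mathrm{poly}(n)$ constraints for the lower bound), so the graph-theoretic Ramsey arguments lift to hypergraphs through short information-theoretic certificates without ever solving the full LPs.
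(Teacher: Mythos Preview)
Your proposal has genuine gaps in both parts, concentrated exactly where you flag the difficulty: the hypergraph case.

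\textbf{Approximation.} Your graph-case sketch via Boppana--Halld\'orsson is fine (the paper itself notes this in a remark), but your plan to ``replace $\alpha(G)$ by the strongest polynomial-time computable lower bound implicit in $b_2$'' and to lift the Ramsey bookkeeping via ``short information-theoretic certificates'' is not an argument. The paper does \emph{not} extract the lower bound from $b_2$ at all. Instead it introduces two concrete combinatorial objects for hypergraphs: an \emph{expanding sequence} of receivers $j_1,\ldots,j_k$ (where $f(j_\ell)\notin\bigcup_{i<\ell}S(j_i)$), whose length lower-bounds $\beta$ by a direct pigeonhole argument, and a \emph{weak fractional hyperclique cover}, whose weight upper-bounds $\beta$ via an explicit Reed--Solomon-style code. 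The Wigderson-type recursion then runs on these: either some receiver $j_1$ has large complement $[n]\setminus S(j_1)$, in which case you recurse there and prepend $j_1$ to the expanding sequence found, or every $|S(j)|$ is close to $n$. The crucial missing idea is what to do in the latter ``dense'' case: the paper gives a nontrivial random-sampling lemma showing that if $|S(j)|+d\ge n$ for all $j$, then the weak fractional hyperclique-cover number is at most $4d+2$. This is the technical heart of the hypergraph approximation and has no analogue in your proposal.

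\textbf{Deciding $\beta=2$.} Your structural guess for hypergraphs is wrong in both directions. You propose that $\beta\le 2$ corresponds to a \emph{bipartition} of the messages; in fact the paper's characterization is the existence of a \emph{$G$-compatible} map $F:V\to S$ into an \emph{arbitrary} set $S$ (with $F$ constant on each $T(j)$ and differing at $f(j)$), and $|S|$ can be as large as $n$. The code then sends $\sum_v x_v$ and $\sum_v \phi(v)x_v$ for an injection $\phi:S\hookrightarrow\mathbb{F}$, which is not a bipartition scheme. On the obstruction side, you posit ``constant-size confusion witnesses,'' but the paper's obstruction --- an \emph{almost alternating cycle} --- can have unbounded length, and $b_2$ is shown to be at least $2+\tfrac{1}{n}$ on a $(2n{+}1)$-AAC. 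Polynomial time comes not from enumerating constant-size witnesses but from computing the transitive closure of the relation $v\,\sharp\,w \Leftrightarrow \exists j:\{v,w\}\subseteq T(j)$ and checking that no $f(j)$ lands in the same class as $T(j)$.
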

%% EL 10-27-2010: Do we want to add a remark saying the family of beta=2 for directed hypergraphs is highly nontrivial, e.g. there can be an unbounded (log) separation from vector and scalr Index coding capacity?
\svapdx{}{In fact, the $O \big( n \frac{\log \log n}{\log n} \big)$
approximation holds in greater generality for
the \emph{weighted} case, where different messages may
have different rates (in the motivating applications this can correspond e.g.\ to a server that holds files of varying size).
The generalization is explained in
Section~\ref{sec:weighted-hypergraph}.
}

\svapdx{      \xhdr{Consequences for graphs}}
       {\subsection{Consequences for graphs}}
\svlabel{sec:graph-consequences}

In
\svapdx{Appendix~\ref{sec:beta-of-graphs}}
       {Section~\ref{sec:beta-of-graphs}}
we demonstrate the use of
Theorem~\svref{thm-hierarchy} to derive the exact value of $\beta(G)$
for various families of graphs by analyzing the LP solution $b_2$.
As mentioned above, the exact value of $\beta(G)$ was previously
unknown for any graph except when the trivial lower and upper
bounds --- $\alpha(G)$ and $\chibar(G)$ --- coincide, as happens
for instance when $G$ is a perfect graph.
Using the stronger lower and upper bounds $b_2$ and $b_n$,
we obtain
the exact value of $\beta(G)$ for all cycles and cycle-complements:
$\beta(C_n) = n/2$ and
$\beta(\overline{C_n}) = n / \lfloor \frac{n}{2} \rfloor$.
In particular this settles the
Index Coding problem for the $5$-cycle investigated
in~\cites{BBJK,BKL11a,AHLSW}, closing the gap between
$b_2(C_5) = 2.5$ and $\beta^*(C_5) = 5 - \log_2 5 \approx 2.68$.
These results also provide simple constructions of networks with gaps between
vector and scalar Network Coding capacities.

We also use Theorem~\svref{thm-hierarchy} to prove separation between broadcast rates and other graph parameters.
Our results, summarized in Table~\svref{tab-comparison},
improve upon several of the best previously known
separations.
% Prior to this work, there were no known examples of graphs
% where $\alpha \neq \beta$.  Even for the larger scalar capacity $\beta_1$ (which is
%computable, though in doubly-exponential time), there was
%no known example of a graph $G$ with $\beta_1(G) > \frac 32 \alpha(G) $.
%Our aforementioned result that $\beta(C_n) = n/2$ for $n \geq 4$ implies, in
%particular, that the 5-cycle has $2=\alpha(C_5) < \beta(C_5) = \frac52$.
Prior to this work there were no known graphs
$G$ where $\beta_1(G) - \beta(G) \geq 1$. (For the more
general setting of broadcasting with side information,
multiplicative gaps that were logarithmic in the number
of messages were established in~\cite{AHLSW}.)
In fact, merely showing that the 5-cycle satisfies
$2 \leq \beta < \beta_1 = 3$ required the involved
analysis of an auxiliary capacity $\beta^*$, discussed earlier in
Section~\svref{subsec:related}.
With the help of our linear programming bounds
(Theorem~\svref{thm-hierarchy}) we supply
in Section~\svref{subsec:cor-of-thm-1}
a family of graphs on $n$ vertices where $\beta_1 - \beta$
is linear in $n$, namely $\beta = n/2$ whereas
$\beta_1=(1-\frac15\log_2 5-o(1))n\approx 0.54 n$.

\begin{table}[t]
\centering
%\begin{tabular}{@{}llr@{}} \toprule
%\multicolumn{2}{c}{Item} \\ \cmidrule(r){1-2}
%Animal & Description & Price (\$)\\ \midrule
%Gnat & per gram & 13.65 \\
%& each & 0.01 \\
%Gnu & stuffed & 92.50 \\
%Emu & stuffed & 33.33 \\
%Armadillo & frozen & 8.99 \\ \bottomrule
%\end{tabular}
\begin{tabular}{cccc}
\toprule%
Capacities & Best previous  & New separation &  Appears in\\
compared   & bounds in graphs      &  results  & Section\\
\midrule
$\beta-\alpha$ & $\Theta \left(n^{0.56}\right)$
& $\Theta(n)$ & \svref{subsec:cor-of-thm-1}\\
\midrule[0.25pt]
$\beta$ vs.\ $\chibar_f$
& $\begin{array}{c}
  \beta \leq n^{o(1)}\\
  \chibar_f \geq n^{1-o(1)}
\end{array}$
& $\begin{array}{c}
  \beta = 3\\
  \chibar_f =\Omega( n^{1/4})
\end{array}$ & \svapdx{\svref{sec:graph-stub}}{\ref{sec:sep-beta-alpha}}\\
\midrule[0.25pt]
$\beta_1-\beta$ & $\approx 0.32$ & $\Theta(n)$ & \svref{subsec:cor-of-thm-1}\\
$\beta_1/\beta$ & $\approx1.32$ & $1.5-o(1)$ & \svref{subsec:cor-of-thm-1}\\
\midrule[0.25pt]
$\beta^*-\beta$ & --- & $\Theta(n)$ & \svref{subsec:cor-of-thm-1}\\
\bottomrule
\end{tabular}
\caption{New separation results for Index Coding capacities in $n$-vertex graphs} \svlabel{tab-comparison}
%\vspace{-0.1in}
\end{table}

We turn now to the relation between $\beta(G)$ and $\chibar_f(G)$,
the upper bound provided by our LP hierarchy.  As mentioned earlier,
Lubetzky and Stav~\cite{LuSt} supplied, for every $\varepsilon>0$, a
family of graphs on $n$ vertices satisfying
$\beta(G) \leq \beta_1(G) < n^\varepsilon$
while $\chibar_f(G) > n^{1-\varepsilon}$,
thus implying that $\chibar_f(G)$ is not
bounded above by any polynomial function
of $\beta(G)$.  We strengthen this result
by showing that $\chibar_f(G)$ is not bounded
above by \emph{any} function of $\beta(G)$.
To do so, we use
a class of projective Hadamard graphs due to Erd\H{o}s and R\'enyi
to prove the following theorem in
\svapdx{Section~\svref{sec:graph-stub}}{Section~\svref{sec:sep-beta-alpha}}.
\begin{maintheorem}\svlabel{thm-beta-chif-gap}
  There exists an explicit family of graphs $G$ on $n$ vertices such that $\beta(G) = 3$ whereas the Index Coding encoding schemes based on clique-covers cost at least $\chibar_f(G) = \Theta(n^{1/4})$ bits.
\end{maintheorem}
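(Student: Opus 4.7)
The plan is to build the family from the Erd\H{o}s--R\'enyi polarity graph of $PG(2,q)$, restricted to its non-absolute points. Fix an odd prime power $q$ and equip $\mathbb{F}_q^3$ with the standard symmetric bilinear form $\langle x,y\rangle = x_1 y_1 + x_2 y_2 + x_3 y_3$. Take as vertex set $V$ a set of representatives $\tilde v \in \mathbb{F}_q^3$ for the points of $PG(2,q)$ with $\langle \tilde v,\tilde v\rangle \neq 0$, so that $n := |V| = q^2$ (omitting the $q+1$ absolute points on the conic $\langle x,x\rangle = 0$). Declare $u \sim v$ in $G$ iff $u \neq v$ and $\langle \tilde u,\tilde v\rangle \neq 0$. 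An independent set of $G$ is a collection of pairwise orthogonal non-isotropic vectors in $\mathbb{F}_q^3$ modulo scaling; any such collection is linearly independent, so $\alpha(G) \leq 3$, with equality realized by the standard basis. The general inequality $\beta \geq \alpha$ then gives $\beta(G) \geq 3$ for free.

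For the matching upper bound $\beta(G) \leq 3$, I would exhibit an explicit vector-linear index code over $\mathbb{F}_q$. Take each message $x_v \in \mathbb{F}_q^3$ and broadcast the single matrix
\[
  Y \;=\; \sum_{v \in V} \tilde v\, x_v^{\top} \;\in\; \mathbb{F}_q^{3 \times 3},
\]
which comprises $9$ field symbols against $3$ symbols per message, hence rate $3$. Receiver $v$ subtracts the contributions of the messages it already knows---those $u \neq v$ with $\langle \tilde u,\tilde v\rangle \neq 0$---leaving
\[
  Z \;=\; \tilde v\, x_v^{\top} \;+\; \sum_{u \in V \cap v^\perp} \tilde u\, x_u^{\top}.
\]
Left-multiplying $Z$ by $\tilde v^{\top}$ annihilates the second sum (every $\tilde u$ appearing there is orthogonal to $\tilde v$) and returns $\tilde v^{\top} Z = \langle \tilde v,\tilde v\rangle\, x_v^{\top}$; since $v$ is non-absolute the scalar is nonzero and $x_v$ is recovered. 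Combined with $\beta(G) \geq 3$ this gives $\beta(G) = 3$.

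For the clique-cover gap, observe $\chibar_f(G) = \chi_f(\overline G)$, where $\overline G$ is the polarity graph restricted to non-absolute points. From $\chi_f(H) \geq |V(H)|/\alpha(H)$ it suffices to show $\alpha(\overline G) = O(q^{3/2})$. For this I would apply Cauchy--Schwarz to the double count $\sum_{w \in PG(2,q)} |I \cap w^\perp|$ over an independent set $I \subseteq V$, using that (i) each $u \in I$ lies on exactly $q+1$ polar lines $w^\perp$ and (ii) any two distinct points of $PG(2,q)$ have a unique common polar; the resulting quadratic inequality in $|I|$ gives $|I| \leq (q+1)\sqrt q - q = O(q^{3/2})$. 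Substituting, $\chibar_f(G) \geq q^2/O(q^{3/2}) = \Omega(\sqrt q) = \Omega(n^{1/4})$. The matching upper bound $\chibar_f(G) = O(n^{1/4})$ follows from the near vertex-transitivity of $\overline G$ under $PGO(3,q)$, which has only two orbits on $V$ (distinguished by whether $\langle \tilde v, \tilde v \rangle$ is a square), yielding $\chi_f(\overline G) = O(|V|/\alpha(\overline G)) = O(\sqrt q)$.

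The main obstacle is the design of the rate-$3$ index code, and in particular the realization that the $q+1$ absolute points must be excised. For an absolute $v$ one has $\tilde v \in v^\perp$, so the decoding functional $\langle \tilde v,\cdot\rangle$ fails to isolate $x_v$ and no direct analogue of the scheme succeeds. Once the absolute points are removed, three features of $PG(2,q)$ conspire: (i) $\tilde v \notin v^\perp$, (ii) $v^\perp$ is a single $2$-dimensional subspace killed by the single linear functional $\langle \tilde v,\cdot\rangle$, and (iii) the ambient dimension is exactly $3 = \alpha(G)$, so the bilinear encoding $\sum_v \tilde v\, x_v^{\top}$ uses precisely $9$ broadcast symbols against $3$ per message and achieves rate exactly $\alpha(G)$.
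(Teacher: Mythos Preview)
Your construction and overall strategy coincide with the paper's: the graph is the Erd\H{o}s--R\'enyi polarity graph on non-isotropic points of $PG(2,q)$, the lower bound $\beta\ge 3$ comes from the standard basis, the upper bound $\beta\le 3$ comes from the bilinear form, and $\chibar_f=\Omega(n^{1/4})$ comes from bounding the clique number $\omega(G)=\alpha(\overline G)$ by $O(q^{3/2})$.

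Two technical sub-steps differ, and both of your variants are arguably more self-contained. For $\beta\le 3$ the paper packages the argument via the $\minrk$ parameter: the Gram matrix $B_{uv}=\langle\tilde u,\tilde v\rangle$ represents $G$ over $\F_q$ and has rank~$3$, whence $\beta\le\minrk_{\F_q}(G)\le 3$. Your explicit code $Y=\sum_v \tilde v\,x_v^\top$ is exactly the minrank code unwound (indeed the scalar version $y=\sum_v \tilde v\,x_v\in\F_q^3$ already suffices; the matrix-valued vectorization is unnecessary but harmless). For the clique bound, the paper observes that the full orthogonality graph on $PG(2,q)$ is strongly regular with eigenvalues $\pm\sqrt q$ and applies Hoffman's bound to get $\alpha\le q^{3/2}+q+\sqrt q$. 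Your incidence double-count is the elementary substitute: since for every $u\in I$ the polar line $u^\perp$ avoids $I$, the sum $\sum_w|I\cap w^\perp|^2=|I|^2+q|I|$ is supported on at most $q^2+q+1-|I|$ terms, and Cauchy--Schwarz against $\sum_w|I\cap w^\perp|=(q+1)|I|$ yields $(m+q)^2\le q(q+1)^2$, i.e.\ $m\le(q+1)\sqrt q-q$. This avoids spectral machinery and gives the same constant.

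One small gap: your upper bound $\chibar_f(G)=O(n^{1/4})$ via near-transitivity presupposes $\alpha(\overline G)=\Omega(q^{3/2})$, which you have not established (you only proved the upper bound on $\alpha(\overline G)$). The paper handles this by citation; if you want a self-contained treatment you would need to exhibit a clique of size $\Omega(q^{3/2})$ in $G$, or cite the relevant construction.
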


Recall the natural heuristic approach to Index Coding: greedily cover
the side-information graph $G$ by $r \geq \chibar(G)$ cliques and send
the XORs of messages per clique for an average communication cost of $r$.
 A similar protocol based on Reed-Solomon Erasure codes was proposed by~\cite{BK} and was empirically shown to be effective on large random graphs. Theorem~\svref{thm-beta-chif-gap} thus presents a hard instance for this protocol,
 namely graphs where $\beta=O(1)$ whereas $\chibar(G)$ is polynomially large.

%%%%%%%%%%%%%%%%%%%%%%%%%%%%%%%%%%%%%%%%%%%%%%%%%%%%%%%%%%%%%%%%
%%% Hierarchy
%%%%%%%%%%%%%%%%%%%%%%%%%%%%%%%%%%%%%%%%%%%%%%%%%%%%%%%%%%%%%%%%
%%%%%%%%%%%%%%%%%%%%%%%%%%%%%%%%%%%%%%%%%%%%
\section{Linear programs bounding the broadcast rate}\svlabel{sec:hierarchy}
%%%%%%%%%%%%%%%%%%%%%%%%%%%%%%%%%%%%%%%%%%%%

In this section we present % the LP hierarchy that bounds
linear programs that bound
the broadcast rate $\beta$ below and above,
using an information-theoretic analysis.
We demonstrate this technique
by determining $\beta(C_5)$ precisely;
later, in \svapdx{Appendix~\ref{sec:beta-of-graphs}}
{Section~\ref{sec:beta-of-graphs}},
we determine $\beta$ precisely for various infinite
families of graphs.

\subsection{The LP hierarchy}\svlabel{subsec:def-hierarchy}

%%%%%%%%%%%%%%%%%%%%%%%%%%%%%%%%%%%%%%%%%%%%%%%%%%%
%%% LP-hierarchy -- definition

Numerous results in Network Coding theory
% In the area of Network Coding there have been numerous results that
% use information-theoretic inequalities to
% find bounds on
bound the Network
Coding rate (e.g.,~\cites{AHJKL,DFZ1,HKL,HKNW,SYC})
% . These bounds are derived
by combining entropy inequalities of two types.  The first is
purely information-theoretic and holds for any set of random
variables; the second is derived from the graph structure.  An
important example of the second type of inequality, that we
refer to as ``decoding'', enforces the following: if a set
of edges $A$ cuts off a set of edges $B$ from all the sources, then
any information on edges in $B$ is determined by information on edges
in $A$.  We translate this idea to the setting of Index Coding in
order to develop stronger lower bounds for the broadcast rate.

\begin{definition}
Given a broadcasting with side information problem and subsets of
messages $A,B$, we say that $A$ \emph{decodes} $B$ (denoted $A
\decode B$) if $A \subseteq B$ and for every message $x \in B
\setminus A$ there is a receiver $R_j$ who is interested in $x$
and knows only messages in $A$ (i.e.\
$x_{f(j)} = x$ and $N(j) \subseteq A$).
\end{definition}
\begin{remark}
For graphs, $A \decode B$ if $A \subseteq B$ and for every $v\in
B\setminus A$ all the neighbors of $v$ are in $A$.
\end{remark}
%\begin{definition}
%Given an undirected graph $G=(V,E)$ and subsets $A,B \subseteq V$, we say that $A$ \emph{downstreams} $B$ (denoted $A \decode B$) if $A \subseteq B$ and for every $v\in B\setminus A$ all the neighbors of $v$ are in $A$.
%\end{definition}

If we consider the Index Coding problem on $G$ and a valid solution
$\Encode$, then the relation $A \decode B$ implies
$H(A,\Encode(x_1,\ldots,x_n)) \ge H(B,\Encode(x_1,\ldots,x_n))$,
since for each
message in $B \setminus A$ there is a receiver who must be able to
determine the message from only the messages in $A$ and
the public channel $\Encode(x_1,\ldots,x_n)$.
(Here and in what follows we denote by $H(X,Y)$
the joint entropy of the random variables $X,Y$.)
Combining these decoding inequalities with purely
information-theoretic inequalities, one can prove
lower bounds on the entropy of the public channel,
a process formalized by a linear program (that we
denote by $\mathcal{B}_2$) whose solution $b_2$
constitutes a lower bound on $\beta$.
(See~\cites{BKL11a,Yeung} for more on information-theoretic LPs.)
Interestingly, $\mathcal{B}_2$ fits into a hierarchy of $n$
increasing linear programs such that the last LP in the
hierarchy gives an \emph{upper} bound on $\beta$.
%% % Armed with this definition of decoding for Index Coding we can
%% We can combine these decoding inequalities
%% % combine the entropy inequalities it implies as well as the same
%% with purely information-theoretic inequalities
%% % we had in the Network Coding setting
%% to prove lower bounds on the entropy of our public channel. We
%% formalize this process by writing down a linear program
%% (see~\cite{Yeung} for more on information-theoretic LPs) denoted by
%% $\mathcal{B}_2$ whose solution $b_2$
%% % improves the best known
%% constitutes a
%% lower bound for $\beta$.
%% %combining a specific subset of information inequalities.
%% Interestingly, for broadcasting with side information problems having $n$
%% messages % there is a natural way to extend
%% this LP belongs to a hierarchy of
%% $n$ increasing LPs so that the first LP in the hierarchy also gives a
%% lower bound on the broadcast rate, and the last LP gives an upper bound.

\begin{definition}
For a broadcasting with side information problem on a set $\msg$ of $n$ messages, the \emph{$\beta$-bounding LP hierarchy} is the sequence of LPs, denoted by $\mathcal{B}_1, \mathcal{B}_2, \mathcal{B}_3, \ldots , \mathcal{B}_{n}$ with solutions $b_1, b_2, \ldots, b_{n}$, given by:

\[
\begin{array}{llc}
\hline
\multicolumn{3}{c}{\mbox{\emph{$k$-th level of the LP hierarchy for the broadcast rate}}}\\
\cline{1-3}
  \mbox{minimize $X(\emptyset)$} \\
  \mbox{subject to:} \\
  \qquad X(\msg) \ge n & & \mbox{(\textit{initialize})}\\
  \qquad X(\emptyset) \ge 0 &  & \mbox{(\textit{non-negativity})}\\
  \qquad X(S) + |T \setminus S| \geq X(T) & \forall S \subseteq T \subseteq \msg &\mbox{(\textit{slope})}\\
  \qquad X(T) \ge X(S) & \forall S \subseteq T \subseteq \msg &\mbox{(\textit{monotonicity})} \\
  \qquad X(A) \ge X(B) &  \forall A,B \subseteq  \msg \,:\, A \rightsquigarrow B &\mbox{(\textit{decode})} \\
  \qquad \sum_{T \subseteq R} (-1)^{|R \setminus T|} X( T \cup Z) \le  0 &
  \!\!\!\begin{array}
    {l}
    \forall R \subseteq \msg \,:\, 2 \le |R| \le k \\
     \forall Z \subseteq \msg \,:\,  Z \cap R = \emptyset
  \end{array} & \mbox{(\textit{$|R|$-th order submodularity})}\\
  \hline
\end{array}
\]
\svlabel{def:bbLP}
\end{definition}
\begin{remark}
The above defined \emph{2-th order submodularity} inequalities are equivalent to the classical submodularity inequalities whereby $X(S) + X(T) \ge X(S \cap T) + X(S \cup T)$ for all $S,T$.
\end{remark}

Theorem~\svref{thm-hierarchy} traps $\beta$ in the solution sequence of
the above-defined hierarchy and characterizes its extreme values for
graphs.  The proofs of these results appear in
\svapdx{Appendix~\ref{sec:hierarchy-proof}}
      {Section~\ref{sec:hierarchy-proof}},
and in what follows we\svapdx{}{ first} outline the arguments therein and the
intuition behind them.

As mentioned above, the parameter $b_2$ is the entropy-based lower bound via
Shannon inequalities that is commonly used in the Network Coding
literature.  To see that indeed $\beta\geq b_2$ we
interpret a solution to the broadcasting problem as
a feasible primal solution to $\mathcal{B}_2$ via the
assignment $X(A) = H(A \cup \Encode(x_1,\ldots,x_n))$.
The proof that $\alpha(G) = b_1(G)$ for graphs is similarly based on
constructing a feasible primal solution to $\mathcal{B}_1$,
this time via the assignment
$X(A) = |A| + \max \{|I| \,:\, \mbox{$I$ is an independent set
  disjoint from $A$}\}$.  (The existence of this primal
solution justifies the inequality $b_1 \leq \alpha$; the
reverse inequality is an easy consequence of the decoding,
initialization, and slope constraints.)

%% We will verify that the
%% constraints of the second LP $\mathcal{B}_2$ are all inequalities
%% that hold for the entropy function $H$.  Hence, if $\Encode$ is an
%% optimal solution to the broadcasting problem then the primal $X(A) =
%% H(A \cup \Encode(x_1,\ldots,x_n))$ is feasible and has value $\beta$,
%% implying $b_2 \le \beta$ as $\mathcal{B}_2$ is a minimization
%% problem.
To establish that $\beta(G) \leq b_n(G)$ when $G$ is a graph
we\svapdx{}{ will} show that $b_n(G) = \chibar_f(G)$, the fractional
clique-cover number of $G$,
while $\chibar_f(G)$ is an upper bound on
$\beta$.
For a general broadcasting network $G$
we\svapdx{}{ will} follow the same
approach via
an analog of $\chibar_f$ for hypergraphs.
It turns out that there are two natural generalizations
of cliques and clique-covers
in the context of broadcasting with side information.
\begin{definition} \svlabel{def:hyperclique}
A \emph{weak hyperclique} of a broadcasting problem
is a set of receivers $\hclq$ such that
for every pair of distinct elements $R_i,R_j \in \hclq$,
$f(i)$ belongs to $N(j)$.
A \emph{strong hyperclique} is
a subset of messages $T \subseteq V$ such that for any
receiver $R_j$ that desires $x_{f(j)} \in T$ we have that
$T \subseteq N(j) \cup \{f(j)\}$.

A \emph{weak fractional hyperclique-cover} is
a function that assigns a non-negative weight to each
weak hyperclique, such that for every receiver $R_j$,
the total weight assigned to weak hypercliques
containing $R_j$ is at least 1.  A \emph{strong
fractional hyperclique-cover} is defined the
same way, except that the weights are assigned
to strong hypercliques and the coverage requirement
is applied to receivers rather than messages.
In both cases, the \emph{size} of the hyperclique-cover
is defined to be the sum of all weights.
\end{definition}
Observe that if $T$ is any set of messages and $\hclq$ is the set
of all receivers desiring a message in $T$, then $T$ is a strong
hyperclique if and only if $\hclq$ is a weak hyperclique.
However, it is not the
case that every weak hyperclique can be obtained from a strong
hyperclique $T$ in this way.

Observe also that if $\hclq$ is a
weak hyperclique and each of the
messages $x_{f(j)}\, (R_j \in \hclq)$
is a single scalar value in some field,
then broadcasting the sum of those values
provides sufficient information for each
$R_j \in \hclq$ to decode $x_{f(j)}$.
This provides an indication (though not
a proof) that $\beta$ is bounded above by
the weak fractional hyperclique cover number.
The proof of Theorem~\svref{thm-hierarchy}(\svref{item-b2-bn})
in fact
identifies $b_n$ as being equal to the \emph{strong} fractional
hyperclique-cover number, which is obviously greater than or
equal to its weak counterpart.  The role of the
$n^{\mathrm{th}}$-order submodularity constraints
is that they force the function $F(S) \stackrel{\Delta}{=}
X(\overline{S}) - |\overline{S}|$ to be a
\emph{weighted coverage function}.  Using this representation
of $F$ it is not hard to extract a fractional set cover of
$V$, and the sets in this covering are shown to be
strong hypercliques using the decoding constraints.

%% $\beta \leq \chibar_f$ as the fractional clique-cover
%% naturally translates into an encoding scheme,
%% whereas showing that $b_n = \chibar_f$ is
%% more involved and hinges on the fact that the $n^{th}$ order
%% submodularity constraints imply that $X$ must be a coverage function.
%% %
%% The encoding scheme showing that $\beta \leq \chibar_f$ is analogous to
%% the \emph{scalar} encoding scheme given by the clique-cover whereby we send the XOR of
%% all the messages in each clique (a valid encoding for
%% $\beta_1$ since every vertex knows all the
%% other messages XORed with its desired message). To see this, consider a
%% fractional clique-cover of $G$ as a clique-cover of cost $k\chibar_f(G)$,
%% for some large enough $k$, on a modified graph $G'$ in which each
%% vertex of $G$ is blown-up to an independent set of size $k$. Then
%% $\beta_k(G) \leq \beta_1(G') \leq \chibar(G') \leq k \chibar_f(G)$, from which the desired result will follow.

Finally, \svapdx{the proof}{we will show}
that one can have $\beta > b_2$ \svapdx{uses}{using} a
construction based on the V\'amos matroid following the approach used in
\cite{DFZ2} to separate the corresponding Network Coding
parameters. As for showing that one can have $\beta < b_n$,
we\svapdx{}{ will} in fact show that one can have $\beta < b_3 \leq b_n$.

We believe that the other parameters $b_3,\ldots,b_{n-1}$ have no relation to $\beta$, e.g.\ as noted above we show that there is a broadcasting instance for which $\beta < b_3$ and thus $b_3$ is not a lower bound on $\beta$.

\svapdx{
%% This section omitted from short version
}{
\subsection{Proof of Theorem~\svref{thm-hierarchy}}\svlabel{sec:hierarchy-proof}

In this section we prove Theorem~\svref{thm-hierarchy} via a series of claims.
The main inequalities involving the broadcast rate $\beta$ are shown in \S\svref{subsec:ap-hierarchy} whereas the constructions demonstrating that these inequalities can be strict appear in \S\svref{subsec:strict}.

\subsubsection{Bounding the broadcast rate via the LP hierarchy}\svlabel{subsec:ap-hierarchy}

We begin by familiarizing ourselves with the framework of the LP-hierarchy through proving
the following straightforward claim regarding the LP-solution $b_1$ and the graph independence number.

\begin{claim}\svlabel{clm-b1-eq-alpha}
If $G$ is a graph then the LP-solution $b_1$ satisfies $b_1(G) = \alpha(G)$.
\end{claim}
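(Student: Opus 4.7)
The plan is to establish the equality $b_1(G) = \alpha(G)$ by proving two matching inequalities, exploiting the fact that at level $k=1$ there are no submodularity constraints so the LP $\mathcal{B}_1$ only involves the initialize, non-negativity, slope, monotonicity, and decode constraints.

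For the upper bound $b_1 \leq \alpha(G)$, I would exhibit an explicit feasible primal solution, following the hint in the excerpt: set
\[
X(A) = |A| + \alpha(G[V\setminus A]),
\]
so that $X(\emptyset)=\alpha(G)$ and $X(V)=n$. The initialize, monotonicity, and slope constraints reduce to the elementary fact that deleting $|T\setminus S|$ vertices can decrease the independence number by at most $|T\setminus S|$ and by at least $0$. The key constraint to verify is the decode inequality: when $A\decode B$, every vertex $v\in B\setminus A$ has all its graph neighbors in $A$, so in $G[V\setminus A]$ the vertices of $B\setminus A$ are isolated. Consequently every independent set of $G[V\setminus B]$ extends by $B\setminus A$ to an independent set of $G[V\setminus A]$, giving $\alpha(G[V\setminus A])\geq \alpha(G[V\setminus B])+|B\setminus A|$, which is exactly what is needed for $X(A)\geq X(B)$.

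For the lower bound $b_1\geq \alpha(G)$, I would take an arbitrary feasible $X$ and a maximum independent set $I\subseteq V$ with $|I|=\alpha(G)$, and then chain three of the LP constraints. First, since $I$ is independent, every vertex $v\in I$ has $N(v)\cap I=\emptyset$, so $N(v)\subseteq V\setminus I$; this means $V\setminus I \decode V$, yielding $X(V\setminus I)\geq X(V)\geq n$ by the decode and initialize constraints. Second, the slope constraint applied to $\emptyset\subseteq V\setminus I$ gives $X(\emptyset)+|V\setminus I|\geq X(V\setminus I)$. Combining these,
\[
X(\emptyset)\geq n-|V\setminus I|=|I|=\alpha(G),
\]
so $b_1\geq \alpha(G)$.

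No step is really an obstacle here, but the one requiring the most care is checking the decode constraint for the primal solution in the upper bound direction, since it is the only place where the graph structure (independence of vertices in $B\setminus A$ from the rest of $V\setminus A$ after removing $A$) enters nontrivially. Everything else follows from purely combinatorial monotonicity properties of the independence number and an immediate rearrangement of the slope inequality.
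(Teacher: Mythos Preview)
Your proposal is correct and essentially identical to the paper's proof: the paper uses the same primal solution $X(S)=|S|+\max\{|I|:I\text{ independent, }I\cap S=\emptyset\}$ (which is exactly your $|S|+\alpha(G[V\setminus S])$) and verifies the constraints in the same way, and for the lower bound it applies the decode constraint $V\setminus I\decode V$ together with slope just as you do. The only differences are cosmetic.
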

\begin{proof}
In order to show that $b_1(G) \ge \alpha(G)$, let $I$ be an
independent set of maximal size in $G$.  Now, $V\setminus I
\rightsquigarrow V$ implies that $X(V\setminus I) \ge X(V) \ge n$ is
true for any feasible solution.  Additionally, $X(V\setminus I) \le
X(\emptyset) + |V\setminus I|$.  Combining these together, we get
$X(\emptyset) \ge |V| - |V\setminus I| = |I| = \alpha(G)$.
To prove $b_1(G) \le \alpha(G)$ we present a feasible solution to the
primal attaining the value $\alpha(G)$,
\begin{equation}
  \svlabel{eq-alpha-feasible-sol}
  X(S) = |S| + \max \{|I| \,:\,
\mbox{$I$ is an independent set disjoint from $S$} \}\,,
\end{equation}

We verify that the solution is feasible by checking that it satisfies all the
constraints of $\mathcal{B}_1$. The fact that $X(V) = n$ implies the initialization constraint is satisfied.
To prove the slope constraint,
for $S \subseteq T \subseteq V$ let $I, J$ be maximum-cardinality
independent sets disjoint from $S,T$ respectively.  Note that $J$
itself is disjoint from $S$, implying $|J| \le |I|$.  Thus we have
\[
X(T) = |T| + |J| = |S| + |T \setminus S| + |J| \le
|S| + |T \setminus S| + |I| = X(S) + |T \setminus S|.
\]
Note also that $I \setminus T$ is an independent set disjoint
from $T$, hence it satisfies $|I \setminus T| \le |J|$.  Thus
\[
X(T) = |T| + |J| \ge |T| + |I \setminus T|  =
|T \cup I| \ge |S \cup I| = |S| + |I| = X(S),
\]
which verifies monotonicity.  Finally, to prove
decoding let $A,B$ be any vertex sets
such that $A \decode B$.
Consider $G \setminus A$,
the induced subgraph of $G$ on vertex set $V \setminus A$.
Every vertex of $B \setminus A$
is isolated in $G \setminus A$, and
consequently if $I$ is a maximum-cardinality
independent set disjoint from $B$,
then $I \cup (B \setminus A)$ is an independent
set in $G \setminus A$.  Therefore,
\begin{equation*}
X(A) \ge |A| + |I| + |B \setminus A|
= |B| + |I| = X(B)\,.\qedhere
\end{equation*}
\end{proof}

We next turn to showing that $b_2$ is a lower bound on the broadcast rate.
\begin{claim}\svlabel{clm-b2-leq-beta}
The LP-solution $b_2$ satisfies $b_2(G) \leq \beta(G)$.
\end{claim}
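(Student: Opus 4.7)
The plan is to interpret any valid index code as a feasible primal solution to $\mathcal{B}_2$ whose objective value approaches $\beta(G)$ in the limit of long messages.

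First, fix $t$ and an optimal index code $\Encode:\Sigma^n\to\Sigma_P$ with $|\Sigma|=2^t$ and $\lceil\log_2|\Sigma_P|\rceil=\beta_t(G)$. I would place the uniform product distribution on the messages, modeling them as independent random variables $X_1,\ldots,X_n$ uniform on $\{0,1\}^t$, and set $Y=\Encode(X_1,\ldots,X_n)$. For each $S\subseteq\msg$ write $X_S=(X_i)_{i\in S}$ and define the candidate primal solution
\[
X(S)\;=\;\tfrac{1}{t}\,H(X_S,Y).
\]
The remaining task is to check that this assignment satisfies every constraint of $\mathcal{B}_2$ and then read off the objective value.

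The verifications are short and I would do them in the following order. \emph{Initialization and non-negativity:} since $Y$ is a deterministic function of $X_\msg$ and the $X_i$'s are independent uniform, $H(X_\msg,Y)=H(X_\msg)=nt$, giving $X(\msg)=n$; non-negativity of entropy gives $X(\emptyset)\ge 0$. \emph{Slope:} for $S\subseteq T$, independence of the $X_i$'s yields $H(X_{T\setminus S})=|T\setminus S|\,t$, and subadditivity gives $H(X_T,Y)\le H(X_S,Y)+H(X_{T\setminus S})$, which rearranges to $X(T)\le X(S)+|T\setminus S|$. \emph{Monotonicity:} $(X_S,Y)$ is a function of $(X_T,Y)$ when $S\subseteq T$. \emph{Decoding:} if $A\decode B$, then by definition every $x\in B\setminus A$ is determined by $(X_A,Y)$ via the decoder of some receiver whose side-information lies in $A$; hence $H(X_B,Y\mid X_A,Y)=0$ and therefore $X(A)=X(B)$. \emph{$2$nd-order submodularity:} for $R=\{i,j\}$ disjoint from $Z$, the inequality is precisely Shannon submodularity applied to $(X_{Z\cup\{i\}},Y)$ and $(X_{Z\cup\{j\}},Y)$.

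Having confirmed feasibility, the objective value is
\[
X(\emptyset)\;=\;\tfrac{1}{t}H(Y)\;\le\;\tfrac{1}{t}\log_2|\Sigma_P|\;\le\;\tfrac{\beta_t(G)}{t}.
\]
Since $b_2(G)$ is defined as the minimum of the objective over feasible solutions, $b_2(G)\le\beta_t(G)/t$ for every $t$; passing to the limit using \eqref{eq-beta-limit} gives $b_2(G)\le\beta(G)$. I do not anticipate a genuine obstacle here: the only slightly subtle point is the decoding constraint, which is the place where the combinatorics of $G$ enters the otherwise purely information-theoretic derivation, and it must be expressed correctly via the conditional-entropy argument above.
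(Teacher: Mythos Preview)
Your proposal is correct and follows essentially the same approach as the paper: both construct a feasible primal solution to $\mathcal{B}_2$ by setting $X(S)$ equal to the (normalized) joint entropy of the messages in $S$ together with the public channel, and then verify each constraint via standard Shannon inequalities plus the decoding property of a valid index code. Your version is slightly more explicit about the normalization by $t$ and the passage to the limit $\beta_t(G)/t\to\beta(G)$, whereas the paper absorbs this into the phrase ``normalized so that $H(x_i)=1$''; otherwise the arguments are the same.
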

\begin{proof}

Let $G$ be a broadcasting with side information problem with $n$
messages $\msg$ and $m$ receivers.  Consider the
message $P = \Encode(x_1,\ldots, x_n)$ that we send on the public
channel to achieve $\beta$.  Denote by $H$ the entropy function normalized so that $H(x_i) = 1$ for all $i$.  This induces a function from the power set of $\msg \cup P$ to $\mathcal{R}$ where $H(S) = |S|$ for any subset of messages $S$ and $H(P) = \beta$.

Now, let $X(S) = H(S,P)$ for $S \subseteq \msg$.  We will
show that $X$ satisfies all the constraints of the LP $\mathcal{B}_2$, implying $X$ it is a feasible solution $\mathcal{B}_2$.

First, $X(\msg) \ge n$ since $H(\msg,P) = H(\msg)$ and our
normalization has $H(\msg) = n$.   Non-negativity holds because $H(P)
\ge 0$.   The $X(\cdot)$ values satisfy monotonicity and submodularity
because entropy does.  Slope is implied by
the fact that entropy is submodular (that is, $H(S,P) + H(T\setminus
S) \ge H(T,P)$) together with our normalization.  Finally,
decoding is satisfied because the coding solution is valid: each receiver $R_j$ can determine its sought information from
$N(j)$ and the public channel.

This solution gives $X(\emptyset) = H(P) = \beta$ and since the LP is
stated as  a minimization problem it implies that $\beta$ is an upper
bound on its solution $b_2$.
\end{proof}

Next we prove that $\beta \le b_n$.  We do this in three parts.
First, for every
instance $G$ of the
broadcasting with side information problem,
we define a parameter $\chibar_f(G)$ be the
minimum size of a strong fractional hyperclique-cover; this parameter
specializes to the fractional clique-cover number when $G$ is a graph.
Next we show that $\beta \le \chibar_f$, and finally we prove
that $\chibar_f = b_n$.

%% The first guess for the proper generalization for
%% $\chibar_f$ might be the upper bound on $\beta$ used in
%% the proof of Theorem~\svref{thm-hypergraph-approx} given by a weak
%% hyperclique-covering.  However, this generalization can violate
%% the 3rd order submodularity constraint (as shown in Claim
%% \svref{claim-beta-le-b3}), and is therefore not equivalent to
%% $b_n$.  So, instead we define another analogue to $\chibar_f$ that we call a strong hyperclique
%% cover and define below.

%% To define the parameter $\chibar_f(G)$, we begin by defining
%% a structure called a strong hyperclique, which specializes to a clique
%% when $G$ is a graph.  (It turns out that there are at least two
%% relevant generalizations of cliques in graphs;
%% a different generalization called
%% a \emph{weak hyperclique} will be used in Section~\svref{sec:approx}.)

%% \begin{definition} \svlabel{def:strong-hyperclique}
%% A \emph{strong hyperclique} of a broadcasting problem is a subset of messages $T \subseteq V$ such that for any
%% receiver $R_j$ that desires $x_{f(j)} \in T$ we have that $T \subseteq N(j)$.
%% \end{definition}

%% \begin{definition}
%% A \emph{strong fractional hyperclique-cover} of a
%% broadcasting problem is a weight function $w:
%% 2^{\msg} \mapsto [0,1]$ such that $w(T) > 0$ implies that $T$ is a
%% strong hyperclique and moreover,
%% $\forall x \in V \, \sum_{T: x\in T} w(T) \ge 1$.
%% \end{definition}

%% The cost of a strong fractional hyperclique-cover is $\sum_T w(T)$ and
%% we denote $\chibar_f(G)$ to be the minimum such cost, for a
%% given broadcasting problem $G$.

\begin{claim}\svlabel{clm-beta-leq-chibarf}
For any broadcasting problem with side information, $G$, we have
$\beta(G) \leq \chibar_f(G)$.
\end{claim}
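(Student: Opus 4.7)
The plan is to construct an explicit encoding scheme directly from a strong fractional hyperclique cover, using a simple blocking argument. The basic building block is the following observation: for any strong hyperclique $T$ and any function $\sigma \colon T \to [N]$, broadcasting the single bit $\bigoplus_{v \in T} x_v^{(\sigma(v))}$, where $x_v^{(k)}$ denotes the $k$-th bit of the $N$-bit message $x_v$, lets every receiver $R_j$ with $f(j) \in T$ recover the stripe $x_{f(j)}^{(\sigma(f(j)))}$. Indeed, the defining property of a strong hyperclique gives $T \setminus \{f(j)\} \subseteq N(j)$, so $R_j$ knows every summand except the one involving its own desired message and can subtract them off.

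Next I fix a strong fractional hyperclique cover with rational weights $w_T = a_T / N$ summing to $\chibar_f(G)$ (a common denominator $N$ exists because $\chibar_f$ is the optimum of a rational LP). Working with $N$-bit messages, for each strong hyperclique $T$ I broadcast $a_T$ such XORs, indexed by $\ell \in [a_T]$, each carrying its own stripe assignment $\sigma_T^\ell \colon T \to [N]$. The total number of broadcast bits is $\sum_T a_T = N \cdot \chibar_f$, so provided the scheme decodes correctly we get $\beta_N(G) \leq N \cdot \chibar_f$, and hence $\beta(G) \leq \chibar_f(G)$.

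The only remaining step, and the main technical point, is choosing the stripe assignments so that every receiver $R_j$ can recover every stripe of $x_{f(j)}$; equivalently, for every desired message $v$ and every $k \in [N]$, some pair $(T,\ell)$ with $v \in T$ must satisfy $\sigma_T^\ell(v) = k$. The key observation that makes this easy is that the values $\sigma_T^\ell(v)$ for different $v \in T$ are completely unconstrained relative to one another: a single broadcast is agnostic to how the stripes are chosen across the different messages it mixes. I can therefore handle each desired message $v$ independently. The covering inequality $\sum_{T \ni v} w_T \geq 1$ gives $\sum_{T \ni v} a_T \geq N$, so I enumerate the $\geq N$ pairs $(T,\ell)$ with $v \in T$ and assign their $\sigma_T^\ell(v)$ values cyclically through $[N]$, guaranteeing that every stripe of $v$ is hit. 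Running this procedure independently for each desired $v$ pins down every function $\sigma_T^\ell$ and produces a valid scheme, completing the bound.
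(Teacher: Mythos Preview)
Your proof is correct and follows essentially the same approach as the paper: both clear denominators in a rational strong fractional hyperclique cover, take messages of length equal to the common denominator, and broadcast one XOR bit per (hyperclique, copy) pair, with a bit-assignment ensuring each message stripe is covered. The only cosmetic difference is that the paper first trims the cover to make the coverage exactly tight (using closure of strong hypercliques under subsets) and then uses the canonical assignment, whereas you skip the trimming and use a cyclic assignment on the possibly over-covering family; both work for the same reason.
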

\begin{proof}
Let $\cC$ be the set of strong hypercliques in $G = (\msg,E)$.  If
$\chibar_f \leq w$ then there is a finite collection of ordered pairs
$\{(S,x_S) \,:\, S \in \cC\}$ where the $x_S$'s are positive rational
numbers satisfying
\begin{align*}
\sum_{S \in \cC} x_S  = w \,,\quad\mbox { and }\quad
\sum_{S \in \cC: \, x \in S} x_S \geq 1\mbox{ for all $x \in \msg$}\,.
\end{align*}
Let $q$ be a positive integer such that each of the numbers $x_S \,
(S \in \cC)$ is an integer multiple of $1/q.$  Set $p = q w$, noting
that $p$ is also a positive integer.  Letting $y_S = q x_S$ for every
$S \in \cC$, we have:
\begin{align}
\sum_{S \in \cC} y_S = p\,,\quad\mbox{ and }\quad
\sum_{S \in \cC: \, x \in S} y_S & \geq q\mbox{ for all $x \in V$}\,.\svlabel{eq:chibarf-q}
\end{align}
Replacing each pair $(S,y_S)$ with $y_S$ copies of the pair $(S,1)$ if
necessary, we can assume that $y_S = 1$ for every $S$.
Similarly, replacing each $S$ by a proper subset if necessary, we can
assume that the inequality~\sveqref{eq:chibarf-q} is tight for every
$x$.
(Note that this step depends on the fact that the collection of
strong hypercliques, $\cC$, is closed under taking subsets.)
%% (Note that this step needs the fact that we are working with
%% strong, rather than weak hypercliques.  We can do this because we know
%% that any strong hyperclique that includes $m \in S$ will satisfy all
%% of the receivers that desire $m$.  On the other hand, if we replace a weak
%% hyperclique $S$ with $S-a$ it might be that there is a receiver who
%% desires $a$ who can no longer decode.)
Altogether we have a sequence of sets $S_1,S_2,\ldots,S_p$, each of which
is a  strong hyperclique in $G$, such that every message occurs in
exactly $q$ of these sets.

From such a set system it is easy to construct an index code
where every message has $q$ bits (i.e.\ $\Sigma = \{0,1\}^q$)
and the broadcast utilizes $p$ bits (i.e.\ $\Sigma_P = \{0,1\}^p$).
Indeed, for each message $x \in V$ let
$j_1(x) < j_2(x) < \cdots < j_q(x)$ denote the indices such that
$x \in S_j$ for $j \in \{j_1(x),j_2(x),\ldots,j_q(x)\}$.  If the
bits of message $x$ are denoted by
$b_1(x),b_2(x),\ldots,b_q(x)$ then for each $1 \leq i \leq p$
the $i$-th bit of the index code is computed by taking the
sum (modulo 2) of all bits $b_k(z)$ such that $z \in S_i$ and
$i=j_k(z)$.  Receiver $R = (S,x)$ is able to decode the $k^{\mathrm{th}}$
bit of $x$ by taking the $j_k(x)$-th bit of the index code and
subtracting various bits belonging to other messages $x' \in
S_{j_k(x)}$.  All of these bits are known to $R$ since % of course
$S_{j_k(x)}$ is a strong hyperclique containing $x$.
This confirms that
$\beta(G) \leq p/q = w$, as desired.
\end{proof}

It remains to characterize the extreme upper LP solution:
\begin{claim}\svlabel{clm-bn-eq-chibarf}
  The LP-solution $b_n$ satisfies $b_n(G) = \overline{\chi}_f(G)$.
\end{claim}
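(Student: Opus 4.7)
The plan is to prove both inequalities separately.

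\emph{For $b_n \leq \chibar_f$,} I would exhibit a feasible solution to $\mathcal{B}_n$ of value $\chibar_f$. Starting from an optimal strong fractional hyperclique cover $\{\tilde y_R\}$---which, using that strong hypercliques are closed under subsets (as in the proof of Claim~\ref{clm-beta-leq-chibarf}), may be taken to satisfy $\sum_{R \ni v}\tilde y_R = 1$ exactly---I would set
\[
X(A) \;=\; |A| + \sum_{R \not\subseteq A}\tilde y_R.
\]
Initialization ($X(V)=n$), slope, and monotonicity reduce to routine manipulations; the $|R|$-th order submodularity inequality, after the constant and linear parts cancel (valid for $|R|\ge 2$), collapses to $-\sum_{R\subseteq Q\subseteq R\cup Z}\tilde y_Q\le 0$, which is automatic from $\tilde y\ge 0$. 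The decoding constraint is the interesting step: for $A \decode B$ and $v \in B \setminus A$, the existence of a receiver wanting $v$ with side information in $A$ combined with the strong hyperclique property forces every $R$ in the support containing $v$ to satisfy $R \subseteq B$ and $R \cap (B \setminus A) = \{v\}$, so $\sum_{R \subseteq B,\, R \not\subseteq A}\tilde y_R = \sum_{v \in B \setminus A}\sum_{R \ni v}\tilde y_R = |B \setminus A|$, which is precisely the identity $X(A)=X(B)$ required.

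\emph{For $b_n \geq \chibar_f$,} I would take any feasible $X$ and extract a strong fractional hyperclique cover of size at most $X(\emptyset)$. Following the hint in the excerpt, define $F(S)=X(\overline{S})-|\overline{S}|$. The slope and monotonicity of $X$ translate into $0\le F(S')-F(S)\le |S'\setminus S|$ for $S\subseteq S'$, and (after the linear terms cancel, valid for $|R|\ge 2$) the $|R|$-th order submodularity rewrites as $(-1)^{|R|}\Delta_R F(P)\le 0$, where $\Delta_R F(P)=\sum_{T\subseteq R}(-1)^{|R|-|T|}F(P\cup T)$ is the standard $|R|$-th discrete difference. Together with $\Delta_{\{v\}}F\ge 0$ from monotonicity, these are exactly the sign conditions ensuring that the M\"obius coefficients
\[
w_T \;=\; (-1)^{|T|+1}\Delta_T F(V\setminus T), \qquad T \neq \emptyset,
\]
are non-negative, which yields the weighted coverage representation
\[
F(S) \;=\; F(\emptyset) + \sum_{T\neq\emptyset}w_T\,\one[T\cap S\neq\emptyset],
\]
with $F(\emptyset)=X(V)-n\ge 0$ by initialization.

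To pin the support of $w$ to strong hypercliques, I would exploit the decoding constraints. For each receiver $R_j$ desiring $v$ and each $Z\subseteq V\setminus(N(j)\cup\{v\})$, the pair $A=N(j)\cup Z$, $B=A\cup\{v\}$ satisfies $A\decode B$, and combining decode with the upper inequality $F(\overline{A})-F(\overline{B})\le|B\setminus A|$ forces $F(\overline{A})-F(\overline{B})=1$; expanding via the coverage representation this reads $\sum_{T\ni v,\,T\subseteq N(j)\cup Z\cup\{v\}}w_T=1$. Since this identity holds uniformly in $Z$, every $T\ni v$ with $w_T>0$ must satisfy $T\subseteq N(j)\cup\{v\}$. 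Intersecting over all receivers wanting $v$ gives $T\setminus\{v\}\subseteq\bigcap_{j:f(j)=v}N(j)$, and letting $v$ range over $T$ shows that $T$ is a strong hyperclique. Choosing $Z=V\setminus(N(j)\cup\{v\})$ yields $\sum_{T\ni v}w_T=1$, so $w$ is a strong fractional hyperclique cover of size $\sum_T w_T=F(V)-F(\emptyset)\le F(V)=X(\emptyset)$, proving $\chibar_f\le b_n$.

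The main obstacle I anticipate is this last step: realizing that the decoding constraints, coupled with the upper inequality on $F$ becoming tight, localize the M\"obius coefficients onto strong hypercliques. The translation of the $|R|$-th order submodularity into sign conditions on discrete differences, and the verification of the coverage representation, is algebraically the most delicate part of the argument but is a standard M\"obius inversion once the substitution $F(\cdot)=X(\overline{(\cdot)})-|\overline{(\cdot)}|$ is made.
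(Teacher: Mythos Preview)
Your proposal is correct and follows essentially the same approach as the paper. Both directions hinge on the weighted-coverage representation $X(S)=|S|+\sum_{T\not\subseteq S}w(T)$: for $b_n\le\chibar_f$ one plugs in a normalized optimal cover and checks feasibility, and for $b_n\ge\chibar_f$ one extracts the weights from an optimal $X$ and uses the decode-plus-monotonicity equalities (your $F(\overline{A})-F(\overline{B})=1$, the paper's Facts about $\sum_{T\ni x}w(T)=1$ and $\sum_{T:\,x\in T\subseteq N(j)\cup\{x\}}w(T)=1$) to force the support onto strong hypercliques. The only cosmetic differences are that you derive the coverage representation inline via M\"obius inversion and sign conditions on discrete differences, whereas the paper packages this as a separate lemma proved by inverting a Kronecker-product matrix, and that you parametrize the decode constraint by an auxiliary set $Z$ to obtain both facts in one sweep rather than stating them separately.
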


\begin{proof}%[\textbf{\emph{Proof of Claim~\svref{clm-bn-eq-chibarf}}}]
The proof hinges on the fact that the entire set of constraints of $\mathcal{B}_n$ gives a useful structural characterization of any feasible solution $X$. Once we have this structure it will be simple to infer the required result.

\begin{lemma}\svlabel{lem:coverage-functions}
A vector $X$ satisfies the slope constraint and the i-th order
submodularity constraints for $i \in \{2,\ldots,n\}$ if and only if
there exists a vector of non-negative numbers $w(T)$, defined for
every non-empty set of messages $T$, such that $X(S) = |S| + \sum_{T: T \not \subseteq S} w(T)$ for all $S \subseteq \msg$.
\end{lemma}

The proof of this fact is similar to a characterization of a weighted coverage function. While much of the proof is likely folklore, we include it in Section~\svref{sec:ap-coverage} for completeness.

Given this fact we now prove that $b_n(G) \ge \chibar_f(G)$ by showing that any solution $X$ having the form stated in Lemma \svref{lem:coverage-functions} is a fractional coloring of $\overline{G}$.  Thus, for the remainder of this
subsection, $X$ refers to a solution of $\mathcal{B}_n$ having value $b_n(G)$
and $w$ refers to the associated vector of non-negative numbers whose
existence is guaranteed by Lemma~\svref{lem:coverage-functions}.
\begin{fact}\svlabel{fact:sumT}
For every message $x \in \msg$, $\sum_{T \ni x } w(T) = 1$.
\end{fact}
To see this, observe that monotonicity and decoding imply that $X(\msg \setminus \{x\}) = X(\msg)$.  Lemma~\svref{lem:coverage-functions} implies that the right-hand-side is $n$ while the left-hand-side is $n-1+\sum_{T \ni x} w(T)$.
\begin{fact}\svlabel{fact:sumneigh}
For every receiver $R_j$, if $x$ denotes $x_{f(j)}$,
then
$\sum_{T : \, x \, \in \, T \, \subseteq \, N(j) \cup \{x\}} w(T) = 1$.
% $\sum_{\substack{T \ni x \\ T \subseteq N(j) \cup \{x\}}} w(T)=1$.
\end{fact}
Indeed, monotonicity and decoding imply that $X(N(j) \cup \{x\}) = X(N(j))$. Lemma~\svref{lem:coverage-functions} implies that the right side and left side differ by
$1-\sum_{T : \, x \, \in \, T \, \subseteq \, N(j) \cup \{x\}} w(T).$

%$1-\sum_{\substack{T \ni x \\ T \subseteq N(j) \cup \{x\}}} w(T)$.

For a message $x$, let $N(x) = \bigcap_{j:x =x_{f(j)}} N(j)$ be the
intersection of the side information for every receiver who wants to
know $x$.  By combining Facts~\svref{fact:sumT} and~\svref{fact:sumneigh} we find
that if $w(T)$ is positive then $T$ is contained in $N(x) \cup \{x\}$
for every $x$ in $T$.  Thus, we can infer the following:
\begin{corollary}\svlabel{cor:clique}
If $w(T) > 0$ then the set of receivers desiring messages in $T$ is a
strong hyperclique.
\end{corollary}

Now, to prove $b_n(G) \le \overline{\chi}_f(G)$ we show that if a
vector $w$ gives a feasible fractional coloring then $X(S) = |S| + \sum_{T: T \not \subseteq S} w(T)$ is feasible for the LP
$\mathcal{B}_n$.  By the argument made in the proof of Claim \svref{clm-beta-leq-chibarf} we can assume without
loss of generality that $\sum_{T \ni u} w(T) = 1 \; \forall u \in
V$.  $X$ has value equal to the fractional coloring because
$X(\emptyset) = \sum_{T} w(T)$.  Further,
Lemma~\svref{lem:coverage-functions} implies that $X$ satisfies the
$i$-th order submodularity constraints and slope.   It trivially
satisfies initialization and non-negativity.  To show that $X$
satisfies monotonicity it is sufficient to prove that $X(S \cup \{u\})
\ge X(S)$ for all $S \subseteq V, u \in V \setminus S$.  By definition, we have $X(S \cup \{u\}) - X(S) = 1 -
% \sum_{\substack{T \ni u \\ T \not \subseteq S}} w(T)
\sum_{T : \, u \, \in \, T \, \not \subseteq \, S} w(T)
$.  Additionally, we know $
%\sum_{\substack{T \ni u \\ T \not \subseteq S}} w(T)
\sum_{T : \, u \, \in \, T \, \not \subseteq \, S} w(T)
\le  \sum_{T: u \in T}
w(T) = 1$, where the last equality is because $w$ is a fractional
coloring.  Finally, for the decoding constraints, it is sufficient
to show that $X(A)\ge X(A\cup\{x\})$ for $A = N(j)$
where $R_j$ is a receiver who desires $x$.  By definition of $X$,
$X(A) - X(A\cup\{x\}) =
% \sum_{\substack{T \ni x \\ T  \subseteq N(j)\cup\{x\}}} w(T)
\sum_{T: \, x \, \in \,  T \, \subseteq \, N(j) \cup \{x\}} w(T)
- 1$.  Also, $
% \sum_{\substack{T \ni x \\ T  \subseteq N(j) \cup \{x\}}} w(T)
\sum_{T : \, x \, \in \, T \, \subseteq \, N(j) \cup \{x\}} w(T)
= \sum_{T \ni x} w(T) = 1$ because $T$
with $w(T) > 0 $ is a strong hyperclique.
\end{proof}

\subsubsection{Strict lower and upper bounds for the broadcast rate}\svlabel{subsec:strict}

\begin{claim}\svlabel{claim-beta-le-b3}
There exists a broadcasting with side information instance $G$ for which $\beta(G) < b_3(G)$.
\end{claim}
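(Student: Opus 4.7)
The plan is to construct a specific broadcasting with side information instance $G$ and prove $\beta(G) < b_3(G)$ by exhibiting an encoder establishing $\beta(G)\le r$ and an LP dual argument establishing $b_3(G) > r$, for some rational $r$. Such a gap is possible precisely because the 3rd-order submodularity inequalities $\sum_{T\subseteq R}(-1)^{|R\setminus T|} X(T\cup Z)\le 0$ for $|R|=3$, unlike ordinary Shannon submodularity, are not satisfied by entropy functions in general (the simplest witness being three copies of the same random variable, for which the signed sum equals $H(X)>0$). So on a suitably chosen $G$, the entropy-based assignment $X(A)=H(A\cup\{P\})$ that witnesses $\beta$ in $\mathcal{B}_2$ may fail to be feasible for $\mathcal{B}_3$, allowing $b_3$ to exceed $\beta$ strictly.

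I would look for a small, highly symmetric broadcasting instance --- most naturally a hypergraph rather than a graph --- satisfying two requirements: (i) an explicit XOR-based linear index code of rate $r$ proves $\beta(G)\le r$ by verifying decoding for each receiver, and (ii) the entropy profile induced by such an optimal encoder provably violates at least one 3rd-order submodularity inequality at some specific triple $R\subseteq V$ and disjoint context $Z\subseteq V$. The violating pair $(R,Z)$ is the signature that guides the subsequent LP dual analysis and forces the design of $G$: it must be rich enough in decoding relations that 3rd-order submodularity cascades, but structured enough that an efficient encoder still exists.

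To bound $b_3(G)>r$, I would first symmetrize any feasible $X$ of $\mathcal{B}_3$ by averaging over the automorphism group of $G$, so that $X(S)$ depends only on the orbit of $S$; this collapses $\mathcal{B}_3$ to a small finite-dimensional LP in which the remaining analysis is a hand computation. Then I would exhibit an LP dual certificate, namely non-negative multipliers on the initialization, slope, decoding, monotonicity and 3rd-order submodularity constraints whose weighted sum implies $X(\emptyset)\ge r'$ for some $r'>r$ on every feasible $X$. The 3rd-order weight is concentrated on the $(R,Z)$ identified in step (ii); slope and decoding inequalities are then used to absorb the intermediate $X(S)$ terms, leaving only $X(\emptyset)$ with the desired coefficient.

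The main obstacle is the dual-certificate step. Unlike 2nd-order submodularity, which has a transparent pairwise interpretation, the 3rd-order constraint is parametrized by both a triple $R$ and an arbitrary disjoint context $Z$, so the space of 3rd-order constraints grows rapidly with $|V|$ and it is not a priori obvious which combination will collapse cleanly with slope, decoding and monotonicity to isolate $X(\emptyset)$. In practice one iterates: choose a candidate $G$, compute the entropy profile of an optimal encoder, identify the triple $(R,Z)$ at which 3rd-order submodularity fails, and then attempt to build the corresponding certificate. The proof is complete only when both the encoder and the certificate succeed on the same $G$; matching these two sides --- one information-theoretic, the other polyhedral --- is the heart of the argument.
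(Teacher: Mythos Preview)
Your strategy is exactly the paper's, but you are substantially overestimating the difficulty: no symmetrization, no iteration, and no search over contexts $Z$ is needed. The paper's instance is the smallest conceivable one --- three messages $a,b,c$ and three receivers forming a directed $3$-cycle (the receiver wanting $b$ knows only $a$, the one wanting $c$ knows only $b$, the one wanting $a$ knows only $c$). Here $\beta\le 2$ via the encoder $(a\oplus b,\; b\oplus c)$, while a \emph{single} 3rd-order submodularity constraint with $R=\{a,b,c\}$ and $Z=\emptyset$,
\[
X(ab)+X(bc)+X(ac)+X(\emptyset)\ \ge\ X(abc)+X(a)+X(b)+X(c),
\]
combined with the three decoding inequalities $X(a)\ge X(ab)$, $X(b)\ge X(bc)$, $X(c)\ge X(ac)$ and initialization $X(abc)\ge 3$, sums to $X(\emptyset)\ge 3$. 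Thus $b_3\ge 3>2\ge\beta$ and the dual certificate is four lines.

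In fact your own heuristic already contains the proof: your remark that ``three copies of the same random variable'' violate 3rd-order submodularity is precisely what happens here, since the decoding relations collapse $X(\{a\})$, $X(\{a,b\})$, and $X(\{a,b,c\})$ (and cyclic shifts) to the same value, making the three messages behave, conditionally on the public channel, like copies of a single one. So the ``main obstacle'' you anticipate --- locating the right $(R,Z)$ and matching it to the encoder --- dissolves on the first nontrivial example; the proposal is correct but would benefit from simply trying $n=3$ before building the general machinery.
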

\begin{proof}
The construction is an extremely simple instance with only three
messages $\{a,b,c\}$ and three receivers $(\{a\}, b), (\{b\}, c),$
and $(\{c\}, a)$.  It is easy to see that $a \oplus b, b
\oplus c$ is a valid solution, and thus $\beta \le 2$. However,
using the 3rd-order submodularity constraint we have that
\[ X(ab) + X(bc) + X(ac) + X(\emptyset) \ge X(abc) + X(a) + X(b) +
X(c).\]
Combining that with decoding inequalities
\[X(a) \ge X(ab)\,,\quad X(b) \ge X(bc)\,,\quad X(c) \ge X(ac)\,,\]
together with the initialization inequality $X(abc) \ge 3$ now gives us that $b_3 = X(\emptyset) \ge 3$.
\end{proof}
%% end of section omitted from 10-page version
}

%%%%%%%%%%%%%%%%%%%%%%%%%%%%%%%%%%%%%%%%%%%%%%%%%%%%%%%%%%%%%%%%
%%% C5
%%%%%%%%%%%%%%%%%%%%%%%%%%%%%%%%%%%%%%%%%%%%%%%%%%%%%%%%%%%%%%%%
\subsection{The broadcast rate of the 5-cycle}
As stated in Theorem~\svref{thm-hierarchy}, whenever the LP-solution $b_2$ equals $\chibar_f$ we obtain that $\beta$ is precisely this value, hence one may compute the broadcast rate (previously unknown for any graph) via a chain of entropy-inequalities.
We will demonstrate this in Section~\svref{sec:beta-of-graphs} by determining $\beta$ for several families of graphs, in particular for cycles and their complements (Theorem~\svref{thm-cycles}).  These seemingly simple cases were previously studied in~\cites{AHLSW,BBJK} yet their $\beta$ values were unknown before this work.

\begin{figure}[tb]
\begin{center}
\includegraphics[width=6.5in]{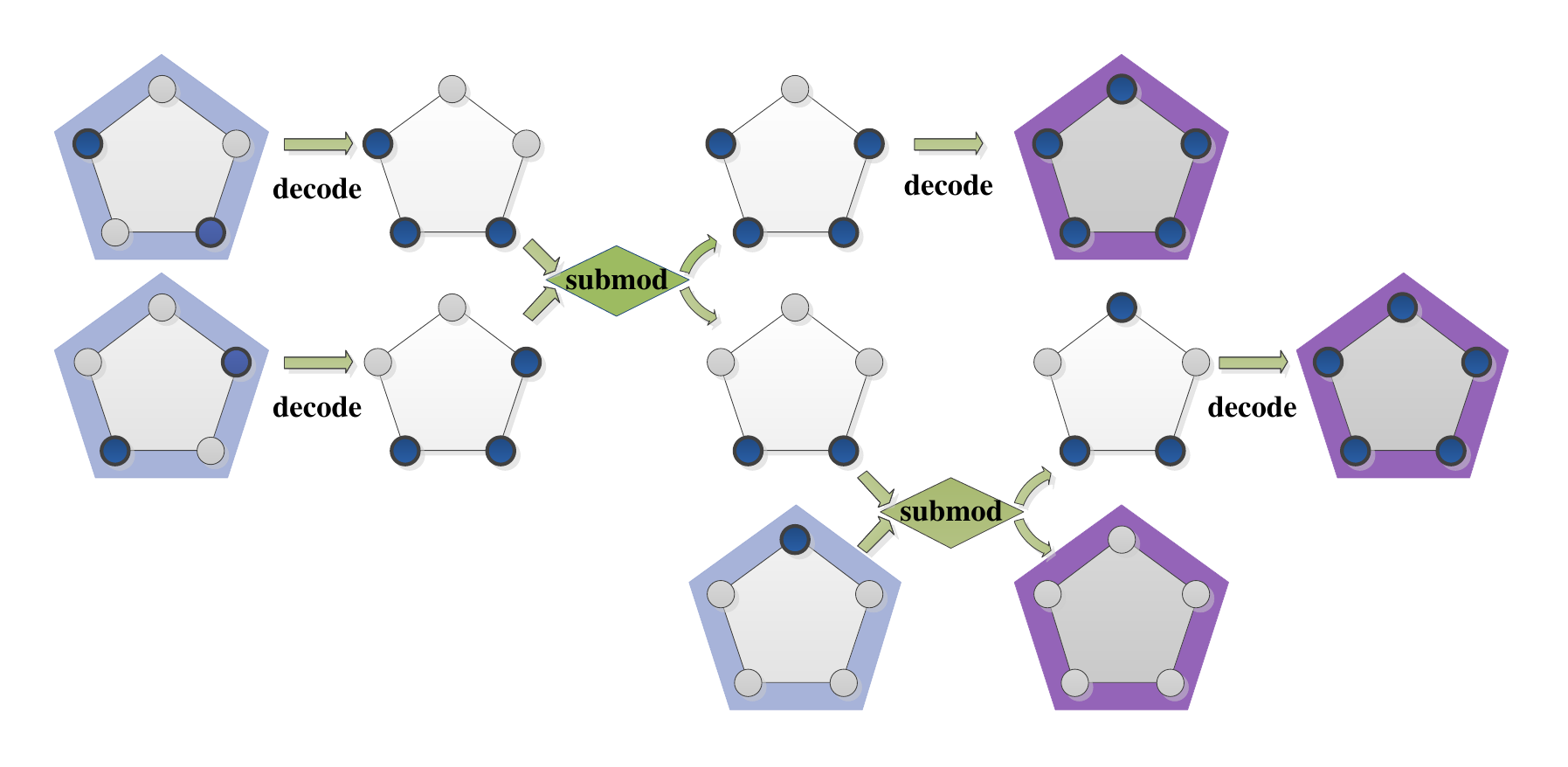}
\caption{A proof-by-picture that $\beta(C_5) = \frac52$. Variables marked by highlighted subsets of vertices, e.g.\ the first submodularity application applies the LP constraint $X(\{3,4,5\})+X(\{2,3,4\}) \geq X(\{2,3,4,5\})+X(\{3,4\})$. Final outcome is a proof that $\beta(C_5)\geq X(\emptyset)$ with $3X(\emptyset)+5 \geq X(\emptyset)+10$.}
% To get the desired bound we use combine the set of downstreamness and submodularity
%  inequalities illustrated above to give a proof that the sum of the
%  LP variables corresponding to the sets on the left (highlighted in
%  blue) is at least the sum of the LP variables corresponding to the
%  sets on the right (highlighted purple). Our slope and initialization
%  constraints (not shown here) complete the proof.}
\svlabel{fig:5cycle}
\end{center}
\vspace{-0.5cm}
\end{figure}
To give a flavor of the proof of Theorem~\svref{thm-cycles}, we provide a proof-by-picture for the broadcast rate of the 5-cycle (Figure~\svref{fig:5cycle}), illustrating the intuition behind choosing the set of inequalities one may combine for an analytic lower bound on $\beta$. The inequalities in Figure~\svref{fig:5cycle} establish that $\beta(C_5)\geq\frac52$, thus matching the upper bound $\beta(C_5) \leq \chibar_f(C_5) = \frac52$.

We note that odd cycles on $n\geq 5$ vertices as well as their complements constitute the first examples for graphs where the independence number $\alpha$ is strictly smaller than $\beta$. Corollary~\svref{cor-c5-union} will further amplify the gap between these parameters.

\subsection{Corollaries for vector/scalar index codes}\svlabel{subsec:cor-of-thm-1}

Prior to this work and its companion paper~\cite{BKL11a} there was no known family of graphs where $\alpha \neq \beta$,
and one could conjecture that for long enough messages the broadcast rate in fact converges to the independence number, the largest set of receivers that are pairwise oblivious. We now have that the 5-cycle provides an example where $\alpha = 2$ while $\beta =\frac52$, however here the difference $\beta-\alpha<1$ could potentially be attributed to integer-rounding, e.g.\ it could be that $\alpha = \lfloor \beta \rfloor$.

Such was also the case for the best known difference between the vector capacity $\beta$ and the scalar capacity $\beta_1$. The best lower bound on $\beta_1-\beta$ in any graph was again attained by the 5-cycle where it was slightly less than $\frac13$, and again in the constrained setting of graph Index Coding we could conjecture that $\beta_1 = \lceil \beta \rceil$.

The following corollary of the above mentioned results refutes these suggestions by amplifying both these gaps to be linear in $n$. The separation between $\alpha$ and $\beta$ was further strengthened in the companion paper~\cite{BKL11a}, where we obtained a gap of a polynomial factor between these parameters.
\begin{corollary}\svlabel{cor-c5-union}
There exists a family of graphs $G$ on $n$ vertices for which $\beta(G) = n/2$ while $\alpha(G) = \frac25 n$ and $\beta_1(G)=(1-\frac15\log_2 5+o(1))n\approx 0.54 n$. Moreover, we have $\beta^*(G) = (1-o(1))\beta_1(G)$.
\end{corollary}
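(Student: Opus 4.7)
My plan is to take $G$ to be the disjoint union of $k$ copies of the $5$-cycle, $G=k\cdot C_5$, so that $n=5k$. The independence number is additive on disjoint unions, hence $\alpha(G)=k\,\alpha(C_5)=2k=\tfrac{2}{5}n$. For the broadcast rate, the upper bound $\beta(G)\le k\,\beta(C_5)=\tfrac{5k}{2}$ follows by concatenating a $\beta$-optimal code on each copy, while for the matching lower bound I rescale the picture proof of $\beta(C_5)\ge\tfrac{5}{2}$ in Figure~\svref{fig:5cycle}: replace every subset $S\subseteq V(C_5)$ appearing there by its disjoint-union lift $\tilde S=\bigsqcup_{i=1}^{k} S^{(i)}\subseteq V(G)$. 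Each slope, submodularity, monotonicity and decoding constraint lifts verbatim --- decoding because a receiver's side information in copy $i$ lifts to a valid side-information set, and submodularity because $\widetilde{A\cap B}=\tilde A\cap\tilde B$ and $\widetilde{A\cup B}=\tilde A\cup\tilde B$ --- while each ``$|T\setminus S|$'' slope increment and the single initialization $X(V)\ge 5k$ both scale by a factor of $k$. The same linear combination of inequalities that yielded $3X(\emptyset)+5\ge X(\emptyset)+10$ for $C_5$ now yields $3X(\emptyset)+5k\ge X(\emptyset)+10k$, hence $b_2(G)\ge\tfrac{5k}{2}$, and Theorem~\svref{thm-hierarchy}(\svref{item-b2-bn}) gives $\beta(G)\ge\tfrac{n}{2}$.

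For the scalar rate I use the standard identification $\beta_1(H)=\lceil\log_2\chi(\mathcal{C}_1(H))\rceil$, where $\mathcal{C}_1(H)$ is the \emph{confusion graph} on $\{0,1\}^{|V(H)|}$ whose edges mark pairs of assignments that some receiver must distinguish. A direct inspection of the edge condition shows that $\mathcal{C}_1(G_1\sqcup G_2)=\mathcal{C}_1(G_1)\vee\mathcal{C}_1(G_2)$ (disjunctive, or ``OR'', graph product), since a receiver in one component only sees side information in its own component. The fractional chromatic number is multiplicative under $\vee$, and Lov\'asz' bound $\chi\le(1+\ln|V|)\chi_f$ forces $\log_2\chi\bigl(\mathcal{C}_1(C_5)^{\vee k}\bigr)=k\log_2\chi_f(\mathcal{C}_1(C_5))+O(\log k)$. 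Using the known value $\beta^*(C_5)=5-\log_2 5$ --- equivalent to $\chi_f(\mathcal{C}_1(C_5))=32/5$ --- established in~\cite{AHLSW}, this gives $\beta_1(G)=k(5-\log_2 5)+o(k)=\bigl(1-\tfrac{1}{5}\log_2 5+o(1)\bigr)n$.

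Finally, $\beta^*$ is the asymptotic per-copy scalar rate $\beta^*(H)=\lim_m\beta_1(m\cdot H)/m$; regrouping the double disjoint union yields $\beta^*(G)=k\,\beta^*(C_5)=k(5-\log_2 5)$, which together with the estimate on $\beta_1(G)$ forces $\beta^*(G)=(1-o(1))\beta_1(G)$. The step that will require the most care is the lifting of the picture proof from one $C_5$ to $k\cdot C_5$; although it is mechanical, one has to verify each constraint type of Definition~\svref{def:bbLP} under the substitution $S\mapsto\tilde S$ before summing the $k$-scaled version of the Figure~\svref{fig:5cycle} combination. Once that is done, the remaining claims follow from well-known multiplicativity of $\chi_f$ under disjunctive products, the Lov\'asz bound on $\chi/\chi_f$, and the previously-established value of $\beta^*(C_5)$.
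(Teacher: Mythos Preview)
Your argument is correct, and the construction $G=k\cdot C_5$ is the same as the paper's. The route, however, differs in one substantive place.

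For the lower bound $\beta(G)\ge n/2$, the paper does \emph{not} lift the $C_5$ picture proof. Instead it proves a general lemma (Lemma~\svref{lem-additive-beta-beta*}) that $\beta$ and $\beta^*$ are additive on disjoint unions, and then invokes $\beta(C_5)=5/2$. The additivity of $\beta$ is obtained indirectly: first $\beta^*$ is shown to be additive via the identity $\beta^*(H)=\log_2\chi_f(\conf(H))$ from~\cite{AHLSW} together with $\conf(G+H)=\conf(G)\orprod\conf(H)$ and multiplicativity of $\chi_f$ under $\orprod$; then the additivity of $\beta$ is deduced from that of $\beta^*$ through a blow-up argument $(\beta_t(G)=\beta_1(G_t)$ and $s\cdot G_t\subseteq G_{st})$. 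Your direct lifting of the $\mathcal{B}_2$ proof is a more elementary and self-contained way to get the specific inequality $b_2(k\cdot C_5)\ge 5k/2$; the verification that slope, submodularity, monotonicity and decoding all survive the substitution $S\mapsto\tilde S$, with the additive constants scaling by $k$, is exactly right. What you lose relative to the paper is generality: Lemma~\svref{lem-additive-beta-beta*} gives $\beta(G+H)=\beta(G)+\beta(H)$ for arbitrary graphs, which your argument does not.

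For $\beta_1$ and $\beta^*$ the two approaches are essentially equivalent. The paper simply observes that $\beta^*(C_5)=5-\log_2 5$ from~\cite{AHLSW} and that the very definition $\beta^*(C_5)=\lim_t\beta_1(t\cdot C_5)/t$ forces $\beta_1(k\cdot C_5)=k\beta^*(C_5)+o(k)$. Your confusion-graph route unpacks this: the OR-product decomposition $\mathcal{C}_1(k\cdot C_5)=\mathcal{C}_1(C_5)^{\vee k}$, multiplicativity of $\chi_f$, and the Lov\'asz bound $\chi\le(1+\ln|V|)\chi_f$ together yield the same $o(k)$ error (in fact $O(\log k)$). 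Either way the key external input is $\chi_f(\conf(C_5))=32/5$, i.e.\ $\beta^*(C_5)=5-\log_2 5$.
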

To prove this result we will use the direct-sum capacity $\beta^*$. Recall that this capacity is defined to be $\beta^*(G)=\lim_{t\to\infty}\frac1{t}\beta_1(t\cdot G)=\inf_t \frac1{t}\beta_1(t\cdot G)$ where $t\cdot G$ denotes the disjoint union of $t$ copies of $G$. This parameter satisfies $\beta\leq\beta^*\leq\beta_1$.
Similarly we let $G+H$ denote the disjoint union of the graphs $G,H$. We need the following simple lemma.
\begin{lemma}\svlabel{lem-additive-beta-beta*}
The parameters $\beta$ and $\beta^*$ are additive with respect to disjoint unions, that is for any two graphs $G,H$ we have $\beta(G+H)=\beta(G)+\beta(H)$ and $\beta^*(G+H)=\beta^*(G)+\beta^*(H)$.
\end{lemma}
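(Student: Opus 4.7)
The plan is to prove the two directions separately. Subadditivity is elementary: given optimal codes $\enc_G, \enc_H$ for $G$ and $H$ on $t$-bit messages, the concatenation $(x_G, x_H) \mapsto (\enc_G(x_G), \enc_H(x_H))$ is a valid code for $G+H$ of length $\beta_t(G) + \beta_t(H)$, so $\beta(G+H) \le \beta(G) + \beta(H)$ after dividing by $t$ and passing to the limit. Applying the same argument to $1$-bit codes on $t(G+H) = tG + tH$ gives $\beta^*(G+H) \le \beta^*(G) + \beta^*(H)$.

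For superadditivity, I would fix an optimal code $\enc\colon \Sigma^{n_G+n_H} \to \Sigma_P$ for $G+H$ with $t$-bit messages and exploit that it must simultaneously serve $G$'s and $H$'s receivers. For each $y \in \Sigma^{n_H}$, the restriction $\enc(\cdot, y)$ is a valid code for $G$ alone (since $G$'s receivers' side-information lies in $V_G$), so its image $I_y \subseteq \Sigma_P$ has size at least $2^{\beta_t(G)}$. The key observation: whenever $y,y'$ are confusable by some $H$-receiver---they agree on its side-information but differ at its requested coordinate---the images $I_y$ and $I_{y'}$ must be disjoint, for otherwise a single codeword would represent both $(x_G, y)$ and $(x_G', y')$ and the $H$-receiver could not decode. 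Viewing $\Sigma^{n_H}$ as the vertex set of a confusion graph $\Gamma_t(H)$ (with $H$-confusability as adjacency), the family $\{I_y\}$ is a set-coloring assigning each vertex at least $s := 2^{\beta_t(G)}$ colors from $\Sigma_P$ with adjacent vertices receiving disjoint color-sets; since the $s$-fold chromatic number satisfies $\chi_s(\Gamma_t(H)) \ge s\cdot\chi_f(\Gamma_t(H))$ (each color class is independent and yields a fractional coloring of weight $1/s$), this forces $|\Sigma_P|\ge 2^{\beta_t(G)}\chi_f(\Gamma_t(H))$, hence $\beta_t(G+H) \ge \beta_t(G) + \log_2\chi_f(\Gamma_t(H))$.

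To pass from $\chi_f(\Gamma_t(H))$ back to $\beta(H)$ I would invoke two standard facts about $\Gamma_t(H)$. First, it is a Cayley graph of $\F_2^{tn_H}$---the translation $x\mapsto x+z$ preserves confusability---hence vertex-transitive, giving $\chi_f(\Gamma_t(H))=|V(\Gamma_t(H))|/\alpha(\Gamma_t(H))$. Second, Lov\'asz's bound $\chi(G)\le(1+\ln|V(G)|)\chi_f(G)$ ensures that $\log_2\chi_f(\Gamma_t(H))$ and $\beta_t(H) = \lceil\log_2\chi(\Gamma_t(H))\rceil$ agree up to an additive $O(\log t)$. Dividing the inequality $\beta_t(G+H) \ge \beta_t(G) + \log_2\chi_f(\Gamma_t(H))$ by $t$ and letting $t\to\infty$ then yields $\beta(G+H) \ge \beta(G) + \beta(H)$. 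The $\beta^*$ case is handled by applying the same multicoloring argument to $1$-bit codes on $tG+tH$; here the iterated disjunctive structure $\Gamma_1(tH)=\Gamma_1(H)^{*t}$ combined with vertex-transitivity makes $\chi_f$ exactly multiplicative, so $\log_2\chi_f(\Gamma_1(tH))/t = \beta^*(H)$ and the limit gives $\beta^*(G+H)\ge\beta^*(G)+\beta^*(H)$.

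The main technical subtlety is the $\chi$-vs-$\chi_f$ transition: the multicoloring argument naturally produces a fractional lower bound, while $\beta_t$ is defined via integer chromatic number, and vertex-transitivity of the (exponentially large) confusion graphs is what absorbs the Lov\'asz gap in the rate limit. The image-disjointness step is the combinatorial heart of the argument and must be verified carefully from the decoding requirement for $H$-receivers.
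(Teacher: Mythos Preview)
Your argument is correct (modulo a harmless imprecision: strictly speaking $|I_y| \geq \chi(\Gamma_t(G))$, which need only satisfy $\chi(\Gamma_t(G)) > 2^{\beta_t(G)-1}$ rather than $\geq 2^{\beta_t(G)}$, but this washes out after dividing by $t$). However, the route is genuinely different from the paper's. The paper establishes $\beta^*$-additivity first, by quoting the characterization $\beta^*(G) = \log_2 \chi_f(\conf(G))$ and the identity $\conf(G+H) = \conf(G) \orprod \conf(H)$ from~\cite{AHLSW} together with multiplicativity of $\chi_f$ under the OR product, and then \emph{derives} $\beta$-additivity from $\beta^*$-additivity via the blow-up trick $\beta_t(G) = \beta_1(G_t)$ and the observation that $s \cdot G_t$ sits inside $G_{st}$. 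Your argument bypasses this detour entirely: the image-disjointness observation directly yields an $s$-fold coloring of $\Gamma_t(H)$ and hence the inequality $\beta_t(G+H) \geq \beta_t(G) + \log_2 \chi_f(\Gamma_t(H))$, after which vertex-transitivity and the Lov\'asz greedy bound close the gap between $\chi$ and $\chi_f$ in the rate limit. In effect you are re-proving the relevant consequences of~\cite{AHLSW} inline (your $\Gamma_1(H)$ is their $\conf(H)$, and your $\Gamma_1(tH) = \Gamma_1(H)^{\orprod t}$ is their OR-product identity), which makes your proof more self-contained and arguably more transparent about where the additivity actually comes from; the paper's version, by contrast, is shorter given the black box and highlights that $\beta$-additivity is formally a consequence of $\beta^*$-additivity.
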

\begin{proof}[Proof of lemma]
The fact that $\beta^*$ is additive w.r.t.\ disjoint unions follows immediately from the results of~\cite{AHLSW}. Indeed, it was shown there that for any graph $G$ on $n$ vertices $\beta^*(G) = \log_2 \chi_f ( \conf(G) )$ where $\conf=\conf(G)$ is an appropriate undirected Cayley graph on the group $\Z_2^n$. Furthermore, it was shown that $\conf(G+H) = \conf(G) \orprod \conf(H)$, where $\orprod$ denotes the OR-graph-product. It is well-known (see, e.g.,~\cites{Feige,LV}) that the fractional chromatic number is multiplicative w.r.t.\ this product, i.e.\ $\chi_f(G \orprod H) = \chi_f(G)\chi_f(H)$ for any two graphs $G,H$. Combining these statements we deduce that
\begin{align*}
2^{\beta^*(G+H)} &= \chi_f( \conf(G+H) ) = \chi_f( \conf(G) \orprod \conf(H) ) = \chi_f(\conf(G))\chi_f(\conf(H)) = 2^{\beta^*(G)+\beta^*(H)}\,.
\end{align*}

We shall now use this fact to show that $\beta$ is additive. The inequality $\beta(G+H)\leq \beta(G)+\beta(H)$ follows from concatenating the codes for $G$ and $H$ and it remains to show a matching upper bound.

As observed by~\cite{LuSt}, the Index Coding problem for an $n$-vertex graph $G$ with messages that are $t$ bits long has an equivalent formulation as a problem on a graph with $t n$ vertices and messages that are $1$-bit long; denote this graph by $G_t$ (formally this is the $t$-blow-up of $G$ with independent sets, i.e.\ the graph on the vertex set $V(G) \times [t]$, where $(u,i)$ and $(v,j)$
are adjacent iff $u v \in E(G)$).
Under this notation $\beta_t(G) = \beta_1(G_t)$. Notice that $(G+H)_t = G_t+H_t$ for any $t$ and furthermore that $s\cdot G_t$ is a spanning subgraph of $G_{s t}$ for any $s$ and $t$, in particular implying that $\beta_1(s \cdot G_t) \geq \beta_1(G_{s t})$.

Fix $\epsilon > 0$ and let $t$ be a large enough integer such that $\beta(G+H) \geq \beta_t(G+H)/t -\epsilon$. Further choose some large $s$ such that $\beta^*(G_t) \geq \beta_1(s\cdot G_t)/s - \epsilon$ and
$ \beta^*(H_t) \geq \beta_1(s\cdot H_t)/s - \epsilon$.
We now get
\begin{align*}
\beta(G+H) + \epsilon &\geq \beta_1(G_t+H_t)/t \geq \beta^*(G_t+H_t)/t
 = \beta^*(G_t)/t+\beta^*(H_t)/t\,,
 \end{align*}
where the last inequality used the additivity of $\beta^*$. Since
\[ \beta^*(G_t)/t \geq \beta_1(s \cdot G_t)/st - \epsilon \geq \beta_1(G_{st})/st - \epsilon \geq \beta(G) -\epsilon\]
and an analogous statement holds for $\beta^*(H_t)/t$, altogether we have $\beta(G+H) \geq \beta(G) + \beta(H) - 3\epsilon$. Taking $\epsilon\to 0$ completes the proof of the lemma.
\end{proof}

\begin{proof}[\emph{\textbf{Proof of Corollary~\svref{cor-c5-union}}}]
Consider the family of graphs on $n=5k$ vertices given by $G = k \cdot C_5$. It was shown in~\cite{AHLSW} that $\beta^*(C_5) = 5-\log_2 5 $, which by definition implies that
%\[ \beta^*(G) = (5-\log_2 5)k ~,\quad \beta_1(G) = \beta^*(G) + o(k)\,.\]
$\beta^*(G) = (5-\log_2 5)k$ and $\beta_1(G) = \beta^*(G) + o(k)$.
At the same time, clearly $\alpha(G) = 2k$ and combining the fact that $\beta(C_5)=\frac52$ with Lemma~\svref{lem-additive-beta-beta*} gives $\beta(G) = 5k/2 = n/2$, as required.
\end{proof}

The above result showed that the difference between the broadcast rate $\beta$ and the Index Coding scalar capacity $\beta_1$ can be linear in the number of messages. We now wish to use the gap between $\beta$ and $\beta_1$ to infer a gap between the vector and scalar Network Coding capacities.

\begin{corollary}\svlabel{cor-nc-gap}
 For any $k\geq 1$ there exists a Network Coding instance on $5k+2$ vertices where the ratio between the vector and scalar-linear capacities is precisely $1.2$ while the ratio between the vector and scalar capacities converges to $1-\frac12\log_25 \approx 1.07$ as $k\to\infty$.
\end{corollary}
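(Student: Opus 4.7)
The plan is to convert the Index Coding instance $G=k\cdot C_5$ supplied by Corollary~\svref{cor-c5-union} into a Network Coding instance $\mathcal{N}(G)$ on $5k+2$ nodes via the standard bottleneck-edge reduction, and then read off the three relevant capacities from the Index Coding parameters already known for $k\cdot C_5$.

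First, I would construct $\mathcal{N}(G)$ as follows. The vertex set is $V(G)\cup\{u^+,u^-\}$, so $|V(\mathcal{N}(G))|=5k+2$. Every vertex $v\in V(G)$ hosts a unit-rate source $x_v$ and is also a terminal that demands $x_v$. We add directed edges $v\to u^+$ and $u^-\to v$ for each $v$, the side-information edges $w\to v$ whenever $vw\in E(G)$, and a single bottleneck edge $u^+\to u^-$ of capacity $c$, with every non-bottleneck edge assigned enough capacity that only $u^+\to u^-$ is ever saturated.

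Second, I would establish the standard dictionary: the minimum value of $c$ admitting a valid vector (resp.\ scalar, scalar-linear over $\mathbb{F}_q$) Network Code on $\mathcal{N}(G)$ equals $\beta(G)$ (resp.\ $\beta_1(G)$, $\minrk_q(G)$). In one direction, any Index Code on $G$ of broadcast length $\ell$ immediately yields a Network Code by transmitting the public broadcast across the bottleneck. In the other direction, the network topology forces the symbol traveling along $u^+\to u^-$ to play the role of a public broadcast: each terminal $v$ must decode $x_v$ from its side-information $\{x_w : w\in N(v)\}$ together with whatever crosses the bottleneck, which is precisely the requirement of an Index Code on $G$.

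Third, I would plug in $G=k\cdot C_5$. Corollary~\svref{cor-c5-union} gives $\beta(G)=5k/2$ and $\beta_1(G)=(5-\log_2 5)k+o(k)$. Additivity of $\minrk_q$ under disjoint unions---immediate from the fact that any feasible fitting matrix is block-diagonal across the two components---together with $\minrk_q(C_5)=3$ over every field $\mathbb{F}_q$ yields $\minrk_q(G)=3k$. Consequently, the ratio of vector to scalar-linear capacities in $\mathcal{N}(G)$ is $\minrk_q(G)/\beta(G)=3k/(5k/2)=6/5$ for every $k\geq 1$, while the ratio of vector to scalar capacities is $\beta_1(G)/\beta(G)=(2-\frac{2}{5}\log_2 5)+o(1)$, approaching the stated limit as $k\to\infty$.

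The main obstacle to spell out is the capacity correspondence at the non-linear scalar level, ensuring the minimum bottleneck capacity under arbitrary scalar Network Codes really coincides with $\beta_1(G)$ rather than merely bounding it. The argument is folklore and essentially topological---the bottleneck edge is the unique global information channel apart from the side-information edges---but it must be articulated carefully to distinguish scalar from vector and linear from non-linear variants in a single unified reduction.
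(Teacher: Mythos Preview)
Your overall strategy---reduce Index Coding to a bottleneck-link Network Coding instance, take $G=k\cdot C_5$, and read off the three capacities from $\beta(G)$, $\beta_1(G)$, and $\minrk_q(G)$---is exactly the paper's approach, and your computations of $\beta(G)=5k/2$, $\beta_1(G)=(5-\log_2 5+o(1))k$, and $\minrk_q(G)=3k$ (via additivity and $\lceil\beta(C_5)\rceil\le\minrk_q(C_5)\le\chibar(C_5)=3$) mirror the paper's proof.

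However, your explicit reduction is broken as written. You declare that each $v\in V(G)$ is simultaneously the \emph{source} of $x_v$ and the \emph{terminal} demanding $x_v$. In any standard Network Coding model a source node has full access to its own source message, so every terminal is trivially satisfied with bottleneck capacity zero, and no correspondence with Index Coding can hold. The reduction the paper invokes (citing~\cite{RSG}) uses $2n+2$ vertices: a source node $s_i$ and a \emph{separate} terminal node $t_i$ for each message, plus the two bottleneck endpoints, with edges $s_i\to u^+$, $u^-\to t_i$, and side-information edges $s_j\to t_i$ for $j\in N(i)$. Only with source and sink separated does the symbol on $u^+\to u^-$ genuinely play the role of the public broadcast, and only then does the dictionary you state in your second step actually hold. (In particular the paper's own construction produces a network on $10k+2$ vertices, so the $5k+2$ in the corollary statement is evidently a slip; do not let it lure you into collapsing sources and terminals.) If you replace your $n+2$-node network with the standard $2n+2$-node one, the remainder of your argument---including the careful remark about verifying the correspondence for scalar non-linear codes---goes through unchanged.
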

\begin{proof}
It is well known (e.g.~\cite{RSG}) that an $n$-vertex graph Index Coding instance $G$ can be translated into a capacitated network $H$ on $2n+2$ vertices via a reduction that preserves linear encoding. It thus suffices to bound the ratio of the corresponding Index Coding capacities.

For $k\geq 1$ consider the graph $G$ consisting of $k$ disjoint $5$-cycles. Corollary~\svref{cor-c5-union} established that $\beta(G) = 5k/2$ whereas $\beta_1(G) = (5-\log_2 5+o(1))k$ where the $o(1)$-term tends to $0$ as $k\to\infty$. At the same time, it was shown in~\cite{BBJK} that the scalar-linear Index Coding capacity over $GF(2)$ coincides with a parameter denoted by $\minrk_2(G)$, and as observed in~\cite{LuSt} this extends to any finite field $\F$ as follows: For a graph $H=(V,E)$ we say that a matrix $B$ indexed by $V$ over $\F$ is a \emph{representation} of $H$ over $\F$ if it has nonzero diagonal entries ($B_{uu}\neq 0$ for all $u\in V$) whereas $B_{u v} = 0$ for any $u\neq v$ such that $u v \notin E$. The smallest possible rank of such a matrix over $\F$ is denoted by $\minrk_\F(H)$. For the $5$-cycle we have $\minrk_\F(C_5) \leq \chibar(C_5) = 3$ by the linear clique-cover encoding and this is tight by as $\minrk_\F(C_5) \geq \lceil \beta(C_5) \rceil= 3$.
Finally, $\minrk_\F$ is clearly additive w.r.t.\ disjoint unions of graphs by its definition and thus $\minrk_\F(G) = 3k$ as required.
%Let $0<\epsilon \leq \frac12$ and take $G=\overline{C_n}$, the complement of an $n$-cycle for $n = 2 \lceil 1/\epsilon \rceil + 1$. By Theorem~\svref{thm-cycles} we have $\beta(G) = \frac{n}{\lfloor n/2\rfloor}>2 $,
%therefore the fact that $\lceil \beta \rceil \leq \minrk_\F$ implies that $\minrk_\F(G) = 3$ for any finite field $\F$, with the upper bound on $\minrk_\F$ due to the linear clique-cover code of length $\chibar = 3$ (similarly we have $\beta_t(G) = 3$ for any $t \geq 1$). The graph $G$ therefore has a ratio of precisely $3\frac{\lfloor n/2\rfloor}n = \frac32(1-n^{-1}) > \frac32 -\epsilon$ between the optimal non-linear and linear Index Coding capacities. Reducing this graph to its Network Coding analogue completes the proof.
\end{proof}

%%%%%%%%%%%%%%%%%%%%%%%%%%%%%%%%%%%%%%%%%%%%%%%%%%%%%%%%%%%%%%%%
%%% Approximation
%%%%%%%%%%%%%%%%%%%%%%%%%%%%%%%%%%%%%%%%%%%%%%%%%%%%%%%%%%%%%%%%
%%%%%%%%%%%%%%%%%%%%%%%%%%%%%%%%%%%%%%%%%%%%
\section{Approximating the Broadcast Rate}\svlabel{sec:approx}
%%%%%%%%%%%%%%%%%%%%%%%%%%%%%%%%%%%%%%%%%%%%

This section is devoted to the proof of Theorem~\svref{thm-hypergraph-approx}, on polynomial-time algorithms for approximating $\beta$ and deciding whether $\beta=2$.  Working in the setting of a general broadcast network is somewhat delicate and we begin by sketching the arguments that will follow.

In the simpler case of undirected graphs, a $o(n)$-approximation to $\beta$ is implied by results of~\cites{Wigderson,BH,AKa} that together give a polynomial time procedure that finds either a small clique-cover or a large independent set (see Remark~\svref{rem-undirected-graphs}). To get an approximation for the general broadcasting problem we will apply a similar technique using analogues of independent sets and clique-covers that give lower and upper bounds respectively on the general broadcasting rate.  The analogue of an independent set is an \emph{expanding sequence} --- a sequence of receivers where the $i^{\mathrm{th}}$ receiver's desired message is unknown to receivers $1,\ldots,i-1$. The clique-cover analogue is a weak fractional hyperclique-cover (see Definition~\svref{def:hyperclique}).
In the remainder of this section, whenever we refer to hypercliques or hyperclique-covers we always mean weak hypercliques and weak hyperclique-covers.

%%  of \emph{weak hypercliques} --- a set of receivers each desiring a message in the intersection of the others' side information.  Weak hypercliques are different from the
%% strong hypercliques defined in Section~\svref{sec:hierarchy}, and we discuss the
%% relationship between the two definitions below after presenting the formal
%% definition of a weak hyperclique.

We will prove that there is a polynomial time algorithm that outputs an expanding sequence of size $k$ or reports a fractional hyperclique-cover of size $O \left( k n^{1-1/k} \right)$; the approximation follows by setting $k$ appropriately.  We will argue that either we can partition the graph and apply induction or else the side-information map is dense enough to deduce existence of a small fractional hyperclique-cover.  The proof of the latter step deviates significantly from the techniques used for graphs, and seems interesting in its own right. We will give a simple procedure to randomly sample hypercliques and use it to produce a valid weight function for the hyperclique-cover by defining the weight of a hyperclique to be proportional to the probability it is sampled by the procedure.

To prove the second part of Theorem~\svref{thm-hypergraph-approx} we will prove that a structure called an \emph{almost alternating cycle} (AAC) constitutes a minimal obstruction to obtaining a broadcast rate of $2$.  The proof makes crucial use of Theorem~\svref{thm-hierarchy}, calculating the parameter $b_2$ for AAC's to  prove that their broadcast rate is strictly greater than $2$.  Furthermore, the proof reduces finding an AAC to finding the transitive closure of a particular relation, which is polynomial time computable.

\subsection{Approximating the broadcast rate in general networks}\svlabel{subsec:beta-approx}
%Having shown that even the class of graphs with $\beta < 2+\epsilon$ is already nontrivial,
We now present a  nontrivial approximation algorithm for $\beta$ for a general network described by a
% weighted
hypergraph (that is, the most general framework where there are $m \geq n$ receivers).
% and the input messages can have varying rates).

\begin{remark}\svlabel{rem-undirected-graphs}
In the setting of undirected graphs a slightly better approximation algorithm for $\beta$ is a consequence of a result of Boppana and Halldorsson~\cite{BH}, following the work of Wigderson~\cite{Wigderson}. In~\cite{BH} the authors showed an algorithm that finds either a ``large'' clique or a ``large'' independent set in a graph (where the size guarantee involves the Ramsey number estimate). A simple adaptation of this result (Proposition~2.1 in the Alon-Kahale~\cite{AKa} work on approximating $\alpha$ via the $\vartheta$-function) gives a polynomial-time algorithm for finding an independent set of size $t_k(m)=\max\big\{ s : \binom{k+s-2}{k-1} \leq m\big\}$ in any graph satisfying $\chibar(G) \geq n/k + m$. In particular, taking $m=n/k$ with $k = \frac12 \log n$ we clearly have $t_k(m) \geq k$ for any sufficiently large $n$ and obtain that either $\chibar(G) < 4n/\log n$ or we can find an independent set of size $\frac12\log n$ in polynomial-time.
\end{remark}

We use the following notation: the $n$ message streams are identified with the elements of $[n] = V$.
% , where message stream $i$ has a rate $r_i$.
The data consisting of the % vector $(r_1,\ldots,r_n)$ and the
pairs $\{(N(j),f(j))\}_{j=1}^m$ is our \emph{%weighted
directed hypergraph} instance.
When referring to the hypergraph structure itself
(rather than the corresponding index coding problem) we will
refer to elements of $V$ as \emph{vertices} and we will refer to
pairs $(N(j),f(j))$ as \emph{directed hyperedges}.
For notational convenience,
% in the remainder of this section we use the short hand $x(j)$ to denote the de% sired message of receiver $j$, $x_{f(j)}$.
% Also,
we denote $S(j) = N(j) \cup \{f(j)\}$.
%% Observe that a set of receivers $\hclq$ is a weak
%% hyperclique if and only if $\bigcup_{j \in \hclq} \{f(j)\}
%% \subseteq \bigcap_{j \in \hclq} S(j)$.

%% Recall that an index code consists of:
%% \begin{compactitem}
%% \item Alphabets $\Sigma_i$ for $0 \leq i \leq n$,
%% \item An encoding function
%% $\enc : \prod_{i=1}^n \Sigma_i \rightarrow \Sigma_0$,
%% \item Decoding functions $\dec_j : \Sigma_0 \times \prod_{i \in N(j)} \Sigma_i
%% \rightarrow \Sigma_{f(j)}.$
%% \end{compactitem}
%% The encoding and decoding functions are required to satisfy
%% \[\dec_j(\enc(\sigma_1,\ldots,\sigma_n), \sigma_{N(j)}) = \sigma_{f(j)}\]
%% for all $j=1,\ldots,m$ and all $(\sigma_1,\ldots,\sigma_n) \in
%% \prod_{i=1}^n \Sigma_i.$  Here the notation
%% $\sigma_{N(j)}$ denotes the tuple obtained from a complete
%% $n$-tuple $(\sigma_1,\ldots,\sigma_n)$ by retaining only
%% the components indexed by elements of $N(j).$

%% An index code \emph{achieves} rate $r \geq 0$ if there exists
%% a constant $b>0$ such that $|\Sigma_i| \geq 2^{b \cdot r_i}$ for
%% $1 \leq i \leq n$ and $|\Sigma_0| \leq 2^{b \cdot r}.$  If so,
%% we say that rate $r$ is \emph{achievable}.  If $G$ is a
%% weighted hypergraph, we define
%% $\beta(G)$ to be the infimum of the set of achievable
%% rates.

An \emph{expanding sequence} of size $k$ is a sequence of receivers
$j_1,\ldots,j_k$ such that
\begin{equation}
  \svlabel{eq-expanding-def}
  f(j_\ell) \not\in \bigcup_{i < \ell} S(i)
\end{equation}
for $1 \leq \ell \leq k.$
%The \emph{weight} of an expanding
%sequence is defined to be $\sum_{\ell=1}^k r_{f(j_\ell)}.$
For a %weighted
hypergraph $G$, let $\alpha(G)$
denote the maximum %weight
size of an expanding sequence.

%% A \emph{weak hyperclique} is a set of receivers $\hclq$ such that
%% \[\bigcup_{j \in \hclq} \{f(j)\} \subseteq \bigcap_{j \in \hclq} S(j).\]
%% If $G$ is a weighted hypergraph,
%% a \emph{fractional weak hyperclique-cover} of $G$
%% assigns a weight $w(\hclq)$ to every weak hyperclique
%% $\hclq$ such that for every receiver $j$,
%% $\sum_{\hclq \ni j} w(\hclq) \geq r_{f(j)}.$
%% The weight of a fractional weak hyperclique-cover is
%% defined to be $\sum_{\hclq} w(\hclq)$
%% where the sum is taken over all weak hypercliques.

\begin{lemma} \svlabel{lem:triv-lb}
Every % weighted
hypergraph $G$ satisfies the bound
$\beta(G) \geq \alpha(G).$
\end{lemma}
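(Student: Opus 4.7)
The plan is to give a direct information-theoretic argument. Fix $t \ge 1$, take $x_1,\ldots,x_n$ to be i.i.d.\ uniform on $\{0,1\}^t$, and let $P = \Encode(x_1,\ldots,x_n)$ be an optimal encoding, so that $H(P) \le \log_2|\Sigma_P| \le \beta_t(G)$. I will show $H(P) \ge k t$ for every expanding sequence $j_1,\ldots,j_k$; dividing by $t$ and sending $t\to\infty$ then yields $\beta(G) \ge k$, and taking $k=\alpha(G)$ gives the lemma.

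The heart of the argument --- and the only step that uses the structure of the expanding sequence --- is a \emph{forward peeling} identity: writing $F = \{f(j_1),\ldots,f(j_k)\}$ and $C = V\setminus F$, I claim $H(x_F \mid P, x_C) = 0$. This is the step I expect to require the most care. The expanding condition $f(j_\ell)\notin S(i)$ for all $i<\ell$ is equivalent, after swapping the roles of $i$ and $\ell$, to the statement that for every $\ell < \ell'$ we have $f(j_{\ell'}) \notin N(j_\ell)$; equivalently, $N(j_\ell)\cap F \subseteq \{f(j_1),\ldots,f(j_{\ell-1})\}$. Hence receivers $j_1,j_2,\ldots$ can be used \emph{in order} to recover $x_{f(j_1)},x_{f(j_2)},\ldots$ from $P$ and $x_C$: receiver $j_1$ decodes its target from $P$ and $N(j_1)\subseteq C$; receiver $j_2$ then decodes from $P$ and $N(j_2)\subseteq C\cup\{f(j_1)\}$; and inductively receiver $j_\ell$ decodes $x_{f(j_\ell)}$ from $P$ and $N(j_\ell)\subseteq C\cup\{f(j_1),\ldots,f(j_{\ell-1})\}$.

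Once the peeling identity is in hand, subadditivity of entropy and independence/uniformity of the messages finish the job:
\[
H(P) + (n-k)t \;\ge\; H(P) + H(x_C) \;\ge\; H(P,x_C) \;=\; H(P,x_V) \;\ge\; H(x_V) \;=\; n t,
\]
where the middle equality uses $H(x_F\mid P,x_C)=0$. Rearranging gives $H(P) \ge k t$, as required.

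An LP-based alternative derivation is worth recording: taking $B_\ell = \bigcup_{i\le\ell} S(i)$, the decoding constraint $X(B_{\ell-1}\cup N(j_\ell)) \ge X(B_\ell)$ followed by the slope constraint replaces $B_\ell$ by $B_{\ell-1}$ at a cost of $|B_\ell\setminus B_{\ell-1}|-1$ per step (the $-1$ reflecting exactly the fact that $f(j_\ell)\notin B_{\ell-1}$ by the expanding condition). Telescoping and comparing against $X(V)\ge n$ yields $X(\emptyset)\ge k$; hence $\alpha(G) \le b_1(G)$, and combining with $b_1\le b_2\le\beta$ from Theorem~\ref{thm-hierarchy} recovers the lemma.
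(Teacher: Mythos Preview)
Your proof is correct. The paper's own argument is a pigeonhole-collision proof rather than an entropy argument: it assumes an index code of rate $r<k$, freezes the messages outside $F=\{f(j_1),\ldots,f(j_k)\}$, and observes that the $|\Sigma|^k$ remaining inputs cannot be injected into $|\Sigma_P|<|\Sigma|^k$ codewords; a collision $\Encode(\vec{x})=\Encode(\vec{y})$ then fools the receiver $j_i$ with the \emph{smallest} $i$ for which $x_{f(j_i)}\neq y_{f(j_i)}$, since $N(j_i)$ contains none of $f(j_i),\ldots,f(j_k)$. Your entropy proof and the paper's combinatorial proof rest on exactly the same structural fact about expanding sequences (your ``forward peeling'' $N(j_\ell)\cap F\subseteq\{f(j_1),\ldots,f(j_{\ell-1})\}$ is the contrapositive of the paper's ``$N(j)$ does not contain $f(j_\ell)$ for any $\ell\ge i$''), so the two arguments are morally dual: you show all of $x_F$ is determined by $(P,x_C)$, while the paper exhibits a single undecodable bit. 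The entropy route has the advantage of dovetailing with the $\mathcal{B}_2$ framework used elsewhere in the paper, and your LP alternative makes this explicit --- it is essentially the hypergraph extension of the ``$b_1\ge\alpha$'' half of Claim~\ref{clm-b1-eq-alpha}. The pigeonhole route is slightly more elementary in that it avoids invoking a random source and works directly with any fixed alphabet $\Sigma$.
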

\begin{proof}
The proof is by contradiction.  Let $j_1,\ldots,j_k$ be an expanding
sequence % of weight $w$,
and suppose that there is an
index code that achieves rate $r < k.$  Let $J = \{j_1,\ldots,j_k\}.$
For $b = \log_2 |\Sigma|$ we have
\[
% \prod_{j \in J} |\Sigma_{f(j)}|
|\Sigma|^k = 2^{b k} > 2^{b r} \geq |\Sigma_P|.
\]
Let us fix an element $x^*_i \in \Sigma$ for every
$i \not\in \{f(j) : j \in J\},$ and define $\Psi$ to be the
set of all $\vec{x} \in \Sigma^n$ that satisfy
$x_i = x^*_i$ for all $i \not\in \{f(j) : j \in J\}.$
The cardinality of $\Psi$ is $|\Sigma|^k$, so the
Pigeonhole Principle implies that
the function $\Encode$, restricted to $\Psi$, is not
one-to-one.  Suppose that $\vec{x}$ and $\vec{y}$ are
two distinct elements of $\Psi$ such that $\Encode(\vec{x}) =
\Encode(\vec{y}).$  Let $i$ be the smallest index such
that $x_{f(j_i)} \neq y_{f(j_i)}.$  Denoting $j_i$ by $j$, we have
$x_{k} = y_{k}$ for all $k \in N(j)$,
because $N(j)$ does not contain
$f(j_{\ell})$ for any $\ell \geq i$, and the components with
indices $j_i, j_{i+1}, \ldots, j_k$ are the only components
in which $\vec{x}$ and $\vec{y}$ differ.  Consequently
receiver $j$ is unable to distinguish between message vectors
$\vec{x},\vec{y}$ even after observing the broadcast message,
which violates the condition that $j$ must be able to decode
message $f(j)$.
\end{proof}

%% Consequently
%% we have
%% \[\dec_j(\enc(\vec{\sigma}), \sigma_{N(j)}) =
%% \dec_j(\enc(\vec{\tau}), \tau_{N(j)}).
%% \]
%% The left side is equal to $\sigma_{f(j)}$ and the right
%% side is equal to $\tau_{f(j)}$, contradicting our hypothesis
%% that $\sigma_{f(j)} \neq \tau_{f(j)}.$
%% \end{proof}

\begin{lemma} \svlabel{lem:triv-ub}
Let $\weakhyp(G)$ denote the minimum weight of a
fractional weak hyperclique-cover of $G$.
Every %weighted
hypergraph $G$ satisfies the bound
$\beta(G) \leq \weakhyp(G).$
\end{lemma}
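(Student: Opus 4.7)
The plan is to mirror the proof of Claim~\ref{clm-beta-leq-chibarf} (which handled the strong/message version $\beta\le\chibar_f$), replacing ``strong hyperclique of messages'' by ``weak hyperclique of receivers'' throughout. The key conceptual observation, already hinted at after Definition~\ref{def:hyperclique}, is that if $\hclq$ is a weak hyperclique and the server broadcasts $\sum_{R_j\in\hclq}x_{f(j)}$ over some abelian group in which the messages live, then each $R_j\in\hclq$ can subtract off the terms $x_{f(i)}$ for $R_i\in\hclq\setminus\{R_j\}$ (since $f(i)\in N(j)$ by the weak-hyperclique condition) and thereby recover $x_{f(j)}$.

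To turn this into a proof of the fractional bound, I would first clear denominators exactly as in Claim~\ref{clm-beta-leq-chibarf}. Given a weak fractional hyperclique-cover $\{(S,x_S)\}$ of total weight $w$, choose $q$ so that every $x_S\in\frac{1}{q}\Z$, set $y_S=qx_S$ and $p=qw$, then replace each pair $(S,y_S)$ by $y_S$ copies of $(S,1)$, yielding a multiset $S_1,\dots,S_p$ of weak hypercliques such that every receiver lies in at least $q$ of them. I would then use the fact that the collection of weak hypercliques is closed under taking subsets (immediate from the definition: any subset of $\hclq$ inherits the property that $f(i)\in N(j)$ for every pair of distinct members) to shrink each $S_i$ if necessary so that every receiver lies in \emph{exactly} $q$ of the $S_i$'s.

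Now I would exhibit an explicit index code of length $p$ for messages in $\Sigma=\{0,1\}^q$. For every receiver $R_j$, let $j_1(R_j)<\cdots<j_q(R_j)$ enumerate the indices $i$ with $R_j\in S_i$. Denote the bits of message $x$ by $b_1(x),\dots,b_q(x)$. Define the $i$-th transmitted bit to be
\[
\sum_{R_k\in S_i}\, b_{t_k(i)}\!\bigl(x_{f(k)}\bigr)\pmod 2,\qquad \text{where $t_k(i)$ is the unique $s$ with $j_s(R_k)=i$.}
\]
To decode bit $s$ of its own message, receiver $R_j$ reads position $j_s(R_j)$ of the broadcast; the contribution of $R_j$ itself to that XOR is $b_s(x_{f(j)})$, and each remaining term is a bit of $x_{f(k)}$ for some $R_k\in S_{j_s(R_j)}\setminus\{R_j\}$, which $R_j$ knows in full because $f(k)\in N(j)$ by the weak-hyperclique property. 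Hence $R_j$ recovers all $q$ bits of $x_{f(j)}$, giving rate $p/q=w$, so $\beta(G)\le w$. Taking the infimum over covers yields $\beta(G)\le\weakhyp(G)$.

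Essentially nothing here is hard once one notices that the strong/message proof transports verbatim to the weak/receiver setting. The only substantive sanity check is that weak hypercliques are closed under subsets (so the denominator-clearing step is legitimate) and that the bit-assignment $t_k(i)$ is well-defined after the exact-coverage reduction; both are immediate. No information-theoretic machinery is needed because the upper bound is witnessed by an explicit linear code.
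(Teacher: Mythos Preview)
Your proof is correct, but the paper takes a genuinely different route. You transport the XOR construction of Claim~\ref{clm-beta-leq-chibarf} verbatim to the weak setting: clear denominators, then shrink the hypercliques (using closure under subsets) so that each receiver is covered \emph{exactly} $q$ times, and finally index the $q$ bits of each message by the $q$ hypercliques containing the corresponding receiver. The paper instead avoids the exact-coverage reduction altogether by working over a large finite field $\F$ with $|\F|>dw$: it assigns to each message $i$ a family of dual vectors $\{\xi_i^{\hclq}\}_{\hclq\in\mathcal{C}_i}\subset(\F^d)^*$ in general position (so that any $d$ of them form a basis), and broadcasts one $\F$-symbol per hyperclique encoding $\sum_{i\in f(\hclq)}\xi_i^{\hclq}(x_i)$. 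A receiver covered by at least $d$ hypercliques then recovers $d$ independent linear forms in its message and hence the message itself, with no need to equalize coverage. Your approach is more elementary (binary alphabet, plain XOR) and reuses the earlier claim wholesale; the paper's MDS-style construction is slicker in that it tolerates over-coverage directly, at the price of a field of size exceeding $dw$. Both yield rate $p/q=w$ and hence $\beta(G)\le\weakhyp(G)$.
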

\begin{proof}
%% We will prove the lemma for the case when $r_1,r_2,\ldots,r_n$
%% are rational numbers; the general case then follows by a simple
%% continuity argument.

The linear program defining $\weakhyp(G)$ has integer
coefficients, so $G$ has a fractional hyperclique
cover of weight $w = \weakhyp(G)$ in which the weight
$w(\hclq)$ of every hyperclique $\hclq$ is a
rational number.  Assume we are given such a fractional
hyperclique-cover, and choose an integer $d$ such that
$w(\hclq)$ is an integer multiple of $1/d$ for
every $\hclq$.  Let $\mathcal{C}$ denote a multiset
of hypercliques containing $d \cdot w(\hclq)$
copies of $\hclq$ for every hyperclique $\hclq$.
Note that the
the cardinality of $\mathcal{C}$ is $d \cdot w$.

For any hyperclique $\hclq$, let $f(\hclq)$
denote the set $\bigcup_{j \in \hclq} \{f(j)\}.$
For each $i \in [n]$, let $\mathcal{C}_i$ denote the
sub-multiset of $\mathcal{C}$ consisting of all
hypercliques $\hclq \in \mathcal{C}$ such that
$i \in f(\hclq)$.  Fix a finite field $\F$ such
that $|\F| > dw.$
Define $\Sigma = \F^d$ and
$\Sigma_P = \F^{d \cdot w}$.
% $\Sigma_i = \F^{d \cdot r_i}.$
Let $\{\xi_P^{\hclq}\}_{\hclq \in \mathcal{C}}$
be a basis for the dual vector space $\Sigma_P^*$ and
let $\{\xi_i^{\hclq}\}_{\hclq \in \mathcal{C}_i}$
be a set of dual vectors in $\Sigma^*$ such that
any $d$ of these vectors constitute a basis
for $\Sigma^*$.  (The existence of such a set of
dual vectors is guaranteed by our choice of $\F$ with
$|\F| > dw \geq d.$)

The encoding function is defined to be the unique
linear function satisfying
\[
\xi_P^{\hclq}(\Encode(x_1,\ldots,x_n)) =
\sum_{i \in f(\hclq)} \xi_i^{\hclq}(x_i)
\qquad \forall \hclq.
\]
For each receiver $j$, if $i=f(j)$, the set of dual vectors $\xi_i^{\hclq}$
with $j \in \hclq$ compose a basis of
$\Sigma^*$, hence to prove that $j$ can decode message $x_{i}$
it suffices to show that $j$ can determine the value of
$\xi_i^{\hclq}(x_{i})$ whenever $j \in \hclq.$
This holds because the public channel contains the value of
$\sum_{\ell \in f(\hclq)} \xi_{\ell}^{\hclq}(x_{\ell})$,
and receiver $j$ knows that value of $\xi_{\ell}^{\hclq}(x_{\ell})$
for every $\ell \neq i$ in $f(\hclq)$ because
$\ell \in N(j).$
\end{proof}

We now turn our attention to bounding the ratio
$\weakhyp(G)/\alpha(G)$ for a %weighted
hypergraph $G$.  Our goal is to show that
this ratio is bounded by a function in $o(n).$
%% We will first accomplish this for unweighted
%% hypergraphs (those in which $r_i = 1$ for all $i$)
%% and then we will prove an upper bound on
%% $\weakhyp(G)/\alpha(G)$ for general
%% weighted hypergraphs, using the unweighted case
%% as an ingredient in the proof.
To begin with, we need an analogue of the lemma
that undirected graphs with small maximum degree
have small fractional chromatic number.

\begin{lemma} \svlabel{lem:low-degree}
If $G$ is % an unweighted
a hypergraph with $n$ vertices,
and $d$ is a natural number such
that for every receiver $j$, $|S(j)| + d \geq n,$
then $\weakhyp(G) \leq 4d+2.$
\end{lemma}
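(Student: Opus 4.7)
The plan is to build a random weak hyperclique by message sampling and use its law (suitably rescaled) as the fractional cover. Write $M(j) = V \setminus S(j)$ for the set of messages neither known nor desired by receiver $j$, so the hypothesis reads $|M(j)| \le d$ for all $j$. I would sample $T \subseteq V$ by including each message independently with probability $p = 1/(2d+2)$ and set
\[
\hclq_T \;=\; \{\, j \,:\, f(j) \in T \text{ and } M(j) \cap T = \emptyset \,\},
\]
applying a fixed canonical tie-breaking rule per target when several receivers share one. A direct verification shows $\hclq_T$ is a weak hyperclique: for distinct $i,j \in \hclq_T$, tie-breaking gives $f(i) \neq f(j)$, and $f(i) \in T \setminus M(j) \subseteq S(j) \setminus \{f(j)\} = N(j)$, as required by Definition~\ref{def:hyperclique}.

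Next I would lower-bound $\Pr[j \in \hclq_T]$ for every canonical receiver $j$. Since $f(j) \in S(j)$ we have $f(j) \notin M(j)$, so the events $\{f(j) \in T\}$ and $\{M(j) \cap T = \emptyset\}$ depend on disjoint coordinates and are therefore independent, yielding
\[
\Pr[j \in \hclq_T] \;=\; p(1-p)^{|M(j)|} \;\ge\; p(1-p)^d \;\ge\; p(1-dp) \;=\; \frac{d+2}{(2d+2)^2} \;\ge\; \frac{1}{4d+2},
\]
where the penultimate step is Bernoulli's inequality and the last is the elementary check $(d+2)(4d+2) \ge (2d+2)^2$. Defining the fractional cover by $x_{\hclq} = (4d+2)\cdot \Pr[\hclq_T = \hclq]$ then produces a weak fractional hyperclique cover of total weight at most $4d+2$.

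The main subtlety I expect is cleanly covering receivers that share a target, since no two such receivers can coexist in a weak hyperclique and the above sampling only covers one canonical representative per target. In the graph setting ($m=n$ with bijective $f$) this issue does not arise, and in the general hypergraph case one reduces to this situation by a preliminary canonicalization of the receiver set --- e.g.\ by deleting any receiver $j$ whose $(f(j), N(j))$ is dominated by another receiver with the same target --- which does not affect $\beta$ or the weak fractional hyperclique-cover number in the regime relevant to the approximation algorithm. After this reduction the probabilistic construction above completes the proof.
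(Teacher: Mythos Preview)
Your argument is correct and uses a genuinely different randomization from the paper's. The paper samples $T$ via a uniform random permutation $\pi$ of $[n+d]$, letting $T$ be the prefix of $\pi$ preceding the first element exceeding $n$, and then sets $\hclq = \{j : f(j) \in T \subseteq S(j)\}$ --- the same set as yours, since your condition $M(j) \cap T = \emptyset$ is equivalent to $T \subseteq S(j)$. The coverage probability is then computed by observing that the event $\{f(j) \in T \subseteq S(j)\}$ depends only on the induced ordering of $U(j) = \{f(j)\} \cup M(j) \cup ([n+d]\setminus[n])$: it holds iff $f(j)$ comes first and a dummy element comes second, giving $\tfrac{1}{|U(j)|}\cdot\tfrac{d}{|U(j)|-1} \geq \tfrac{1}{4d+2}$ from $|U(j)| \leq 2d+1$. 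Your i.i.d.\ Bernoulli sampling is the more elementary device and reaches the same constant through Bernoulli's inequality and a one-line algebraic check; the permutation trick yields the bound exactly without any inequality, at the cost of a slightly heavier setup. Regarding the shared-target subtlety you raise: the paper does not address it at all --- it simply asserts that $\hclq$ is a hyperclique --- so you are being more careful than the original, though your proposed canonicalization does not fully close the gap either (deleting dominated receivers can still leave several incomparable receivers per target). This is harmless for the downstream use in bounding $\beta$, where only the weaker property $f(\hclq) \setminus \{f(j)\} \subseteq N(j)$ is actually needed.
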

\begin{proof}
Let us define a procedure for sampling a random subset
$T \subseteq [n]$ and a random hyperclique $\hclq$
as follows.  Let $\pi$ be a uniformly
random permutation of $[n+d]$, let $i$ be the least
index such that $\pi(i+1) > n,$  and let $T$ be the
set $\{\pi(1),\pi(2),\ldots,\pi(i)\}.$  (If $\pi(1)>n$
then $i=0$ and $T$ is the empty set.)  Now let
$\hclq$ be the set of all $j$ such that $f(j) \in T \subseteq S(j).$
(Note that $\hclq$ is indeed a hyperclique.)

For any hyperclique $\hclq$ let $p(\hclq)$ denote the
probability that $\hclq$ is sampled by this procedure
and let $w(\hclq) = (4d+2) \cdot p(\hclq).$  We claim
that the weights $w(\cdot)$ define a fractional
hyperclique-cover of $G$, or equivalently, that
for every receiver $j$, $\P(f(j) \in T \subseteq S(j)) \geq \frac{1}{4d+2}.$
Let $U(j)$ denote the set $\{f(j)\} \cup \left( [n] \setminus S(j) \right)
\cup \left( [n+d] \setminus [n] \right).$
The event $\mathcal{E} = \{f(j) \in T \subseteq S(j)\}$ occurs if and only
if, in the ordering of $U(j)$ induced by $\pi$, the first
element of $U(j)$ is $f(j)$ and the next element belongs
to $[n+d] \setminus [n].$  Thus,
\[
\P(\mathcal{E}) = \frac{1}{|U(j)|} \cdot
\frac{d}{|U(j)|-1}.
\]
The bound $\P(\mathcal{E}) \geq \frac{1}{4d+2}$ now follows
from the fact that $|U(j)| \leq 2d+1.$
\end{proof}

\begin{lemma} \svlabel{lem:hyper-wigderson}
If $G$ is % an unweighted
a hypergraph and $\alpha(G) \leq k$,
then $\weakhyp(G) \leq 6k \cdot n^{1-1/k}.$  Moreover, there is a
polynomial-time algorithm, whose input is % an unweighted
a hypergraph $G$
and a natural number $k$,
that either outputs an expanding sequence of size $k+1$ or
reports (correctly) that $\weakhyp(G) \leq 6k \cdot n^{1-1/k}.$
\end{lemma}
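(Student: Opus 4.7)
The plan is to follow an inductive Wigderson-style approach, recursing on $k$ with per-level threshold $d=n^{1-1/k}$. The base case $k=1$ is direct: if some two receivers $(j_1,j_2)$ satisfy $f(j_2)\notin S(j_1)$, return this expanding sequence of size~$2$; otherwise every pair of distinct receivers yields $f(j_2)\in S(j_1)$ and $f(j_1)\in S(j_2)$, so the receivers (grouped by distinct desired message) form a single weak hyperclique and $\weakhyp(G)\le 1\le 6$.

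For the inductive step, the algorithm maintains a running vertex set $W\subseteq V$, starting at $W=V$, and repeatedly performs the following. Examine the sub-hypergraph $G[W]$ consisting of vertices $W$, receivers $\{j:f(j)\in W\}$, and restricted side information $S(j)\cap W$. If every such receiver satisfies $|W\setminus S(j)|\le d$, invoke Lemma~\svref{lem:low-degree} to cover $G[W]$ with at most $4d+2$ weak hypercliques and terminate. Otherwise, pick any $j$ with $f(j)\in W$ and $|W\setminus S(j)|>d$, set $R=W\setminus S(j)$, and recurse on $G[R]$ with parameter $k-1$. The crucial point is that $\alpha(G[R])\le k-1$: any expanding sequence in $G[R]$ has its desired messages lying inside $R\subseteq V\setminus S(j)$, so prepending $j$ produces a longer expanding sequence in $G$. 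Consequently the recursive call returns either an expanding sequence of size $k$ in $G[R]$ -- in which case $j$ together with it forms an expanding sequence of size $k+1$ in $G$, which we return -- or a weak hyperclique cover of $G[R]$ of weight at most $6(k-1)|R|^{1-1/(k-1)}$. In the latter case we add this cover to the running collection, replace $W$ by $W\cap S(j)$, and continue.

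The weight bound follows from a concavity calculation. Let $R_1,\ldots,R_t$ be the chunks peeled off during the iterations. They are pairwise disjoint subsets of $V$, so $\sum_i|R_i|\le n$, and each $|R_i|>d$, whence $t\le n/d=n^{1/k}$. Since $x\mapsto x^{1-1/(k-1)}$ is concave, Jensen's inequality yields
\[
\sum_{i=1}^{t}|R_i|^{1-1/(k-1)}\;\le\;t\cdot(n/t)^{1-1/(k-1)}\;=\;t^{1/(k-1)}\,n^{1-1/(k-1)}\;\le\;n^{1-1/k},
\]
the last step using $t\le n^{1/k}$. Hence the total weight contributed by recursive calls is at most $6(k-1)\,n^{1-1/k}$, and adding the final low-degree cover of weight at most $4d+2\le 6n^{1-1/k}$ yields a total of at most $6k\cdot n^{1-1/k}$, as claimed.

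The main obstacle is guaranteeing polynomial running time, given that the recursion branches. An identical concavity argument applied level-by-level bounds the number of subproblems at recursion depth $\ell$ by $n^{\ell/k}$; summing across all $k$ levels gives a total of $O(k\cdot n^{1-1/k})$ recursive calls, each performing only polynomial local work (a scan through the receivers to test the low-degree condition and, if it fails, to locate a suitable $j$). This establishes that the entire procedure runs in polynomial time.
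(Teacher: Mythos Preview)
Your proof is correct and follows essentially the same Wigderson-style peeling argument as the paper: induct on $k$, at each stage either apply Lemma~\svref{lem:low-degree} (when all receivers have $|W\setminus S(j)|\le n^{1-1/k}$) or peel off a piece $R=W\setminus S(j)$ on which you recurse with parameter $k-1$, prepending $j$ to any expanding sequence found there.

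The only notable difference is in how you bound $\sum_i |R_i|^{1-1/(k-1)}$. The paper uses the pointwise inequality $|R_i|^{1-1/(k-1)} \le |R_i|\cdot n^{-1/k}$ (valid because each $|R_i|>n^{(k-1)/k}$) and then sums linearly; you instead use concavity together with the bound $t\le n^{1/k}$ on the number of pieces. Both routes yield the identical bound $n^{1-1/k}$. Your version has the modest advantage that the same concavity bookkeeping cleanly extends level-by-level to give the explicit $O(k\,n^{1-1/k})$ bound on the total number of recursive subproblems, which the paper asserts but does not spell out.
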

\begin{proof}
The proof is by induction on $k$.
In the base case $k=1$,
either $G$ itself is a hyperclique or
there is some pair of receivers $j,j'$ such
that $f(j)$ is not in $S(j')$.  In that case, the sequence
$j_1=j', j_2=j$ is an expanding sequence of size $2$.

For the induction step, for each hyperedge $j$ define the set
$D(j) = \{f(j)\} \cup \left( [n] \setminus S(j) \right)$
and let $j_1$ be a hyperedge such that $|D(j)|$ is maximum.
If $|D(j_1)| \leq n^{1 - 1/k}+1,$ then the bound
$|S(j)| + n^{1-1/k} \geq n$ is satisfied for
every $j$ and Lemma~\svref{lem:low-degree} implies
that $\weakhyp(G) < 4 n^{1-1/k} + 2 \leq 6 n^{1-1/k}.$
Otherwise, partition the vertex set of $G$
into $V_1 = [n] \setminus S(j_1)$ and $V_2 = S(j_1),$
and for $i=1,2$ define $G_i$ to be the
hypergraph with vertex set $V_i$ and edge
set $E_i$ consisting of all pairs $(N(j) \cap V_i, f(j))$
such that $(N(j),f(j))$ is a hyperedge of $G$
with $f(j) \in V_i.$  (We will call such a structure
the \emph{induced sub-hypergraph of $G$ on vertex
set $V_i$}.)
If $G_1$ contains an expanding sequence
$j_2,j_3,\ldots,j_{k+1}$ of size $k$, then
the sequence $j_1,j_2,\ldots,j_{k+1}$ is an
expanding sequence of size $k+1$ in $G$.
(Moreover, if an algorithm efficiently finds the
sequence $j_2,j_3,\ldots,j_{k+1}$ then it is easy
to efficiently construct the sequence $j_1,\ldots,j_{k+1}.$)
Otherwise, by the induction hypothesis,
$G_1$ has a fractional hyperclique-cover
of weight at most $6 (k-1) |V_1|^{1-1/(k-1)} \leq 6 (k-1) |V_1| n^{-1/k}$.
Continuing to process the induced sub-hypergraph
on vertex set $V_2$ in the same way, we arrive
at a partition of $[n]$ into disjoint vertex sets
$W_1, W_2, \ldots, W_{\ell}$ of cardinalities
$n_1,\ldots,n_{\ell}$, respectively, such that
for $1 \leq i < \ell$, the induced sub-hypergraph
on $W_i$ has a fractional clique-cover of weight
at most $6 (k-1) n_i n^{-1/k}$, and for $i=\ell$
the induced sub-hypergraph on $W_i$ satisfies
the hypothesis of Lemma~\svref{lem:low-degree}
with $d=n^{1-1/k}$
and consequently has a fractional hyperclique-cover
of weight at most $6 n^{1-1/k}.$
The lemma follows by summing the weights of these hyperclique-covers.
\end{proof}

\noindent Combining
Lemmas~\svref{lem:triv-lb},~\svref{lem:triv-ub},~\svref{lem:hyper-wigderson},
we obtain the approximation
algorithm asserted by Theorem~\svref{thm-hypergraph-approx}.

%% Finally, we deal with the general case of weighted hypergraphs
%% by partitioning the vertices into classes according to the
%% logarithm of their weight.

 \subsection{Extending the algorithm to networks with variable source rates}\svlabel{sec:weighted-hypergraph}
The aforementioned approximation algorithm for $\beta$ naturally extends to the setting where each source in the broadcast network has
its own individual rate. Namely, the $n$ message streams are identified with the elements of $[n] = V$, where message stream $i$ has a rate $r_i$, and the problem input consists of the vector $(r_1,\ldots,r_n)$ and the
pairs $\{(N(j),f(j))\}_{j=1}^m$. Thus the input is a \emph{weighted directed hypergraph} instance.
An index code for a weighted hypergraph consists of the following:
\begin{compactitem}
\item Alphabets $\Sigma_P$ and $\Sigma_i$ for $1 \leq i \leq n$,
\item An encoding function
$\Encode:  \prod_{i=1}^n \Sigma_i \rightarrow \Sigma_P$,
\item Decoding functions $\Decode_j : \Sigma_P \times \prod_{i \in N(j)} \Sigma_i
\rightarrow \Sigma_{f(j)}.$
\end{compactitem}
The encoding and decoding functions are required to satisfy
\[\Decode_j(\Encode(\sigma_1,\ldots,\sigma_n), \sigma_{N(j)}) = \sigma_{f(j)}\]
for all $j=1,\ldots,m$ and all $(\sigma_1,\ldots,\sigma_n) \in
\prod_{i=1}^n \Sigma_i.$  Here the notation
$\sigma_{N(j)}$ denotes the tuple obtained from a complete
$n$-tuple $(\sigma_1,\ldots,\sigma_n)$ by retaining only
the components indexed by elements of $N(j).$
An index code \emph{achieves} rate $r \geq 0$ if there exists
a constant $b>0$ such that $|\Sigma_i| \geq 2^{b\, r_i}$ for
$1 \leq i \leq n$ and $|\Sigma_P| \leq 2^{b\, r}.$  If so,
we say that rate $r$ is \emph{achievable}.  If $G$ is a
weighted hypergraph, we define
$\beta(G)$ to be the infimum of the set of achievable
rates.

The first step in generalizing the proof given in the previous subsection to the case where the $r_i$'s are non-uniform is to properly extend the notions of hypercliques and expanding sequences. A weak fractional hyperclique cover of a weighted hypergraph will now assign a weight $w(\hclq)$ to every weak hyperclique $\hclq$ such that for every receiver $j$, $\sum_{\hclq \ni j} w(\hclq) \geq r_{f(j)}$ (cf.\ Definition~\svref{def:hyperclique} corresponding to $r_{f(j)}=1$).
As before, the weight of a fractional weak hyperclique-cover is
given by $\sum_{\hclq} w(\hclq)$ and for a weighted hypergraph $G$ we let $\weakhyp(G)$
denote the minimum weight of a fractional weak hyperclique-cover.
An expanding sequence $j_1,\ldots,j_k$ is defined as before (see Eq.~\svref{eq-expanding-def}) except now we associate such a sequence with the weight $\sum_{\ell=1}^k r_{f(j_\ell)}$ and the quantity $\alpha(G)$ will denote
the maximum weight of an expanding sequence (rather than the maximum cardinality).

With these extended defintions, the proofs in the previous subsection carry unmodified to the weighted hypergraph setting with the single exception of Lemma~\svref{lem:hyper-wigderson}, where the assumption that the hypergraph is unweighted was essential to the proof. In what follows we will qualify an application of that lemma via a dyadic partition of the vertices of our weighted hypergraph according to their weights $r_i$.

 Assume without loss of generality that $0 \leq r_i \leq 1$ for
 every vertex $i \in [n]$, and partition the vertex of set
 $G$ into subsets $V_1, V_2, \ldots$ such that $V_s$
 contains all vertices $i$ such that $2^{-s} < r_i \leq 2^{1-s}.$
 Let $G_s$ denote the induced hypergraph on vertex
 set $V_s$.  For each of the nonempty hypergraphs $G_s$,
 run the algorithm in Lemma~\svref{lem:hyper-wigderson} for
 $k=1,2,\ldots$ until the smallest value of $k(s)$ for which
 an expanding sequence of size $k(s)+1$ is not found.
 If $G_s^{\circ}$ denotes the unweighted version of
 $G_s$, then we know that
 \begin{align*}
 \alpha(G_s) & \geq 2^{-s} \alpha(G_s^{\circ}) \geq 2^{-s} k(s) \\
 \weakhyp(G_s) & \leq 2^{1-s} \weakhyp(G_s^{\circ}) \leq 2^{-s} \cdot
 12 k(s) n^{1-1/k(s)}.
 \end{align*}
 In addition, for each $i \in V_s$ the set of hyperedges
 containing $i$ constitutes a hyperclique, which implies
 the trivial bound
 \[
 \weakhyp(G_s) \leq \sum_{i \in V_s} r_i \leq 2^{1-s} |V_s|.
 \]
 Combining these two upper bounds for $\weakhyp(G_s)$, we obtain
 an upper bound for $\weakhyp(G)$:
 \begin{equation} \svlabel{eq:tau}
 \weakhyp(G) \leq \sum_{s=1}^{\infty} \weakhyp(G_s) \leq
 \sum_{s=1}^{\infty} 2^{-s} \cdot \min \left\{
 12 k(s) n^{1-1/k(s)}, \, 2 |V_s| \right\}.
 \end{equation}
 We define $\tau(G)$ to be the right side of \sveqref{eq:tau}.
 We have described a polynomial-time algorithm to compute $\tau(G)$
 and have justified the relation $\weakhyp(G) \leq \tau(G)$,
 so it remains to show that $\tau(G) / \alpha(G) \leq
 c n \left( \frac{\log \log n}{\log n} \right)$ for some constant $c$.

 The bound $\tau(G) \leq n$ follows immediately
 from the definition of $\tau$, so if $\alpha(G) \geq
 \frac{\log n}{\log \log n}$ there is nothing to prove.
 Assume henceforth that $\alpha(G) < \frac{\log n}{\log \log n}$,
 and define $w$ to be the smallest integer such that
 $2^w \cdot \alpha(G) > \frac{\log n}{2 \log \log n}.$
 We have
 \begin{align}
 \nonumber
 \tau(G) &\leq
 \sum_{s=1}^{w} 2^{-s} \cdot 12 k(s) n^{1-1/k(s)} \;+\;
 \sum_{s=w+1}^{\infty} 2^{1-s} \cdot |V_s| \\
 \nonumber
 & \leq
 12 n \sum_{s=1}^{w} 2^{-s} k(s) n^{-1/k(s)} \;+\;
 2^{-w} \cdot n \\
 \svlabel{eq:tau.2}
 & <
 12 n \alpha(G) \sum_{s=1}^{w} n^{-1/k(s)} \;+\;
 2 n \alpha(G) \left( \frac{\log \log n}{\log n} \right),
 \end{align}
 with the last line derived using the relations
 $2^{-s} k(s) \leq \alpha(G_s) \leq \alpha(G)$
 and $2^{-w} < \alpha(G) \big( \frac{2 \log \log n}{\log n} \big)$.
 Applying once more the fact that $2^{-s} k(s) \leq \alpha(G)$,
 we find that $n^{-1/k(s)} \leq n^{-1/\left(2^s \cdot \alpha(G)\right)}.$
 Substituting this bound into \sveqref{eq:tau.2} and letting $\alpha$
 denote $\alpha(G)$, we have
 \[
 \frac{\tau(G)}{\alpha(G)} \leq
 2 n \left( \frac{\log \log n}{\log n} \right) \;+\;
 12 n \left( n^{-1/2\alpha} + n^{-1/4\alpha} + \cdots + n^{-1/2^w \alpha}
 \right).
 \]
 In the sum appearing on the right side, each term is the square of
 the one following it.  It now easily follows
 %(\footnote{Namely, $2^w \alpha \leq \frac{\log n}{\log \log n} < \log n$, hence $2^{2^w \alpha} < n$  and $n^{-1/2^w \alpha} < 1/2$.}
 that the final term in
 the sum is less than $1/2$, so the entire sum is bounded above
 by twice its final term.  Thus
 \begin{equation} \svlabel{eq:tau.3}
 \frac{\tau(G)}{\alpha(G)} \leq
 2 n \left( \frac{\log \log n}{\log n} \right) \;+\;
 24 n \cdot n^{-1/2^w \alpha}.
 \end{equation}
 Our choice of $w$ ensures that $2^w \alpha \leq \frac{\log n}{\log \log n}$
 hence
 $n^{-2^{-w} a} \leq n^{-\log \log n / \log n} = (\log n)^{-1}$.
 By substituting this bound into \sveqref{eq:tau.3} we obtain
 \[
 \frac{\tau(G)}{\alpha(G)} \leq n
 \left( \frac{2 \log \log n}{\log n} + \frac{24}{\log n} \right)\,,
 \]
 as desired.

%%%%%%%%%%%%%%%%%%%%%%%%%%%%%%%%%%%%%%%%%%%%%%%%%%%%%%%%%%%%%%%%
%%% Equals 2
%%%%%%%%%%%%%%%%%%%%%%%%%%%%%%%%%%%%%%%%%%%%%%%%%%%%%%%%%%%%%%%%
\subsection{Proof of Theorem~\svref{thm-hypergraph-approx}, determining whether the broadcast rate equals 2}\svlabel{subsec:beta-equals-2}

Let $G$ be an undirected graph with independence number $\alpha=2$. Clearly, if $\overline{G}$ is bipartite than $\chibar(G) = 2$ and so $\beta(G) = 2$ as well. Conversely, if $\overline{G}$ is not bipartite then it contains an odd cycle, the smallest of which is induced and has $k\geq 5$ vertices since the maximum clique in $\overline{G}$ is $\alpha(G)=2$. In particular,
\svapdx{Theorem~\svref{thm-beta-exact}}{Theorem~\svref{thm-cycles}} implies that $\beta(G) \geq \beta(\overline{C_k}) = \frac{k}{\lfloor k/2\rfloor} > 2$. We thus conclude the following:

\begin{corollary}\svlabel{cor-beta-2-undir}
Let $G$ be an undirected graph on $n$ vertices whose complement $\overline{G}$ is nonempty. Then $\beta(G) = 2$ if and only if $\overline{G}$ is bipartite.
\end{corollary}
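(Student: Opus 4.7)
The plan is to reduce to the case $\alpha(G)=2$ and then invoke the exact value of $\beta(\overline{C_k})$ computed in Theorem~\svref{thm-cycles}. I would first note that since $\overline{G}$ is nonempty, $G$ has at least one non-edge and hence $\beta(G) \geq \alpha(G) \geq 2$ by~\sveqref{eq-trivial-ineqs}. If $\alpha(G) \geq 3$ then both sides of the desired equivalence fail automatically: $\beta(G) \geq 3 > 2$, and $\overline{G}$ contains a clique of size at least $3$ and so is not bipartite. I may therefore assume $\alpha(G)=2$.

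For the easy direction, if $\overline{G}$ is (nonempty and) bipartite then $\chibar(G)=\chi(\overline{G})=2$, and the sandwich $\beta(G)\leq \chibar(G)=2$ together with $\beta(G)\geq 2$ yields $\beta(G)=2$.

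For the converse, suppose $\overline{G}$ is not bipartite. Then $\overline{G}$ contains an odd cycle; let $C$ be a shortest such cycle, of length $k$. A chord of $C$ would create a strictly shorter odd cycle in $\overline{G}$, so $C$ must be induced. Since $\omega(\overline{G})=\alpha(G)=2$, the graph $\overline{G}$ is triangle-free and consequently $k\geq 5$. Thus $G$ contains $\overline{C_k}$ as an induced subgraph for some odd $k\geq 5$. The key ingredient at this point is Theorem~\svref{thm-cycles}, which gives $\beta(\overline{C_k})=k/\lfloor k/2\rfloor = 2k/(k-1)>2$ for every odd $k\geq 5$. Combined with the straightforward monotonicity of $\beta$ under induced subgraphs---any index code for $G$ restricts to one for an induced subgraph by freezing the remaining messages to a fixed constant, so $\beta(G)\geq\beta(\overline{C_k})$---this forces $\beta(G)>2$.

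The proof is short because the heavy lifting, namely the matching lower bound $\beta(\overline{C_k})\geq k/\lfloor k/2\rfloor$, was already established via the LP hierarchy of Theorem~\svref{thm-hierarchy}; there is no real obstacle here beyond assembling these pieces, and the only subtlety worth flagging is verifying that one cannot have $\alpha(G)=2$ while every induced odd anti-hole has length $3$, which is ruled out precisely because $\alpha(G)=2$ forces $\overline{G}$ to be triangle-free. Finally, since bipartiteness of $\overline{G}$ can be checked in linear time, the resulting decision procedure for $\beta(G)=2$ runs in polynomial time, as required by Theorem~\svref{thm-hypergraph-approx}.
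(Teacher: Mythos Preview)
Your proof is correct and follows essentially the same approach as the paper: reduce to $\alpha(G)=2$, use $\chibar(G)=2$ for the bipartite direction, and for the non-bipartite direction find an induced odd anti-hole $\overline{C_k}$ with $k\ge 5$ and invoke Theorem~\svref{thm-cycles} together with monotonicity of $\beta$ under induced subgraphs. Your version is in fact slightly more complete than the paper's, which tacitly assumes $\alpha(G)=2$ from the outset, whereas you explicitly dispose of the $\alpha(G)\ge 3$ case.
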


%In other words, the best possible broadcast rate in a nonempty undirected graph is achieved only by instances where Index Coding is trivial.
A polynomial time algorithm for determining whether $\beta=2$ in undirected graphs follows as an immediate consequence of Corollary~\svref{cor-beta-2-undir}. However, for broadcasting with side information in general --- or even for the special case of directed graphs (the main setting of~\cites{BK,BBJK}) --- it is unclear whether such an algorithm exists.  In this section we provide such an algorithm, accompanied by a characterization theorem that generalizes the above characterization for undirected graphs.  To state our characterization we need the following definitions.  As in Section~\svref{subsec:beta-approx} we use $S(j)$ to denote the set $N(j) \cup \{f(j)\}$.  Additionally, we introduce the notation $T(j)$ to denote the complement of $S(j)$ in the set of messages.  When referring to the message desired by receiver $R_j$, we abbreviate $x_{f(j)}$ to $x(j)$.
\svapdx{}{Henceforth, when referring to a hypergraph $G=(V,E)$, we assume that for each edge $j \in E$, the hypergraph structure specifies the vertex $f(j)$ and both of the sets $S(j),T(j)$.}
\begin{definition} \svlabel{def:h-compat}
If $G=(V,E)$ is a directed hypergraph and $S$ is a set,
a function $F : V \to S$ is said to be \emph{$G$-compatible}
if for every edge $j \in E$, there are two \emph{distinct}
elements $t,u \in S$ such that $F$ maps every element of $T(j)$ to $t$,
and it maps $f(j)$ to $u$.
\end{definition}

\begin{definition} \svlabel{def:aac}
If $G=(V,E)$ is a directed hypergraph, an \emph{almost alternating
(2n+1)-cycle} in $G$ is
a sequence of vertices
$v_{-n}, v_{-n+1}, \ldots, v_n$,
and a sequence of edges
$j_0, \ldots, j_n$,
such that for $i=0,\ldots,n$, the vertex $f(j_i)$ is equal to $v_{i-n}$
and the set $T(j_i)$ contains $v_{i}$, as well as $v_{i+1}$ if $i < n$.
\end{definition}

\begin{theorem}\svlabel{thm-directed-beta-2}
For a directed hypergraph $G$ the following are equivalent:
\begin{compactenum}[(i)]
\item  $\beta(G)=2$ \svlabel{aac:beta}
\item  There exists a set $S$ and a $G$-compatible function $F : V \to S$.
\svlabel{aac:g-compat}
\item  $G$ contains no almost alternating cycles.
\svlabel{aac:aac}
\end{compactenum}
Furthermore there is a polynomial-time algorithm to decide
if these equivalent conditions hold.
\end{theorem}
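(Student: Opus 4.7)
The plan is to prove the cycle of implications (\svref{aac:aac})$\Rightarrow$(\svref{aac:g-compat})$\Rightarrow$(\svref{aac:beta})$\Rightarrow$(\svref{aac:aac}), after which the polynomial-time decision procedure will be immediate.  The central object is the equivalence relation $\sim^{*}$ on $V$ obtained as the transitive closure of $u \sim v \iff \{u,v\} \subseteq T(j)$ for some edge $j$, which is computable in polynomial time by standard means.

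For (\svref{aac:aac})$\Rightarrow$(\svref{aac:g-compat}), I will show that a $G$-compatible function exists precisely when $f(j) \not\sim^{*} v$ for every edge $j$ and every $v \in T(j)$.  The forward direction is immediate from Definition~\svref{def:h-compat} (any $F$ must be constant on each $T(j)$ and hence on each $\sim^{*}$-class, while taking a different value at $f(j)$), and the reverse direction is witnessed by the quotient map $F : V \to V/{\sim^{*}}$.  This reformulated condition is equivalent to the absence of an AAC: an AAC with vertices $v_{-n},\ldots,v_n$ and edges $j_0,\ldots,j_n$ produces the chain $v_0 \sim v_1 \sim \cdots \sim v_n$ together with the closing pair $f(j_n) = v_0$, $v_n \in T(j_n)$; conversely, any minimal violating chain $f(j) = w_0, w_1, \ldots, w_k = v \in T(j)$ (with consecutive pairs lying in a common $T(j_i')$) repackages directly into an AAC by setting $v_i := w_i$ and $j_i := j_i'$ for $i < k$, $j_k := j$, and reading off $v_{i-k} := f(j_i)$.

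For (\svref{aac:g-compat})$\Rightarrow$(\svref{aac:beta}), given a $G$-compatible $F : V \to S$ I will construct an explicit rate-$2$ vector-linear code over $\F_q$ for any $q$ large enough to admit $|S|$ pairwise linearly independent nonzero vectors in $\F_q^{2}$ (any $q \geq |S| - 1$ suffices, since $\F_q^{2}$ has $q+1$ one-dimensional subspaces).  Writing these as $\{w_s\}_{s \in S}$, set $\alpha_v := w_{F(v)} \in \F_q^{2}$ and broadcast $y := \sum_{v} \alpha_v x_v$.  Each receiver $j$ computes the residual $y - \sum_{v \in N(j)} \alpha_v x_v = \alpha_{f(j)} x_{f(j)} + \sum_{v \in T(j)} \alpha_v x_v$; the second sum is a scalar multiple of the common value $w_{F(T(j))}$ (well-defined by compatibility), while $\alpha_{f(j)} = w_{F(f(j))}$ is linearly independent of it, so projecting the residual onto the line in $\F_q^{2}$ complementary to $w_{F(T(j))}$ recovers $x_{f(j)}$ up to a known nonzero scalar.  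This delivers $\beta(G) \leq 2$, while the matching lower bound $\beta(G) \geq 2$ in any non-trivial instance follows from Lemma~\svref{lem:triv-lb} applied to an expanding sequence of length two.

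The final implication (\svref{aac:beta})$\Rightarrow$(\svref{aac:aac}) is the main technical step, and I will argue it by contrapositive via Theorem~\svref{thm-hierarchy}.  If $G$ contains an AAC on vertices $v_{-n},\ldots,v_n$ and edges $j_0,\ldots,j_n$, then the induced sub-hypergraph has $b_2 > 2$, so by monotonicity of $\beta$ under sub-instances, $\beta(G) \geq b_2 > 2$.  The LP witness will mirror the 5-cycle proof-by-picture in Figure~\svref{fig:5cycle}: combine the $n+1$ decoding inequalities $X(N(j_i)) \geq X(S(j_i))$ with $n$ carefully chosen pairwise submodularity applications along the chain $v_0, v_1, \ldots, v_n$, and close the estimate using slope and initialization so as to produce an inequality of the form $(n+1)\,X(\emptyset) > n\,X(\emptyset) + 2(n+1)$, whence $X(\emptyset) > 2$.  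The chief obstacle will be the careful bookkeeping of which subset pairs enter the submodularity inequalities so that the cyclic closure $f(j_n) = v_0$ converts the chain into a strictly closed loop of inequalities forcing a gap above $2$; this is precisely what distinguishes an AAC from a non-cyclic witness.  Once this cycle of equivalences is in hand, the polynomial-time algorithm is immediate: compute $\sim^{*}$ and check whether $f(j) \sim^{*} v$ for some edge $j$ and some $v \in T(j)$.
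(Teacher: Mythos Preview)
Your plan is correct and mirrors the paper's proof almost exactly: the same transitive-closure relation on $V$ for (\svref{aac:aac})$\Rightarrow$(\svref{aac:g-compat}), an equivalent two-symbol linear code for (\svref{aac:g-compat})$\Rightarrow$(\svref{aac:beta}) (the paper broadcasts $y=\sum_v x_v$ and $z=\sum_v \phi(v)x_v$, which is your construction with $w_s=(1,\phi(s))\in\F^2$), and the same $b_2$-based LP contrapositive for (\svref{aac:beta})$\Rightarrow$(\svref{aac:aac}).  One slip to fix in your sketch: the displayed target $(n{+}1)X(\emptyset) > nX(\emptyset) + 2(n{+}1)$ would yield $X(\emptyset)>2(n{+}1)$, not $X(\emptyset)>2$; the paper's actual chain (roughly $n$ slope/decode steps, $n{-}1$ submodularity steps, and one closing step) produces $n\,X(\emptyset)+2n^2-n-1 \geq 2n^2+n$, i.e.\ $X(\emptyset)\geq 2+\tfrac1n$, which is the bound you should aim for.
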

\begin{proof}
%We now complete the equivalence of the 4 conditions specified in the theorem.
%
{\bf(\svref{aac:beta})}$\Rightarrow${\bf(\svref{aac:aac}):}
The contrapositive statement says that if $G$ contains an almost alternating
cycle then $\beta(G) > 2.$  Let $v_{-n},\ldots,v_n$ be the vertices
of an almost alternating $(2n+1)$-cycle with edges $j_0,\ldots,j_n$.
%% Note that
%% each of the following two operations never increases $\beta$:
%% taking an induced subgraph, and adding an edge to a graph.
%% In particular, $\overline{G}$ has an induced subgraph that
%% is contained in the complement of an almost-alternating cycle,
%% so it suffices to prove that the complement of an almost-alternating
%% cycle has $\beta>2$.  Accordingly, we will assume henceforth
%% that $G = \aac_n$.
\svapdx{
% begin short version
We manipulate entropy inequalities to
prove that $b_2(G) > 2$ whenever $G$ contains an
almost alternating cycle; the details are in
Appendix~\ref{subsec:beta-equals-2}.  Theorem~\svref{thm-hierarchy}
then implies that $\beta(G) > 2$.
% end short version
}{
% begin long version
To prove $\beta(G) > 2$ we manipulate
% Our proof that $\beta(\overline{\aac_n})>2$ is a manipulation of
entropy inequalities involving the random variables
$\{x_i : -n \leq i \leq n\}$ and $y$, where $x_i$ denotes
the message associated to vertex $v_i$ normalized to have entropy $1$, and $y$ denotes
the public channel.  For $S \subseteq \{y,x_{-n},\ldots,x_n\}$,
let $H(S)$ denote the entropy of the joint distribution of
the random variables in $S$, and let $\hb{S}$ denote
$H(\overline{S})$.  Let $S_{i:j}$ denote the set $\{x_i,x_{i+1},\ldots,x_j\}.$

For $0 \leq i \leq n-1$, we have
\begin{equation} \svlabel{eq:aac2}
H(y) + (2n-2) \geq
\hb{\{x_{i-n},x_{i},x_{i+1}\}} =
\hb{\{x_{i},x_{i+1}\}} =
\hb{S_{i:i+1}}\,,
\end{equation}
where the second equation holds because receiver
$j_i$ can decode message $x_{i-n} = x(j_i)$
given the value $y$ and the values $x_k$ for $k \in N(j_i)$.
Using submodularity we have that for $0 < j < n,$
\begin{equation} \svlabel{eq:aac3}
\hb{S_{0:j}} + \hb{S_{j:j+1}} \geq
\hb{S_{0:j+1}} + \hb{x_j} =
\hb{S_{0:j+1}} + \hb{\emptyset} =
\hb{S_{0:j+1}} + 2n+1 \,.
\end{equation}
Summing up \sveqref{eq:aac3} for $j=1,\ldots,n-1$ and
canceling terms that appear on both sides, we obtain
\begin{equation} \svlabel{eq:aac4}
\sum_{j=0}^{n-1} \hb{S_{j:j+1}} \geq \hb{S_{0:n}} + (n-1)(2n+1)\,.
\end{equation}
Summing up \sveqref{eq:aac2} for $i=0,\ldots,n-1$ and
combining with \sveqref{eq:aac4} we obtain
\begin{equation} \svlabel{eq:aac5}
n H(y) + n(2n-2) \geq \hb{S_{0:n}} + (n-1)(2n+1)\,.
\end{equation}
Now, observe that
\begin{equation} \svlabel{eq:aac6}
\hb{S_{0:n}} + n-1 \geq
\hb{x_0,x_n} \geq \hb{x_n} \geq \hb{\emptyset} = 2n+1 \,.
\end{equation}
Summing \sveqref{eq:aac5} and \sveqref{eq:aac6}, we obtain
\begin{align*}
n H(y) + 2n^2 - n - 1  &\geq 2n^2 + n %\\
%n H(y) &\geq (2n+1) \log(N) \\
%H(y) &\geq \left(2 + \frac{1}{n} \right) \log(N),
\end{align*}
and rearranging we get $H(y)\geq 2+n^{-1}$,
from which it follows that $\beta(G) \geq 2 + n^{-1}$.
% end long version
}

{\bf (\svref{aac:aac})$\Rightarrow$(\svref{aac:g-compat}):}
Define a binary relation $\sharp$ on the vertex set $V$
by specifying that $v \sharp w$ if there exists an
edge $j$ such that $\{v,w\} \subseteq T(j)$. % contains both $v$ and $w$.
Let $\sim$ denote the transitive closure of $\sharp$.
Define
% $S$ to be the set of equivalence classes of $\sim$ and
$F$ to be the quotient map from $V$ to
the set $S$ of equivalence classes of $\sim$.
We need to check that $F$ is $G$-compatible.
% The first condition ($F$ maps all the elements of $T(j)$ to
% the same element of $S$) follows trivially from the definition
% of relation $\sharp$.
For every edge $j \in E$, the definition of relation $\sharp$
trivially implies that $F$ maps all of $T(j)$ to a single element of $S$.
The fact that it maps $f(j)$ to a \emph{different} element of $S$
%% The second condition ($F$ maps $f(j)$
%% and $T(j)$ to different elements of $S$)
is a consequence of
the non-existence of almost  alternating cycles.
A relation $f(j) \sim v$ for some $v \in T(j)$ would imply
the existence of a sequence $v_0, \ldots, v_n$ such that
$v_0=f(j), v_n=v,$ and $v_i \sharp v_{i+1}$ for
$i=0,...,n-1$.  If we choose $j_i$ for $0 \leq i < n$ to
be an edge such that $T(j_i)$ contains $v_i,v_{i+1}$
(such an edge exists because $v_i \sharp v_{i+1}$)
and we set $j_n = j$ and $v_{i-n} = f(j_i)$ for $i=0,\ldots,n-1$,
then the vertex sequence $v_{-n},\ldots,v_n$ and edge
sequence $j_0,\ldots,j_n$ constitute an almost alternating
cycle in $G$.

Computing the relation $\sim$ and the function $F$,
as well as testing that $F$ is $G$-compatible, can easily
be done in polynomial time, implying the final sentence
of the theorem statement.

{\bf (\svref{aac:g-compat})$\Rightarrow$(\svref{aac:beta}):}
If $F : V \to S$ is $G$-compatible,
we may compose $F$ with a one-to-one mapping from $S$
into a finite field $\F$, to obtain a function
$\phi :  V \to \F$ that is $G$-compatible.
The public channel broadcasts two elements of $\F$, namely:
\svapdx{
% begin short version
\[
y = \sum_v x_v, \qquad z = \sum_v \phi(v) x_v.
\]
% end short version
}{
% begin long version
\begin{align*}
  y & = \sum_v  x_v \\
  z & = \sum_v  \phi(v) x_v
\end{align*}
% end long version
}
Receiver $R_j$ now decodes message $x(j)$ as follows.
Let $c$ denote the unique element of $\F$ such that $\phi(v)=c$ for every
$v$ in $T(j)$.
Using the pair $(y,z)$ from the public channel, $R_j$
can form the linear combination
\svapdx{$}{$$}
cy - z = \sum_v [c - \phi(v)] x_v.
\svapdx{$}{$$}
We know that every $v \in T(j)$ appears with coefficient zero in this sum.
For every $v \in N(j)$, receiver $R_j$ knows the value of $x_v$ and can
consequently subtract off the term $[c-\phi(v)] x_v$ from the sum.
The only remaining term is $[c-\phi(x(j))] x(j)$.
The coefficient $c-\phi(x(j))$ is nonzero, because $\phi$ is $G$-compatible.
Therefore $R_j$ can decode $x(j)$.
\end{proof}

%% Finally, it remains for us to describe a polynomial-time algorithm to
%% test whether a directed hypergraph $G=(V,E)$ satisfies
%% the equivalent conditions of Theorem~\svref{thm-directed-beta-2}.
%% Actually, the algorithm is described in the paragraph that
%% proves (\svref{aac:aac})$\Rightarrow$(\svref{aac:g-compat}).
%% We build a table encoding the binary relation~$\sharp$,
%% compute the transitive closure of this relation, construct
%% the quotient map $F$, and test whether $F$ is $G$-compatible.

%%%%%%%%%%%%%%%%%%%%%%%%%%%%%%%%%%%%%%%%%%%%%%%%%%%%%%%%%%%%%%%%
%%% Gaps
%%%%%%%%%%%%%%%%%%%%%%%%%%%%%%%%%%%%%%%%%%%%%%%%%%%%%%%%%%%%%%%%
%%%%%%%%%%%%%%%%%%%%%%%%%%%%%%%%%%%%%%%%%%%%
\section{The gap between the broadcast rate and clique cover numbers}
%%%%%%%%%%%%%%%%%%%%%%%%%%%%%%%%%%%%%%%%%%%%
\subsection{Separating the broadcast rate from the extreme LP solution $b_n$}\svlabel{sec:sep-beta-alpha}
In this section we prove Theorem~\svref{thm-beta-chif-gap} that shows a strong form of separation between $\beta$ and its upper bound $b_n = \chibar_f$. Not only can we have a family of graphs where $\beta=O(1)$ while $\chibar_f$ is unbounded, but one can construct such a family where $\chibar_f$ grows polynomially fast with $n$.

\begin{proof}[\textbf{\emph{Proof of Theorem~\svref{thm-beta-chif-gap}}}]
The following family of graphs (up to a small modification) was introduced by Erd\H{o}s and R\'enyi in~\cite{ER}. Due to its close connection to the (Sylvester-)Hadamard matrices when the chosen field has characteristic 2 we refer to it as the \emph{projective-Hadamard} graph $H(\F_q)$:
%\begin{definition}
%  \svlabel{def-hadamard}
%For $\F_q$ a field with $q$ elements define the \emph{projective Hadamard} graph $H(\F_q)$ by:
\begin{compactenum}
\item  Vertices are the non-self-orthogonal vectors in the $2$-dimensional projective space over $\F_q$.
\item Two vertices are adjacent iff their corresponding vectors are non-orthogonal.
\end{compactenum}
%\end{definition}
Let $q$ be a prime-power. We claim that the projective-Hadamard graph $H(\F_q)$ on $n=n(q)$ vertices satisfies $\beta = 3$ while $\chibar_f = \Theta(n^{1/4})$. The latter
is a well-known fact which appears for instance in~\cites{AK,MW}. Showing that $\chibar_f \geq (1-o(1))n^{1/4}$ is straightforward and we include an argument establishing this for completeness.

The fact that $\beta \geq 3$ follows from the fact that the standard basis vectors form an independent set of size $3$. A matching upper bound will follow from the $\minrk_\F$ parameter defined in Section~\svref{subsec:cor-of-thm-1}:
Let $\F$ be some finite field and let $\ell=\minrk_\F(G)$ be the length of the optimal linear encoding over $\F$ for the Index Coding problem of a graph $G$ with messages taking values in $\F$. Broadcasting $\ell \lceil \log_2 |\F| \rceil$ bits allows each receiver to recover his required message in $\F$ and so clearly $\beta \leq \ell$. It thus follows that $\lceil \beta(G) \rceil \leq \minrk_\F(G)$ for any graph $G$ and finite field $\F$.

Here, dealing with the projective-Hadamard graph $H$, let $B$ be the Gram matrix over $\F_q$ of the vectors corresponding to the vertices of $H$. By definition the diagonal entries are nonzero and whenever two vertices $u,v$ are nonadjacent we have $B_{uv} = 0$. In particular $B$ is a representation for $H$ over $\F_q$ which clearly has rank $3$ as the standard basis vectors span its entire row space. Altogether we deduce that $\beta(H) = 3$ whereas $\chibar_f = \Theta(n^{1/4})$, as required.

The fact that $\chibar_f \geq (1-o(1))n^{-1/4}$ will follow from a straightforward calculation showing that the clique-number of $H$ is at most $(1+o(1))q^{3/2} = (1+o(1))n^{3/4}$.

Consider the following multi-graph $G$ which consists of the entire projective space:
\begin{compactenum}
  \item Vertices are all vectors of the $2$-dimensional projective space over $\F_q$.
  \item Two (possibly equal) vertices are adjacent iff their corresponding vectors are orthogonal.
\end{compactenum}
Clearly, $G$ contains the complement of the Hadamard graph $H(\F_q)$ as an induced subgraph
and it suffices to show that $\alpha(G) \leq (1+o(1))q^{3/2}$.

It is well-known (and easy) that $G$ has $N=q^2+q+1$ vertices and that every vertex of $G$ is adjacent to precisely $q+1$ others. Further observe that for any $u,v \in V(G)$ precisely one vertex of $G$ belongs to $\{u,v\}^\bot$ (as $u,v$ are linearly independent vectors). In other words, the codegree of any two vertices in $G$ is $1$.
We conclude that $G$ is a strongly-regular graph (see e.g.~\cite{GR} for more details on this special class of graphs) with codegree parameters $\mu=\nu=1$
(where $\mu$ is the codegree of adjacent pairs and $\nu$ is the codegree of non-adjacent ones).
There are thus precisely 2 nontrivial eigenvalues of $G$ given by
$\frac12 ((\mu-\nu)\pm\sqrt{(\mu-\nu)^2+4(q+1-\nu)})= \pm\sqrt{q}$,
and in particular the smallest eigenvalue is $\lambda_N=-\sqrt{q}$. Hoffman's eigenvalue bound (stating that $\alpha \leq \frac{-m\lambda_m}{\lambda_1-\lambda_m}$ for any regular $m$-vertex graph with largest and smallest eigenvalues $\lambda_1,\lambda_m$ resp., see e.g.~\cite{GR}) now shows
\begin{equation*}
\alpha(G) \leq \frac{-N \lambda_N}{(q+1)-\lambda_N} = \frac{(q^2+q+1)\sqrt{q}}{q-\sqrt{q}+1}
= q^{3/2}+q+\sqrt{q}\,,
\end{equation*}
as required.
\end{proof}

In addition to demonstrating a large gap between $\chibar_f$ and $\beta$ on the projective-Hadamard graphs, we show that even in the extreme cases where $G$ is a triangle-free graph on $n$ vertices, in which case $\chibar_f(G) \geq n/2$, one can
construct Index Coding schemes that significantly
outperform $\chibar_f$. We prove this
in Section~\svref{subsec:sep-triangle-free} by providing a
family of triangle-free graphs on $n$ vertices where
$\beta \leq \frac38 n$.

\subsection{Broadcast rates for triangle-free graphs}\svlabel{subsec:sep-triangle-free}

In this section we study the behavior of the broadcast rate for triangle-free graphs, where the upper bound $b_n$ on $\beta$ is at least $n/2$. The first question in this respect is whether possibly $\beta = b_n$ in this regime, i.e.\
for such sparse graphs one cannot improve upon the fractional clique-cover approach for broadcasting. This is answered by the following result.
\begin{theorem}\svlabel{thm-triangle-free}
There exists an explicit family of triangle-free graphs on $n$ vertices where $\chibar_f \geq n/2$
whereas the broadcast rate satisfies $\beta \leq \frac38 n$.
\end{theorem}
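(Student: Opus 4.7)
The plan is to exhibit an explicit triangle-free graph $H$ on some fixed number of vertices $m$ with broadcast rate $\beta(H) \leq 3m/8$, and then obtain the desired family by taking $k$ disjoint copies of $H$.

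First, I would establish the easy direction: for any triangle-free graph $G$ on $n$ vertices, $\chibar_f(G) \geq n/2$. Indeed, since $G$ is triangle-free, every clique has size at most $2$, so any fractional clique cover must assign total weight at least $n/2$ in order to cover all $n$ vertices. This lower bound is preserved under disjoint unions, so it carries over to the full family once $H$ is fixed.

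For the construction of $H$, a natural candidate is the Kneser graph $K(8,3)$: the $m = \binom{8}{3} = 56$ vertices are the $3$-subsets of $[8]$ and two vertices are adjacent iff the subsets are disjoint. This graph is triangle-free (three pairwise disjoint $3$-subsets would require $9$ elements), vertex-transitive, and by the Erd\H{o}s--Ko--Rado theorem has independence number $\alpha(K(8,3)) = \binom{7}{2} = 21 = \tfrac{3}{8}m$. Since $\beta \geq \alpha$, establishing the matching upper bound $\beta(K(8,3)) \leq 21$ would force $\beta(H) = \alpha(H) = 3m/8$. To exhibit an index code of rate $21$, one would exploit the vertex-transitive structure: the eight ``stars'' $I_j = \{A : j \in A\}$ are maximum independent sets covering each vertex exactly three times, suggesting an algebraic vector-linear code over a suitable finite field in which each broadcast block handles the receivers sitting in one star, combined via the star incidence structure.

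Once such an $H$ is in hand, taking $k$ disjoint copies yields a triangle-free graph on $n = 56k$ vertices. By the additivity of $\beta$ under disjoint unions (Lemma~\svref{lem-additive-beta-beta*}), $\beta(kH) = k\beta(H) \leq 3n/8$; the trivial additivity of $\chibar_f$ and step one gives $\chibar_f(kH) \geq n/2$. The main obstacle lies in the middle step: since $\chibar_f(H) \geq m/2$ for any triangle-free $H$, the clique-cover upper bound is useless here, and one must design a genuinely better vector code. Moreover, the naive polynomial-in-the-adjacency-matrix approach to $\minrk$ via the Bose--Mesner algebra of $K(8,3)$ only yields $\minrk \leq \chibar_f = 28$, so a more refined algebraic or information-theoretic construction is needed to push $\beta$ down to $21$; this finer coding argument is the heart of the proof.
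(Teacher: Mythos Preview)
Your proposal has a genuine gap at precisely the point you flag as ``the heart of the proof'': you never establish that $\beta(K(8,3)) \leq 21$, and in fact there is strong evidence this is false. The Petersen graph is $K(5,2)$, and as recorded in Fact~\svref{cor-specific-graphs} of the paper it satisfies $\alpha=4$ yet $\beta = b_2 = \chibar_f = 5 = n/2$. Thus for the smallest nontrivial Kneser graph, $\beta$ lands exactly at the clique-cover upper bound rather than at $\alpha$, so there is no reason to expect $\beta(K(8,3))$ to drop to $\alpha = 21$; it is entirely plausible that $\beta(K(8,3)) = 28$. Your heuristic about the eight stars does not help either: the stars are independent sets, not cliques, and an independent-set cover does not translate into an index code. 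In short, the choice of $K(8,3)$ is not merely missing a coding construction---it is likely the wrong base graph.

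The paper's approach is quite different and sidesteps this obstacle entirely. Rather than trying to push $\beta$ down to $\alpha$ on some known graph, it \emph{engineers} a triangle-free graph together with a low-rank representation. One constructs (Lemma~\svref{lem-triangle-free-family}) a family $\cF$ of $N=\tfrac83 n$ odd-cardinality subsets of $[n]$ with no three subsets having pairwise odd intersections, and lets $G$ be the graph on $\cF$ with edges given by odd intersection. Triangle-freeness is immediate, and the Gram-type matrix $M_{A,B}=|A\cap B|\pmod 2 = F F^{\mathrm T}$ over $GF(2)$ is a representation of $G$ of rank at most $n$, giving $\beta(G)\leq\minrk_2(G)\leq n = \tfrac38 N$. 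The point is that the upper bound comes from $\minrk_2$ over $GF(2)$ via a built-in low-rank factorization, not from matching $\alpha$.
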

The following lemma will be the main ingredient in the construction:
\begin{lemma}\svlabel{lem-triangle-free-family}
For arbitrarily large integers $n$ there exists a family $\cF$ of subsets of $[n]$ whose size is at least $8n/3$ and has the following two properties:
\begin{inparaenum}[(i)]
  \item Every $A \in \cF$ has an odd cardinality.
  \item There are no distinct $A,B,C\in\cF$ that have pairwise odd cardinalities of intersections.
\end{inparaenum}
\end{lemma}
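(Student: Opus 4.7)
My plan is to prove the lemma by first exhibiting a small ``base case'' $(n_0,\cF_0)$ achieving $|\cF_0|/n_0 \geq 8/3$, and then amplifying it by disjoint unions to obtain arbitrarily large $n$. The amplification goes as follows: given $\cF_0 \subseteq 2^{[n_0]}$ satisfying (i) and (ii), for any $k \geq 1$ take $k$ pairwise disjoint copies of $[n_0]$ and form $\cF = \cF_0^{(1)} \cup \cdots \cup \cF_0^{(k)}$, a family on a ground set of size $k n_0$ with $|\cF| = k|\cF_0|$. Property (i) is inherited copy by copy. For property (ii), consider any three sets from $\cF$: either all three come from the same copy, in which case (ii) for $\cF_0$ excludes a triangle, or at least two come from distinct copies, in which case those two have empty (hence even-sized) intersection and no triangle can form. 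Thus $\cF$ satisfies both properties, and the ratio $|\cF|/(k n_0) = |\cF_0|/n_0 \geq 8/3$ is preserved; sending $k \to \infty$ yields arbitrarily large $n$.

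For the base case, I plan to use the linear-algebraic reformulation of the problem. Identifying each $A \subseteq [n_0]$ with its characteristic vector $v_A \in \F_2^{n_0}$, property (i) becomes $\langle v_A, v_A \rangle = 1$ (odd weight) and property (ii) says that the graph on $\cF_0$ with edges $\{A B : \langle v_A, v_B \rangle = 1\}$ is triangle-free. A natural strategy is to take $\cF_0 = c + V$ for a carefully selected subspace $V \subseteq \mathbf{1}^\perp$ (the even-weight hyperplane) of $\F_2^{n_0}$ and a shift $c$ of odd weight, so that every element of $\cF_0$ automatically has odd weight. Triangle-freeness then translates into a combinatorial condition on the Gram form of $V$ twisted by $c$, which I would verify either by direct case analysis for a small explicit $V$, or by invoking the incidence structure of a carefully chosen extremal triangle-free graph whose $\F_2$-representation achieves the required rank ratio.

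The main obstacle is that the ratio $8/3$ strictly exceeds $2$, which is the limit reached by the simplest ``bipartite-type'' constructions. For instance, taking all $n_0$ singletons together with all $n_0$ complements-of-singletons gives a triangle-free (in fact bipartite) parity-intersection graph of size $2 n_0$, and any attempt to add a further odd subset creates many triangles with the existing singletons and complements. Overcoming this barrier --- that is, producing a non-bipartite triangle-free structure whose $\F_2$-representation has rank at most $(3/8)|\cF|$ --- is the technical crux of the lemma, and I would expect the base construction to rely on a specific combinatorial design (a structured small subspace $V$, or the incidence data of an explicit triangle-free graph) whose verification proceeds by a delicate case analysis on triples of vectors.
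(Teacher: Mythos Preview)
Your amplification step via disjoint unions is exactly the super-additivity reduction the paper uses, and your argument for why properties (i) and (ii) are preserved under disjoint unions is correct.

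The genuine gap is the base case: you do not actually construct any family $\cF_0$ with $|\cF_0|/n_0 \geq 8/3$, you only speculate about how one might try. Since the entire content of the lemma is precisely that such a base case exists (the amplification being routine), the proposal as written is not a proof. Moreover, your primary suggested route --- taking $\cF_0$ to be a coset $c+V$ of a subspace $V$ of the even-weight hyperplane --- does not match what the paper does and seems unlikely to succeed as stated: the paper's base family has $16$ sets on a ground set of size $6$, and one checks directly that it is \emph{not} an affine subspace of $\F_2^6$ (for instance $\{1\}$ and $\{1,2,6\}$ are in the family but their symmetric difference added to $\{1\}$, namely $\{6\}$, is not). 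So the ``structured subspace'' heuristic would need substantial modification before it could yield the required ratio.

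For comparison, the paper's base case is an explicit ad-hoc construction on $[6]$: take the five singletons $\{x\}$ for $x\in[5]$, the ten triples $A\cup\{6\}$ with $A\in\binom{[5]}{2}$, and the single set $[5]$, for a total of $16=\tfrac{8}{3}\cdot 6$ odd sets. Triangle-freeness is then verified by a short case analysis (the induced graph on the ten triples is the Petersen graph, the singletons form an independent set, and $[5]$ is isolated from the triples). This is exactly the ``specific combinatorial design verified by case analysis'' you anticipated, but you would need to actually supply it.
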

\begin{remark}
For $n$ even, a simple family $\cF$ of size $2n$ with the above properties
is obtained by taking all the singletons and all their complements. However, for our application here it is crucial to obtain a family $\cF$ of size strictly larger than $2n$.
\end{remark}
\begin{remark}
  The above lemma may be viewed as a higher-dimensional analogue of the Odd-Town theorem: If we consider a graph on the odd subsets with edges between those with an odd cardinality of intersection, the original theorem looks for a maximum independent set while the lemma above looks for a maximum triangle-free graph.
\end{remark}
\begin{proof}[Proof of lemma]
It suffices to prove the lemma for $n=6$ by super-additivity (we can partition a ground-set $[N]$ with $N=6m$ into disjoint $6$-tuples and from each take the original family $\cF$).

Let $U_1 = \big\{ \{x\} : x \in [5]\big\}$ be all singletons except the last, and $U_2 = \big\{ A \cup \{6\} : A \subset [5]\,,\,|A|=2 \big\}$.
Clearly all subsets given here are odd.

We first claim that there are no triangles on the graph induced on $U_2$. Indeed, since all subsets there contain the element $6$, two vertices in $U_2$ are adjacent iff their corresponding 2-element subsets $A,A'$ are disjoint, and there cannot be 3 disjoint 2-element subsets of $[5]$.

The vertices of $U_1$ form an independent set in the graph, hence the only remaining option for a triangle in the induced subgraph on $U_1\cup U_2$ is of the form $\{x\}, (A\cup\{6\}), (A'\cup \{6\})$. However, to support edges from $\{x\}$ to the two sets in $U_2$ we must have that $x$ belongs to both sets, and since $x\neq 6$ by definition we must have $x \in A \cap A'$. However, we must also have $A \cap A' = \emptyset$ for the two vertices in $U_2$ to be adjacent, contradiction.

To conclude the proof observe that adding the extra set $[5]$ does not introduce any triangles, since $U_1$ is an independent set while $[5]$ is not adjacent to any vertex in $U_2$ (its intersection with any set $(A\cup\{6\}) \in U_2$ contains precisely 2 elements). Altogether we have $|\cF| = 5+\binom{5}2 + 1=\frac83 n$.
\end{proof}
\begin{proof}[\emph{\textbf{Proof of Theorem~\svref{thm-triangle-free}}}]
Let $\cF$ be the family provided by the above lemma and consider the graph $G$ whose $N$ vertices are the elements of $\cF$ with edges between $A,B$ whose cardinality of intersection is odd. By definition the graph $G$ is triangle-free and we have $\chibar_f(G) \geq N/2$.

Next, consider the binary matrix $M$ indexed by the vertices of $G$ where $M_{A,B} = |A\cap B|\pmod{2}$. All the diagonal entries of $M$ equal $1$ by the fact that $\cF$ is comprised of odd subsets only, and clearly $M$ is a representation of $G$ over $GF(2)$. At the same time, $M$ can be written as $F F^\mathrm{T}$ where $F$ is the $N\times n$ incidence-matrix of the ground-set $[n]$ and  subsets of $\cF$. In particular we have that $\rank(M) \leq \rank(F) \leq n$ over $GF(2)$.
This implies that $\minrk_2(G) \leq n$ and the proof is now concluded by the fact that $\beta(G) \leq \minrk_2(G)$.
\end{proof}

\begin{remark} The construction of the family of subsets $\cF$ in Lemma~\svref{lem-triangle-free-family} relied on a triangle-free 15-vertex base graph $H$ which is equivalent to the Peterson graph with 5 extra vertices added to it, each one adjacent to one of the independent sets of size 4 in the Peterson graph.
\end{remark}

Having discussed the relation between $\beta$ and $b_n$ for sparse graphs we now turn our attention to the analogous question for the other extreme end, namely whether $\beta=b_1$ when $b_1=\alpha$ attains its smallest possible value (other than in the complete graph) of 2.

\subsection{Graphs with a broadcast rate of nearly 2}

We now return to the setting of undirected graphs, where the class of $\{ G : \beta(G) = 2\}$ is simply the complements of nonempty bipartite graphs, where in particular Index Coding is trivial.
It turns out that extending this class to $\{G : \beta(G) < 2+\epsilon\}$ for any fixed small $\epsilon> 0$ already turns this family of graphs to a much richer one, as the following simple corollary of Theorem~\svref{thm-hierarchy} shows. Recall that the Kneser graph with parameters $(n,k)$ is the graph whose vertices are all the $k$-element subsets of $[n]$ where two vertices are adjacent iff their two corresponding subsets are disjoint.
\begin{corollary}\svlabel{cor-kneser}
Fix $0 < \epsilon < \frac12$ and let $G$ be the complement of the Kneser$(n,k)$ graph on $N=\binom{n}{k}$ vertices for $n = (2+\epsilon)k$. Then $\beta(G) \leq 2+\epsilon$ whereas $\chibar(G) \geq (\epsilon/2)\log N$.
\end{corollary}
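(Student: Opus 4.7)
The plan is to invoke two classical facts about Kneser graphs in combination with the $b_n = \chibar_f$ characterization supplied by Theorem~\ref{thm-hierarchy}(\ref{item-b1-bn}). Since $G = \overline{K(n,k)}$, the parameter $\chibar(G)$ coincides with $\chi(K(n,k))$ and $\chibar_f(G)$ coincides with $\chi_f(K(n,k))$, so the whole argument reduces to plugging in standard values.

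For the upper bound on $\beta(G)$, the first step is to apply Theorem~\ref{thm-hierarchy}, which gives $\beta(G) \leq b_n(G) = \chibar_f(G) = \chi_f(K(n,k))$. I would then invoke the classical identity $\chi_f(K(n,k)) = n/k$ (due to Lov\'asz, via the fractional LP dual to Kneser's conjecture). Substituting $n=(2+\epsilon)k$ yields $\beta(G)\leq 2+\epsilon$, as required.

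For the lower bound on $\chibar(G)$, the key ingredient is the Lov\'asz--Kneser theorem, which gives $\chi(K(n,k)) = n - 2k + 2$. Hence $\chibar(G) = \epsilon k + 2$. To compare this with $(\epsilon/2)\log N$, I would use the elementary bound $\log\binom{n}{k} \leq k\log(en/k)$, which with $n=(2+\epsilon)k$ becomes $\log N \leq k\bigl(1 + \log(2+\epsilon)\bigr)$. Since the hypothesis $\epsilon < 1/2$ is comfortably below $e-2 \approx 0.718$, we have $1+\log(2+\epsilon) < 2$, so $(\epsilon/2)\log N < \epsilon k \leq \chibar(G)$.

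I do not anticipate any serious obstacle: the Lov\'asz--Kneser and fractional Kneser results are precisely the right inputs, and the only calculation needed is the routine upper bound on $\log\binom{n}{k}$. The minor subtlety is keeping track of the base of the logarithm — the argument as written uses natural log, and the choice $\epsilon < 1/2$ ensures that the constants cooperate; any other reasonable base would simply require absorbing a multiplicative factor into $\epsilon$.
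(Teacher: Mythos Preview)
Your proposal is correct and matches the paper's proof essentially line for line: both use $\beta \leq b_n = \chibar_f(G) = \chi_f(K(n,k)) = n/k$ for the upper bound and Lov\'asz--Kneser plus the estimate $\log\binom{n}{k} \leq k(1+\log(2+\epsilon)) < 2k$ for the lower bound. The only cosmetic difference is that the paper justifies $\chi_f(K(n,k)) = n/k$ via Erd\H{o}s--Ko--Rado and vertex-transitivity rather than attributing it to Lov\'asz.
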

\begin{proof}
  Using topological methods, Lov\'asz~\cite{Lovasz} proved that
   the Kneser graph with parameters $(n,k)$ has chromatic number $n-2k+2$, in our case giving
   that $\chibar(G) = \epsilon k + 2 \leq (\epsilon/2)\log N$ (with the last inequality due to the fact that $N \geq [\mathrm{e}(2+\epsilon)]^k$ and so $k \geq \frac12 \log N$).
  At the same time, it is well known that $G$ satisfies $\chibar_f = n/k$ (its maximum clique corresponds to a maximum set of intersecting $k$-subsets, which has size $\omega=\binom{n-1}{k-1}$ by the Erd\H{o}s-Ko-Rado Theorem, and being vertex-transitive it satisfies $\chibar_f = N / \omega$).
  The bound $\beta \leq b_n = \chibar_f$ given in Theorem~\svref{thm-hierarchy} thus completes the proof.
\end{proof}

%%%%%%%%%%%%%%%%%%%%%%%%%%%%%%%%%%%%%%%%%%%%%%%%%%%%%%%%%%%%%%%%
%%% Graphs
%%%%%%%%%%%%%%%%%%%%%%%%%%%%%%%%%%%%%%%%%%%%%%%%%%%%%%%%%%%%%%%%
%%%%%%%%%%%%%%%%%%%%%%%%%%%%%%%%%%%%%%%%%%%%
\section{Establishing the exact broadcast rate for families of graphs}\svlabel{sec:beta-of-graphs}
%%%%%%%%%%%%%%%%%%%%%%%%%%%%%%%%%%%%%%%%%%%%

\subsection{The broadcast rate of cycles and their complements}
The following theorem establishes the value of $\beta$ for cycles and their complements via the LP framework of Theorem~\svref{thm-hierarchy}.
\begin{theorem}\svlabel{thm-cycles}
  For any integer $n\geq 4$ the $n$-cycle satisfies $\beta(C_{n}) = n/2$ whereas its complement satisfies $\beta(\overline{C_n})=n/\lfloor n/2\rfloor $. In both cases $\beta_1 = \lceil\beta \rceil$ while
  $\alpha = \lfloor \beta \rfloor$.
\end{theorem}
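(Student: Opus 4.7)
The plan is to pinch $\beta$ between the extremal LP values $b_2$ and $b_n$ from Theorem~\svref{thm-hierarchy}. For the upper bounds, I would use that both $C_n$ and $\overline{C_n}$ are vertex-transitive, whence $\chibar_f = n/\omega$: since $\omega(C_n) = 2$ for $n\geq 4$ and $\omega(\overline{C_n}) = \alpha(C_n) = \lfloor n/2\rfloor$, this yields $\chibar_f(C_n) = n/2$ and $\chibar_f(\overline{C_n}) = n/\lfloor n/2\rfloor$. Theorem~\svref{thm-hierarchy}(\svref{item-b2-bn}) then gives $\beta(C_n) \leq n/2$ and $\beta(\overline{C_n}) \leq n/\lfloor n/2\rfloor$.

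For the matching lower bounds, the even case is immediate from the trivial inequality $\beta \geq \alpha$: $\alpha(C_n) = n/2$ for $n$ even and $\alpha(\overline{C_n}) = 2 = n/(n/2)$. The substantive case is odd $n$, where $\alpha$ falls short by $\tfrac12$ in $C_n$ and by $2/(n-1)$ in $\overline{C_n}$, and I would invoke $\beta \geq b_2$ from Theorem~\svref{thm-hierarchy}(\svref{item-b2-bn}). The template is the $C_5$ proof-by-picture: for each vertex $i$ of $C_n$, the decoding constraint $X(\{i-1,i+1\}) \geq X(\{i-1,i,i+1\})$ together with slope gives $X(\{i-1,i,i+1\}) \leq X(\emptyset) + 2$; a cyclic cascade of submodularity applications $X(A)+X(B)\geq X(A\cup B)+X(A\cap B)$ then telescopes these 3-arc bounds into an inequality of the form $c_1 X(\emptyset) + c_2 \geq X(V) \geq n$ calibrated so that $X(\emptyset) \geq n/2$. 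For $\overline{C_n}$ the picture is dual: since $\alpha(\overline{C_n}) = 2$, every edge $\{j,j+1\}$ of $C_n$ gives a near-full set $V\setminus\{j,j+1\}$ that decodes $V$, so $X(V\setminus\{j,j+1\}) \geq X(V) \geq n$ while slope yields $X(V\setminus\{j,j+1\}) \leq X(\emptyset)+n-2$; a cyclic submodularity cascade over these sets should produce $X(\emptyset) \geq n/\lfloor n/2\rfloor$.

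The main obstacle in both cases is assembling the cyclic cascade with precisely the right LP weights. A naive application of Shearer's entropy inequality to the natural cyclic covers (each vertex being covered by three consecutive 3-arcs in $C_n$, or by $n-2$ near-full sets in $\overline{C_n}$) turns out to be lossy by a constant factor; thus the argument must be a bespoke dual LP combination, generalizing the $C_5$ picture by iteratively merging adjacent 3-arcs (respectively complementary 2-sets) and systematically using monotonicity to absorb the intersection terms that appear at each submodularity step.

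Finally the auxiliary claims $\beta_1 = \lceil\beta\rceil$ and $\alpha = \lfloor\beta\rfloor$ follow from standard facts. The bound $\beta_1 \leq \chibar$ is the XOR clique-cover code from~\cites{BK,BBJK}, and $\beta_1 \geq \lceil\beta\rceil$ by integrality of $\beta_1$; since $\chibar(C_n) = \lceil n/2\rceil$ and $\chibar(\overline{C_n}) = \chi(C_n) \in \{2,3\}$, this pinches $\beta_1 = \lceil\beta\rceil$ in each parity. The values of $\alpha$ are direct: $\alpha(C_n) = \lfloor n/2\rfloor = \lfloor\beta(C_n)\rfloor$, and $\alpha(\overline{C_n}) = \omega(C_n) = 2 = \lfloor\beta(\overline{C_n})\rfloor$ for $n\geq 4$.
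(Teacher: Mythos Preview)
Your overall plan---pinching $\beta$ between $b_2$ and $b_n=\chibar_f$ via Theorem~\svref{thm-hierarchy}, with the upper bounds read off from vertex-transitivity---matches the paper exactly, as do the auxiliary claims about $\beta_1$ and $\alpha$. For $\overline{C_n}$ your proposal is also essentially the paper's: the paper invokes the almost-alternating-cycle entropy argument from Section~\svref{subsec:beta-equals-2}, which is precisely a submodularity cascade over the complements $\hb{S_{i:i+1}}$, i.e.\ over the sets $V\setminus\{j,j+1\}$ you describe. So that half is on track and only needs the bookkeeping already carried out there.

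For $C_n$, however, there is a real gap: the 3-arc cascade you propose is too lossy to reach $n/2$. Merging $k$ consecutive 3-arcs into $V$ costs $k$ copies of $X(\emptyset)+2$ on the input side but produces only $X(V)$ plus $k-1$ singleton intersection terms, each bounded below merely by $X(\emptyset)$; the net inequality collapses to $X(\emptyset)\geq 1$. The paper avoids any cascade. Writing $n=2k+1$, it inputs the even and odd vertex sets $E=\{0,2,\ldots,2k-2\}$ and $O=\{1,3,\ldots,2k-1\}$, each of size $k$, so that slope gives $X(E),X(O)\leq X(\emptyset)+k$. The key point you are missing is that each of these is a maximum independent set that decodes \emph{in a single step} to a $(2k-1)$-arc: $E\decode E^+=\{0,\ldots,2k-2\}$ and $O\decode O^+=\{1,\ldots,2k-1\}$. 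One submodularity step on $E^+,O^+$ yields $V\setminus\{2k\}$ (which decodes $V$) with intersection $M=\{1,\ldots,2k-2\}$; one more submodularity step on $M$ and the singleton $\{2k\}$ yields a second copy of $X(V)$. Summing gives $3X(\emptyset)+(2k+1)\geq 2X(V)+X(\emptyset)$, hence $X(\emptyset)\geq k+\tfrac12$. In short, the right slope inputs are size-$k$ independent sets, not size-$2$ ones; that is what makes the certificate tight with only two submodularity steps instead of a long cascade.
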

\begin{proof}
As the case of $n$ even is trivial with all the inequalities in~\sveqref{eq-trivial-ineqs} collapsing into an equality (which is the case for any perfect graph), assume henceforth that $n$ is odd.
We first show that $\beta(C_n) = n/2$. Putting $n=2k+1$ for $k \geq 2$, we aim to prove that $b_2 \geq k+1/2 $, which according to Theorem~\svref{thm-hierarchy} will imply the required result since clearly $\chibar_f = k+1/2$.

Denote the vertices $V$ of the cycle by $0,1,\ldots,2k$.  Further define:
\begin{align*}
  E &= \{i \;:\; i\equiv0\bmod{2}~,~ i \ne 2k\}  &\mbox{(Evens)}\,, \\
  O &= \{i\;:\; i \equiv1\bmod{2}\} &\mbox{(Odds)}\,, \\
  E^+ &= \{i \;:\; i \le 2k-2\} &\mbox{(Evens decoded)}\,,\\
   O^+ &= \{i \;:\; 1 \le i \le 2k-1\} &\mbox{(Odds decoded)}\,,\\
    M &= \{i \;:\; 1 \le i \le 2k-2\} &\mbox{(Middle)}\,.
\end{align*}
Next, consider the following constraints in the LP $\mathcal{B}_2$:
\begin{align*}
X(\emptyset) +  k &\ge  X(E) & \mbox{(slope)}\\
X(\emptyset) +  k &\ge  X(O) & \mbox{(slope)}\\
X(\emptyset) +  1 &\ge  X(\{2k\}) & \mbox{(slope)}\\
X(E) & \ge X(E^+) & \mbox{(decode)}\\
X(O) & \ge X(O^+) & \mbox{(decode)}\\
X(E^+) + X(O^+) & \ge X(V) + X(M) & \mbox{(submod , decode)}\\
X(M) + X(\{2k\}) &\ge X(V) + X(\emptyset) & \mbox{(submod , decode)}\\
2X(V) &\ge  2(2k+1)  & \mbox{(initialize)}&\,.
\end{align*}

Summing and canceling we get $2X(\emptyset) + 2k+1 \ge 4k+2$, implying $X(\emptyset) \ge k+1/2$.  The main idea of this proof, as with the ones to follow, is that we input some sets of vertices and then apply decoding to the sets as well as combine them together using submodularity to eventually output $X(V)$ and $X(\emptyset)$.

It remains to treat complements of odd cycles. Let $H=\overline{\aac_{n}}$ be the complement of a directed odd almost-alternating cycle on $n$ vertices (as defined in Section~\svref{subsec:beta-equals-2}). Treating $\overline{C_{n}}$ as a directed graph (replacing each edge with a bi-directed pair of edges) it is clearly a spanning subgraph of $H$, hence $\beta(\overline{C}_{n})$ is at least as large as $\beta(H)$. The proof in Section~\svref{subsec:beta-equals-2} establishes that $\beta(H) \geq \frac{n}{\lfloor n/2\rfloor}$, translating to a lower bound on $\beta(\overline{C_n})$. The matching upper bound follows from the fact that due to Theorem~\svref{thm-hierarchy} we have $\beta(\overline{C_n}) \leq \chibar(\overline{C_n}) = \frac{n}{\lfloor n/2\rfloor}$.
\end{proof}

\subsection{The broadcast rate of cyclic Cayley Graphs}\svlabel{sec:ap-cayley}

In this section we demonstrate how the same framework of the proof of Theorem~\svref{thm-cycles} may be applied with a considerably more involved sequence of entropy-inequalities to establish the broadcast rate of two classes of Cayley graphs of the cyclic group $\Z_n$. Recall that a \emph{cyclic Cayley graph} on $n$ vertices with a set of generators $G \subseteq \{1,2,\ldots,\lfloor n/2\rfloor\}$ is the graph on the vertex set $\{0,1,2,\ldots,n-1\}$ where $(i,j)$ is an edge iff $j - i \equiv g \pmod{n}$ for some $g \in G$.

\begin{theorem}\svlabel{thm-3-reg-cayley}
For any $n\geq 4$, the 3-regular Cayley graph of $\Z_n$ has broadcast rate $\beta = n/2$.
\end{theorem}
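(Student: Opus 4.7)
The plan is to apply Theorem~\svref{thm-hierarchy} by establishing $b_n(\Gamma) \leq n/2 \leq b_2(\Gamma)$ for the 3-regular Cayley graph $\Gamma$. First, observe that any 3-regular Cayley graph of $\Z_n$ must use the unique involution $n/2$ as one of its generators (otherwise every generator contributes an even number of neighbors, forcing the degree to be even), so $n$ is even and the generator set has the form $\{\pm g, n/2\}$ for some $1 \leq g < n/2$. The upper bound follows immediately: the $n/2$-edges alone form a perfect matching partitioning $V$ into $n/2$ cliques of size two, giving a clique cover of size $n/2$. Hence by~\sveqref{eq-trivial-ineqs} we get $\beta(\Gamma) \leq \chibar(\Gamma) \leq n/2$.

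For the lower bound, I would follow the same LP template used in the proof of Theorem~\svref{thm-cycles} and exhibit an explicit dual-feasible combination of constraints of $\mathcal{B}_2$ yielding $X(\emptyset) \geq n/2$. Concretely, I would reparameterize the vertex set along the orbits of the non-involution generator $g$ so that the $g$-edges form $\gcd(g, n)$ disjoint cycles, split $V$ into an ``even'' part $E$ and an ``odd'' part $O$ relative to this cyclic structure, and bound $X(E)$ and $X(O)$ above by $X(\emptyset) + |E|$ and $X(\emptyset) + |O|$ via the slope constraint. I would then extend each half by a vertex or two via decoding, obtaining sets $E^+, O^+$ that between them cover nearly all of $V$, combine them back with a submodularity step and a final decoding inequality that exploits the $n/2$-matching edges to reach $X(V)$, and close the argument using the initialization constraint $X(V) \geq n$. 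After cancellation the resulting inequality should take the same shape as in the cycle proof: $2 X(\emptyset) \geq n$, yielding $b_2(\Gamma) \geq n/2$.

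The main obstacle is ensuring that every invoked decoding step corresponds to a genuine $A \rightsquigarrow B$ relation under the actual neighborhood of $\Gamma$: each decoded vertex must have all three of its neighbors ($\pm g$ and $+n/2$) contained in the ``inside'' set. This will require care in choosing the intermediate sets and likely a small case analysis depending on whether the $g$-cycles have even or odd length and on how the $n/2$-matching pairs up vertices of different cycle-parities --- one case mirrors the even-cycle LP (which collapses trivially) and the other mirrors the odd-cycle LP proved in Theorem~\svref{thm-cycles}. In each case, however, the counting of constraints should cancel in the same way as for $C_n$, with the additional $n/2$-matching edges contributing precisely the extra decoding flexibility needed to handle the third generator.
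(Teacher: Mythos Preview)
Your upper bound is fine, and the easy case $n\equiv 2 \pmod 4$ does collapse as you suggest (there the $n/2$-edges flip parity, so the even set really does decode all of $V$, giving $\alpha = \chibar_f = n/2$).

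The genuine gap is the hard case $n \equiv 0 \pmod 4$ with $g=1$. Here $n/2$ is even, so for \emph{every} vertex $v$ the matching neighbour $v+n/2$ has the \emph{same} parity as $v$. Consequently no vertex outside $E$ has all three neighbours in $E$: your even/odd split decodes literally nothing, and the same holds for $O$. The obstacle you flag is not a matter of ``care in choosing the intermediate sets'' or a small case split; it is structural, and the claim that this case ``mirrors the odd-cycle LP'' is incorrect. The $n/2$-matching edges do not supply extra decoding flexibility---they are precisely the obstruction, because each would-be decoded vertex now has a same-parity neighbour that must also lie in the decoding set.

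What the paper actually does for $n=4k$ is quite different from the $C_n$ template. It first averages over cyclic shifts to assume the LP solution is rotation-invariant, then uses a non-obvious set
\[
D=\{0,2,\ldots,2k-2\}\cup\{2k+1,2k+3,\ldots,4k-3\},
\]
i.e.\ evens on one half of the cycle and odds on the other; this is exactly what is needed so that every ``missing'' interior vertex has its $n/2$-partner in $D$, and one checks $D \rightsquigarrow D^+$ with $|V\setminus D^+|=4$. The argument then chains together $\Theta(k)$ slope/submodularity inequalities involving the slices $[i]=\{0,\ldots,i-1\}$ (not a constant number as in the $C_n$ proof), telescoping down to $(2k-2)X(\emptyset)+2k(2k-2)\ge (2k-2)\cdot 4k$ and hence $X(\emptyset)\ge 2k$. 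Neither the choice of $D$ nor the use of cyclic symmetry and the long slice-telescoping is suggested by your outline.

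A secondary point: your opening is more general than the paper's, allowing generators $\{\pm g, n/2\}$ for arbitrary $g$. Be careful---the theorem is actually false at that level of generality. For instance with $n=6$ and $g=2$ the resulting Cayley graph is $\overline{C_6}$, which has $\beta=2\neq 3=n/2$ by Theorem~\ref{thm-cycles}. The paper's statement is really about the specific generator set $\{1,n/2\}$.
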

\begin{theorem}[Circulant graphs]\svlabel{thm-circulant}
For any integers $n\geq 4$ and $k < \frac{n-1}2$, the Cayley graph of $\Z_n$ with generators $\{\pm1,\ldots,\pm k\}$ has broadcast rate $\beta=n/(k+1)$.
\end{theorem}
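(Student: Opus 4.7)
The plan is to apply Theorem~\svref{thm-hierarchy} and sandwich $\beta(G)$ between $b_2(G)$ and $b_n(G) = \chibar_f(G)$, showing that both equal $n/(k+1)$.

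For the upper bound, note that $G$ is vertex-transitive and has $\omega(G) = k+1$, since any $k+1$ consecutive residues modulo $n$ form a clique (all pairwise differences lie in $\{\pm 1, \ldots, \pm k\}$). Hence $\chibar_f(G) = n/\omega(G) = n/(k+1)$, and Theorem~\svref{thm-hierarchy} yields $\beta(G) \leq b_n(G) = n/(k+1)$.

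For the lower bound, write $n = (k+1)q + r$ with $0 \leq r \leq k$. If $r = 0$, the set $\{0, k+1, \ldots, (q-1)(k+1)\}$ is independent of size $q = n/(k+1)$, so $\beta(G) \geq \alpha(G) = n/(k+1)$ and there is nothing more to show. If $1 \leq r \leq k$, I would generalize the chain of Shannon-type inequalities used to prove Theorem~\svref{thm-cycles}. Let $I_j = \{j + m(k+1) : 0 \leq m < q\}$ for $j = 0, \ldots, k$ be the $k+1$ shifted independent sets of size $q$, and let $T = \{(k+1)q, \ldots, n-1\}$ be the tail arc of size $r$. The slope constraints give $\sum_{j=0}^{k} X(I_j) + X(T) \leq (k+2) X(\emptyset) + n$. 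The goal is to derive, using only decoding and submodularity constraints, the matching lower bound
\[
\sum_{j=0}^{k} X(I_j) + X(T) \;\geq\; 2 X(V) + X(\emptyset),
\]
which combined with $X(V) \geq n$ yields $(k+1) X(\emptyset) \geq n$, and hence $b_2(G) \geq n/(k+1)$.

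The main obstacle, and the source of the extra complexity compared to the cycle case, is that for $k \geq 2$ no single independent set $I_j$ decodes a nontrivial extension of itself: the gap between two consecutive elements of $I_j$ contains $k$ interior vertices, none of whose $k$-neighborhoods fits inside $I_j$. The resolution is to first merge several shifted copies via disjoint-union submodularity $X(I_j) + X(I_{j'}) \geq X(I_j \cup I_{j'}) + X(\emptyset)$ and only then invoke decoding: the union $I_0 \cup \cdots \cup I_{k-1}$ leaves precisely one hole per block of $k+1$ consecutive vertices, and each such hole has both its left and right $k$-neighborhoods fully contained in the union, so decoding extends it to a much larger set. Combining these partial decodings with the submodularity of overlapping arcs, and with the observation that $V \setminus I \decode V$ for every independent set $I$ (in particular for $I = I_k$ and for the singleton $T$ when $r = 1$, and more generally for certain unions of $I_j$'s with $T$), produces the two copies of $X(V)$ demanded by the target inequality. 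The analogous strategy, specialized to the specific 3-element generating sets, yields Theorem~\svref{thm-3-reg-cayley}.
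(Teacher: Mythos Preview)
Your upper bound via $\chibar_f(G)=n/(k+1)$ and the $r=0$ case are correct and match the paper.  The displayed target inequality also has the right accounting: together with slope and $X(V)\ge n$ it would yield $(k+1)X(\emptyset)\ge n$.  The gap is in the derivation of that target for $k\ge 2$.

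Two concrete problems.  First, merging the $k$ pairwise-disjoint sets $I_0,\dots,I_{k-1}$ via submodularity already deposits $(k-1)$ copies of $X(\emptyset)$ on the right, whereas your target allows only one; nothing in the sketch explains how the surplus is absorbed.  Second, the decoding claim fails at the boundary: the last hole of $I_0\cup\cdots\cup I_{k-1}$, at position $(q-1)(k+1)+k$, has right-neighbors lying in $T$, so the union decodes only to the arc $\{0,\dots,n-r-2\}$, one vertex short of what you assert.  Concretely, for $n=7,\;k=2$ one can check that no chain of disjoint merges and decodes starting from $I_0,I_1,I_2,T$ reaches $2X(V)+X(\emptyset)$; every route leaves a residual term such as $X(\{2\})$ or $X(\{2,6\})$ that cannot be promoted to $X(V)$.

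The paper circumvents this by first averaging the feasible solution over all cyclic shifts, so that $X$ depends only on the \emph{size} of a contiguous slice.  It then works exclusively with arcs: the set $S=I_0\cup\cdots\cup I_{k-1}$ decodes to a long slice, and a short lemma---obtained by chaining $k+1$ submodularity inequalities between overlapping slices---shows
\[
(k+1)\,X([n-i]) + X([k]) \;\ge\; (k+1)\,X([n-i-1]) + X(\emptyset),
\]
which is applied $r$ times to walk down to $X(V)$.  The $r$ copies of $X([k])$ consumed are supplied by $r$ additional slope inequalities, and the $r$ copies of $X(\emptyset)$ produced by the lemma cancel against those exactly, leaving $(k+1)X(\emptyset)\ge n$.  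The cyclic averaging is essential to this cancellation; without it each step of the lemma would involve $k+1$ distinct arcs of the same length and the bookkeeping you need does not close.  The same remark applies to your final sentence about Theorem~\svref{thm-3-reg-cayley}: the paper's argument there uses a special set $D$ (not a union of independent sets) together with cyclic averaging, and is not a specialization of the scheme you outline.
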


To simplify the exposition of the proofs of these theorems we make use of the following definition.
\begin{definition}
A \emph{slice} of size $i$ in $\Z_n$ indexed by $x$ is the subset of $i$ contiguous vertices on the cycle given by $\{x+j \pmod{n} : 0 \le j < i\}$.
\end{definition}

\begin{proof}[\textbf{\emph{Proof of Theorem~\svref{thm-3-reg-cayley}}}]
It is not hard to see that for a cyclic Cayley graph to be $3$-regular it must have two generators, $1$ and $n/2$, and $n$ must be even.  If $n$ is not divisible by four, then it is easy to check that there is an independent set of size $n/2$ and $\chibar_f$ is also $n/2$.  Thus, it immediately follows that $\beta = n/2$.
For 3-regular cyclic Cayley graphs where $n$ is divisible by four, $\alpha$ is strictly less than $n/2$.  So to prove that $\beta = n/2$ we use the LP $\mathcal{B}_2$ to show $b_2 \ge n/2$, implying $\beta \ge n/2$.

Let $0,1,2,\ldots,4k-1$ be the vertex set of the graph. We assume that any solution $X$ has cyclic symmetry.  That is, $X(S) = X(\{s+i|s \in S\})$ for all $i \in [0,4k-1]$.  This assumption is without loss of generality because we can take any LP solution $X$ and find a new one $X'$ that is symmetric and has the same value by setting $X'(S) = \frac{1}{4k}\sum_{i = 0}^{4k-1} X(\{s+i|s \in S\})$.  All the constraints are feasible for $X'$ because each is simply the average of $4k$ feasible constraints.

In our proof we will be using the following subsets of vertices:
\begin{align*}
[i] &= \{0,1,2,\ldots,i-1\} \;\; \text{(a slice of size $i$)}\\
D &= \{0,2, \ldots, 2k-4, 2k-2, 2k+1,2k+3, \ldots , 4k-5,4k-3\}\\
D^+ &= \{0,1,2,\ldots,2k-4, 2k-3, 2k-2, 2k+1, 2k+2,2k+3, \ldots, 4k-4, 4k-3\}\,.
\end{align*}
\begin{figure}[tb]
\begin{center}
\includegraphics{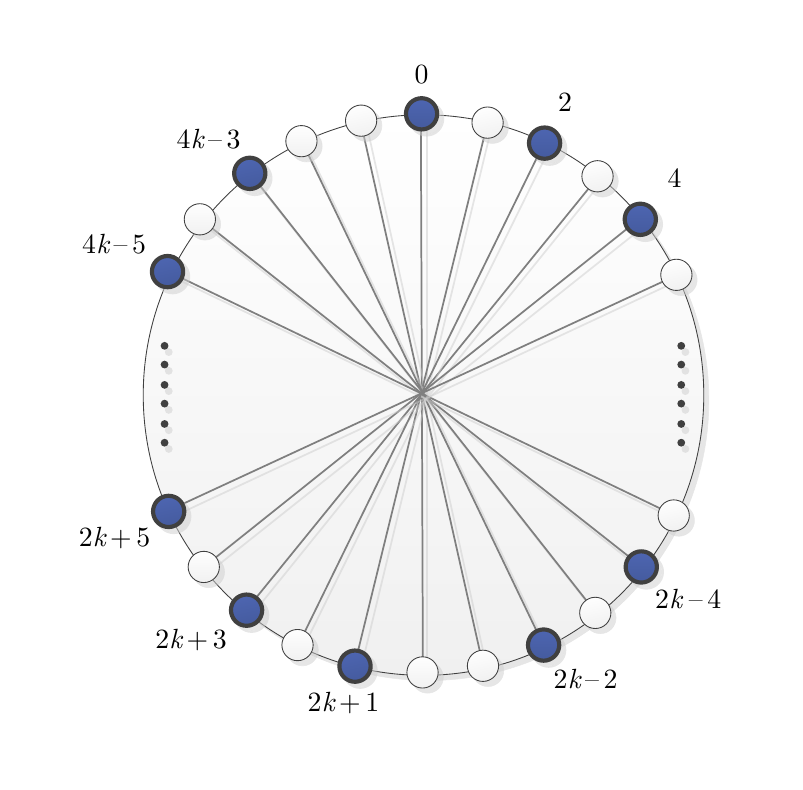}
\caption{A 3-regular cyclic Cayley graph on $4k$ vertices.  Highlighted vertices mark the set $D$ used in the proof of Theorem~\svref{thm-3-reg-cayley}.}
\svlabel{fig:3regCC}
\end{center}
\vspace{-0.5cm}
\end{figure}
Observe from Figure~\svref{fig:3regCC} that $D \decode D^+$.  Also note that $D^+$ is missing only four vertices, two on each side almost directly across from each other, and $|D| = 2k-1$.

Similar to our proof for the 5-cycle, we will prove $b_2 \ge n/2$ by listing a sequence of constraints in the LP $\mathcal{B}_2$ that sum and cancel to give us $X(\emptyset) \ge n/2$.  However, this proof differs from the 5-cycle proof because we list inequalities implied not only by the constraints in our LP but also our assumption of cyclic symmetry.  The fact that any two slices of size $i$ have the same $X$ value is used heavily in the sequence of inequalities that make up our proof.\\
First, we create $2k-1$ $X(D^+)$ terms on the right-hand-side:
\begin{align*}
(2k-2) + X(\emptyset) &\geq X(D \setminus \{0\}) & \mbox{(slope)}\\
X([1]) + X(D \setminus \{0\}) & \ge X(D^+) + X(\emptyset) & \mbox{(submod , decode)} \\
(2k-2)((2k-1) + X(\emptyset) &\geq X(D^+)) & \mbox{(slope , decode)}
\end{align*}
Now, we apply submodularity to slices of size $i = 2\ldots2k$ and an $X(D^+)$ term --- canceling all the $X(D^+)$ terms we created on the right-hand-side in the previous step.  We pick our slices so that the union is a slice missing only two vertices, and the intersection is a slice of size $i-1$.
\begin{align*}
X(D^+) + X([2k]) & \geq X([4k-2]) + X([2k-1])\\
X(D^+) + X([2k-1]) &\geq X([4k-2]) + X([2k-2])\\
&~\vdots\\
X(D^+) + X([2]) &\geq X([4k-2]) + X([1])
\end{align*}
If we sum and cancel the inequalities listed so far we have: $$2k(2k-2)+ (2k-2)X(\emptyset) + X([2k]) \ge (2k-1)X([4k-2])$$  Now, we combine all $2k-1$ of the $X([4k-2])$ terms to get full cycles.
\begin{align*}
2X([4k-2]) & \geq X(V) + X([4k-3])\\
X([4k-3]) + X([4k-2]) &\geq X(V) + X([4k-4])\\
X([4k-4]) + X([4k-2]) &\geq X(V) + X([4k-5])\\
&~\vdots&\\
X([2k+1]) + X(H[4k-2]) &\geq H(V) + H([2k])
\end{align*}
Now, we are left with:$$2k(2k-2)+ (2k-2)X(\emptyset) \ge (2k-2)X(V)$$
We can apply the constraint $X(V) \ge n$, yielding:
$$2k(2k-2)+ (2k-2)X(\emptyset) \ge (2k-2)4k$$ thus $X(\emptyset)\ge 2k$ for any feasible solution, implying $b_2 \ge 2k = n/2$.
\end{proof}

\begin{proof}[\textbf{\emph{Proof of Theorem~\svref{thm-circulant}}}]
It is easy to check that $\chibar_f$ for these graphs is $n/(k+1)$, so it is sufficient to prove that $b_2 \ge n/(k+1)$.
As we did in the proof of Theorem~\svref{thm-3-reg-cayley} we will assume that our solution $X$ has cyclic symmetry.  Suppose that $n \mod (k+1) \equiv j$.  Now, consider dividing the cycle into sections of size $k+1$ and let $S$ be the set of vertices consisting of the first $k$ in each complete section ($|S| = k(n-j)/(k+1)$).  Then by decoding $X(S) = X([-j])$ where $[-j]$ is a slice of size $n-j$.  We will also use $[j]$ to denote a set of size $j$, as in the proof of Theorem~\svref{thm-3-reg-cayley}.  Observe that if $j = 0$ then this observation alone completes the proof.

\begin{lemma}
$(k+1)X[-j] + X[k] \ge (k+1)[-j-1] + X(\emptyset)$
\svlabel{lem:gen1k_helper}
\end{lemma}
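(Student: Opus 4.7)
The plan is to prove Lemma~\svref{lem:gen1k_helper} as an immediate consequence of two applications of the monotonicity constraint of the LP $\mathcal{B}_2$, leveraging the cyclic symmetry of $X$ that is already in force throughout the proof of Theorem~\svref{thm-circulant}. The lemma has the flavor of a telescoping step: it is designed to be summed over successive values of $j$, transforming a chain of slice values $X([-j]), X([-j-1]), X([-j-2]), \ldots$ into a cumulative bound that isolates $X(\emptyset)$ on one side and can then be combined with the decoding relation $X(S) = X([-j])$ established just before the lemma.

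Concretely, I would proceed as follows. By cyclic symmetry of $X$, the LP value $X(T)$ of a slice $T$ depends only on $|T|$, so I am free to pick any convenient representatives of the sizes $n-j$, $n-j-1$, and $k$ for $[-j]$, $[-j-1]$, and $[k]$ respectively. Choose $[-j-1]$ to be the sub-slice of $[-j]$ obtained by dropping a single endpoint vertex; this produces the containment $[-j-1] \subseteq [-j]$. The monotonicity constraint of $\mathcal{B}_2$ then yields
\[
X([-j]) \;\geq\; X([-j-1]),
\]
which, multiplied by $k+1$, gives $(k+1)X([-j]) \geq (k+1)X([-j-1])$.

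Next, the monotonicity constraint applied to the trivially nested pair $\emptyset \subseteq [k]$ gives $X([k]) \geq X(\emptyset)$. Summing this with the previous inequality produces exactly
\[
(k+1)X([-j]) + X([k]) \;\geq\; (k+1)X([-j-1]) + X(\emptyset),
\]
which is the statement of the lemma.

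The main obstacle, to the extent that one exists, is purely notational: one has to verify that the cyclic symmetry of $X$ is indeed in force when the lemma is applied, so that the reduction of $[-j]$ and $[-j-1]$ to nested slices is justified. This is already established earlier in the proof of Theorem~\svref{thm-circulant}, via the standard averaging argument over the $\Z_n$ action that was introduced in the proof of Theorem~\svref{thm-3-reg-cayley}. With cyclic symmetry in hand, the lemma is a short bookkeeping step whose real utility lies in packaging two monotonicity inequalities with the specific coefficients $(k+1)$ and $1$ that make the downstream telescoping argument collapse cleanly to the bound $b_2 \geq n/(k+1)$.
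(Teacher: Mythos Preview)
Your argument is technically valid for the inequality exactly as printed, but you have been tripped up by a typo in the lemma (repeated in the subsequent display). The intended statement is
\[
(k+1)X([-j]) + X([k]) \;\ge\; (k+1)X([-j+1]) + X(\emptyset),
\]
with the \emph{larger} slice $[-j+1]$ (size $n-j+1$) on the right, not the smaller $[-j-1]$. Three internal consistency checks confirm this: (a) the paper's own proof of the lemma sums $k$ submodularity inequalities each producing an $X([-j+1])$ term on the right, plus a final submodularity/decode step whose union (after decoding the single gap vertex) is again a slice of size $n-j+1$; (b) the last line of the telescoping chain, $(k+1)X([-1])+X([k])\ge (k+1)X(V)+X(\emptyset)$, is the lemma for $j=1$ with right side $[0]=V$, which only matches $[-j+1]$; (c) the whole purpose of the chain is to climb from $X([-j])$ \emph{up} to $X(V)$ in order to invoke $X(V)\ge n$.

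For the intended inequality your approach fails outright: monotonicity gives $X([-j+1])\ge X([-j])$, the wrong direction, so the step $(k+1)X([-j])\ge (k+1)X([-j+1])$ is false in general. More conceptually, an inequality derivable from monotonicity alone can never yield a nontrivial lower bound on $X(\emptyset)$; if you telescope your version downward and combine with slope you obtain only $2k(n-j)\ge 0$. The actual content of the lemma is precisely the use of submodularity (plus one decode) to trade $k+1$ copies of $X([-j])$ and one $X([k])$ for $k+1$ copies of the \emph{larger} $X([-j+1])$ and one $X(\emptyset)$, which is what makes the final bound $X(\emptyset)\ge n/(k+1)$ come out.
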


\begin{proof}
The following inequalities are true by submodularity and the cyclic symmetry of $X$.   In each inequality we apply submodularity to two slices, say of size $s$ and $t$, $s \le t$, overlapping such that their intersection is a slice of size $s-1$ and their union a slice of size $t+1$.\begin{align*}
X([-j])+X([-j]) &\ge X([-j+1]) + X([-j-1])\\
X([-j])+X([-j-1]) &\ge X([-j+1]) + X([-j-2])\\
X([-j])+X([-j-2]) &\ge X([-j+1]) + X([-j-3])\\
&~\vdots\\
X([-j])+X([-j-(k-1)]) &\ge X([-j+1]) + X([-j-k])\\
X([-j-k])+X([k]) & \ge X(\emptyset) + X([-j+1]) \qquad\qquad\mbox{(submod , decode).}
\end{align*}
Adding up all of these inequalities gives us the desired inequality.
\end{proof}

Now, if we sum together the following string of inequalities we get the bound we want on $X(\emptyset)$.  Essentially, we iteratively apply our Lemma to get us to the trivial $j=0$ case.
\begin{align*}
k(n-j) + (k+1)X(\emptyset) &\ge (k+1)X([-j]) & \mbox{(slope , decode)}\\
jk + jX(\emptyset) &\ge jX([k]) & \mbox{(slope)}\\
(k+1)X([-j]) + X([k]) & \ge (k+1)X([-j-1]) + X(\emptyset) & \mbox{(by Lemma~\svref{lem:gen1k_helper})} \\
(k+1)X([-j-1]) + X([k]) & \ge (k+1)X([-j-2]) + X(\emptyset) & \mbox{(by Lemma~\svref{lem:gen1k_helper})} \\
&~\vdots\\
(k+1)X([-1]) + X([k]) & \ge (k+1)X(V) + X(\emptyset) & \mbox{(by Lemma~\svref{lem:gen1k_helper})} \\
(k+1)X(V) &\ge kn\,.
\end{align*}
This completes the proof.
\end{proof}

\subsection{The broadcast rate of specific small graphs}
For any specific graph one can attempt to solve the second level of the LP-hierarchy directly to yield a possibly tight lower bound $\beta \geq b_2$. The following corollary lists a few examples obtained using an AMPL/CPLEX solver.
\begin{fact}\svlabel{cor-specific-graphs} The following graphs satisfy $b_2=\beta=\chibar_f$:
\begin{compactenum}
  [(1)]
  \item Petersen graph (Kneser graph on $\binom{5}{2}$ vertices): $n=10$, $\alpha = 4$ and $\beta = 5$.
  \item Gr\"{o}tzsch graph (smallest triangle-free graph with $\chi=4$): $n=11$, $\alpha=5$ and $\beta = \frac{11}2$.
\item Chvatal graph (smallest triangle-free $4$-regular graph with $\chi=4$): $n=12$, $\alpha=4$ and $\beta = 6$.
\end{compactenum}
\end{fact}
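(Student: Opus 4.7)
My plan is to invoke Theorem~\ref{thm-hierarchy}, which gives the sandwich $b_2(G) \leq \beta(G) \leq b_n(G) = \chibar_f(G)$, so it suffices, in each of the three cases, to exhibit a matching upper bound $\chibar_f(G)$ and to verify $b_2(G) \geq \chibar_f(G)$ by solving the LP $\mathcal{B}_2$.

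The upper bounds are easy. The Petersen graph and the Chv\'atal graph are vertex-transitive and triangle-free ($\omega=2$), so $\chibar_f = n/\omega$, giving $5$ and $6$ respectively. The Gr\"otzsch graph is triangle-free and Hamiltonian on $11$ vertices; assigning weight $1/2$ to each edge of a Hamilton cycle covers every vertex with total weight $1$ and has total weight $11/2$, while $\chibar_f \geq n/\omega = 11/2$ because every clique has at most $2$ vertices. Hence $\chibar_f(\text{Gr\"otzsch}) = 11/2$.

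For the lower bound I would solve $\mathcal{B}_2$ directly for each graph. The LP of Definition~\ref{def:bbLP} has one variable per subset of $V$, so on $n \leq 12$ vertices it has at most $2^{12} = 4096$ variables and polynomially many constraints per variable (slope, monotonicity, decode, second-order submodularity), well within reach of a solver such as AMPL/CPLEX, as indicated by the authors. To reduce the problem size (and to obtain a rigorous, not merely numerical, bound) I would exploit $\mathrm{Aut}(G)$: averaging any feasible $X$ over the action of $\mathrm{Aut}(G)$ on subsets yields an automorphism-invariant feasible solution of the same value (each constraint being the average of $|\mathrm{Aut}(G)|$ constraints), so without loss of generality $X$ is constant on orbits. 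This drastically shrinks the LP for Petersen ($|\mathrm{Aut}|=120$) and Chv\'atal, and still meaningfully for Gr\"otzsch.

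The main obstacle is converting the solver's numerical optimum into a human-verifiable certificate. I would extract from the dual an explicit non-negative rational combination of the $\mathcal{B}_2$ constraints whose telescoping sum yields $X(\emptyset) \geq \chibar_f(G)$, in the style of the proof-by-picture for $C_5$ in Figure~\ref{fig:5cycle} and the chains of inequalities for the cyclic Cayley graphs in the proofs of Theorems~\ref{thm-3-reg-cayley} and~\ref{thm-circulant}. Once such a clean chain is in hand, verification reduces to mechanical arithmetic; the difficulty lies entirely in pruning the raw dual into a concise presentation, since a priori it may involve hundreds of small dual weights whose cancellations have to be tracked carefully.
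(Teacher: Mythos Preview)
Your approach is exactly the paper's: the paper presents this as a Fact ``obtained using an AMPL/CPLEX solver'' with no further argument, so solving $\mathcal{B}_2$ directly is precisely what was done, and your suggestions to symmetrize over $\mathrm{Aut}(G)$ and to extract a rational dual certificate go a step beyond what the paper records.

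One small correction to your upper-bound computations: the Chv\'atal graph is \emph{not} vertex-transitive---its automorphism group is dihedral of order $8$, which does not act transitively on the $12$ vertices---so you cannot conclude $\chibar_f = n/\omega$ from transitivity. The conclusion $\chibar_f = 6$ is nonetheless correct: the graph is Hamiltonian (indeed $4$-regular and bridgeless, so it also has a perfect matching), and the same Hamilton-cycle argument you used for the Gr\"otzsch graph gives $\chibar_f \leq 6$, matching the trivial lower bound $n/\omega = 6$ for a triangle-free graph.
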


%%%%%%%%%%%%%%%%%%%%%%%%%%%%%%%%%%%%%%%%%%%%%%%%%%%%%%%%%%%%%%%%
%%% Coverage
%%%%%%%%%%%%%%%%%%%%%%%%%%%%%%%%%%%%%%%%%%%%%%%%%%%%%%%%%%%%%%%%

\section{Coverage functions: a proof of Lemma~\svref{lem:coverage-functions}}\svlabel{sec:ap-coverage}
Lemma~\svref{lem:coverage-functions} (\S~\svref{sec:hierarchy-proof}) will readily follow from establishing the following Lemmas~\svref{lem:coverage1} and \svref{lem:coverage2}, as it is easy to verify that the slope constraints and the i-th order submodularity constraints in our LP are equivalent to the inequalities in Eq.~\sveqref{eq:almost-LP}.

%\begin{fact}[Fact \svref{fact*} restated]

%A vector $X$ satisfies the slope constraint and the i-th order submodularity constraints for $i \in [2,n]$ if and only if there exists a vector of non-negative numbers $w(T)$, defined for every non-empty vertex set $T$, such that $X(S) = |S| + \sum_{T: T \not \subseteq S} w(T)$ $\forall S \subseteq V$.
%\end{fact}

\begin{lemma}
A vector $X$, indexed over all subsets of the groundset $V$, satisfies
\begin{equation} \svlabel{eq:almost-LP}
\forall R \ne \emptyset, \forall Z \cap R = \emptyset, \;\;
\sum_{T \subseteq R} (-1)^{|R\setminus T|} X(T \cup Z) \le \left\{\begin{array}{ll}1 & \mbox{if $|R|=1$} \\
    0 & \mbox{otherwise}
\end{array}\right.
\end{equation}
if and only if it satisfies:
\begin{equation} \svlabel{eq:almost-coverage}
\forall R \subseteq \groundset, R \neq \emptyset, \;\;
\sum_{T \subseteq R} (-1)^{|T|} (X(R \setminus T) - |R\setminus T|) \le 0\,.
\end{equation}
\svlabel{lem:coverage1}
\end{lemma}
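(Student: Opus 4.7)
The plan is to reduce both~\eqref{eq:almost-LP} and~\eqref{eq:almost-coverage} to a single statement about the Möbius transform of the shifted vector $Y(S) \deq X(S) - |S|$, and then check that condition~\eqref{eq:almost-LP} is equivalent to a family of nonpositivity constraints on that transform, of which~\eqref{eq:almost-coverage} is the simplest sub-family and, in fact, implies all the others.

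First I would do a purely algebraic rewriting. Fix $R \ne \emptyset$ and $Z$ with $Z\cap R = \emptyset$, and substitute $X(T\cup Z) = Y(T\cup Z) + |T| + |Z|$ into the left-hand side of~\eqref{eq:almost-LP}. Since $\sum_{T\subseteq R}(-1)^{|R\setminus T|}$ vanishes, the $|Z|$-term contributes nothing. A short computation with $k\binom{r}{k}=r\binom{r-1}{k-1}$ shows that $\sum_{T\subseteq R}(-1)^{|R\setminus T|}|T|$ equals $1$ if $|R|=1$ and $0$ otherwise, which exactly matches the right-hand side of~\eqref{eq:almost-LP}. Hence~\eqref{eq:almost-LP} is equivalent to
\[
\sum_{T\subseteq R}(-1)^{|R\setminus T|}\,Y(T\cup Z) \le 0 \qquad \forall\,R\ne\emptyset,\ \forall\,Z\cap R = \emptyset. \tag{$\ast$}
\]
The same kind of change of variable ($T\mapsto R\setminus T$) rewrites~\eqref{eq:almost-coverage} as
\[
\sum_{T\subseteq R}(-1)^{|R\setminus T|}\,Y(T) \le 0 \qquad \forall\,R\ne\emptyset, \tag{$\ast\ast$}
\]
which is exactly the $Z=\emptyset$ specialization of $(\ast)$. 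So one direction --- that~\eqref{eq:almost-LP} implies~\eqref{eq:almost-coverage} --- is immediate.

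For the converse, I would introduce the Möbius transform $w(U) \deq \sum_{T\subseteq U}(-1)^{|U\setminus T|}Y(T)$, so that by Möbius inversion $Y(S) = \sum_{U\subseteq S} w(U)$ and $(\ast\ast)$ is just $w(U)\le 0$ for every nonempty $U$. Plugging this inversion formula into the left-hand side of $(\ast)$ and switching the order of summation, the inner sum over $T\subseteq R$ with $U\subseteq T\cup Z$ amounts to $\sum_{T'\subseteq R\setminus U_R}(-1)^{|R\setminus U_R\setminus T'|}$, where $U_R = U\cap R$ and $T = U_R \cup T'$; this vanishes unless $U_R = R$, in which case it equals $1$. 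The upshot is the clean identity
\[
\sum_{T\subseteq R}(-1)^{|R\setminus T|}\,Y(T\cup Z) \;=\; \sum_{U_Z\subseteq Z} w(R\cup U_Z),
\]
and since $R\ne\emptyset$ every set $R\cup U_Z$ on the right is nonempty. By hypothesis each $w(R\cup U_Z)\le 0$, so the whole sum is $\le 0$, establishing $(\ast)$.

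The main obstacle is just bookkeeping the Möbius/inclusion-exclusion manipulation correctly, in particular verifying that the double sum collapses to $\sum_{U_Z\subseteq Z}w(R\cup U_Z)$; once this identity is in hand, the equivalence is transparent because both $(\ast)$ and $(\ast\ast)$ are visibly equivalent to the single statement ``$w$ is non-positive on all nonempty sets.''
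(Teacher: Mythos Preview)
Your proof is correct. The reduction of both \eqref{eq:almost-LP} and \eqref{eq:almost-coverage} to the inequalities $(\ast)$ and $(\ast\ast)$ for the shifted function $Y$ is essentially what the paper does as well (the paper calls the $Z=\emptyset$ case ``\eqref{eq:middle}'' and obtains it by the same change of variable and alternating-sum computation).

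The genuine difference is in the converse direction. The paper proves that $(\ast\ast)$ implies $(\ast)$ by a short induction on $|Z|$: for $z\in Z^*$, the inequality for $(R^*,Z^*)$ is literally the sum of the inequalities for $(R^*\cup\{z\},\,Z^*\setminus\{z\})$ and $(R^*,\,Z^*\setminus\{z\})$, both of which hold by the inductive hypothesis. You instead pass to the M\"obius transform $w$ and derive the closed-form identity $\sum_{T\subseteq R}(-1)^{|R\setminus T|}Y(T\cup Z)=\sum_{U_Z\subseteq Z}w(R\cup U_Z)$, from which $(\ast)$ is an immediate consequence of $w\le 0$ on nonempty sets. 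Your argument is a bit heavier in machinery but more structural: it makes transparent that both $(\ast)$ and $(\ast\ast)$ are equivalent to the single condition ``$w\le 0$ on nonempty sets,'' and in fact it anticipates the content of the next lemma in the paper (Lemma~\svref{lem:coverage2}), where exactly this M\"obius/weighted-coverage representation is extracted via Kronecker-product matrix algebra. The paper's induction, by contrast, is self-contained and avoids introducing $w$ at this stage.
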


\begin{lemma}
A vector $X$, indexed over all subsets of the ground-set $V$, satisfies \sveqref{eq:almost-coverage} if and only if there exists a vector of non-negative numbers $w(T)$, defined for every non-empty vertex set $T$, such that $X(S) = |S| + \sum_{T: T \not \subseteq S} w(T)$ $\forall S \subseteq V$.
\svlabel{lem:coverage2}
\end{lemma}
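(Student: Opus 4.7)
The plan is to prove both directions via Möbius inversion on the Boolean lattice $(2^V,\subseteq)$. Throughout, write $g(S)\deq X(S)-|S|$, so that \sveqref{eq:almost-coverage} reads $\sum_{T\subseteq R}(-1)^{|T|}g(R\setminus T)\le 0$ for every non-empty $R$. The key fact, used in both directions, is the collapse identity: under the representation $X(S)=|S|+\sum_{T\not\subseteq S}w(T)$, the left-hand side of \sveqref{eq:almost-coverage} is identically equal to $-w(R)$ for every non-empty $R\subseteq V$; the biconditional is then immediate once both sides of this identity have been matched up.

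For the ``if'' direction, assume the representation holds with $w(T)\ge 0$ for every non-empty $T$. I would substitute $g(S)=\sum_{U\,:\,U\not\subseteq S}w(U)$ into the LHS of \sveqref{eq:almost-coverage}, swap the order of summation, and extract the coefficient
\[
c(R,U)\;=\;\sum_{T\subseteq R\,:\,U\not\subseteq R\setminus T}(-1)^{|T|}
\]
of each weight $w(U)$. A case split handles this. If $U\not\subseteq R$, then $U\not\subseteq R\setminus T$ holds automatically for every $T\subseteq R$, so $c(R,U)=\sum_{T\subseteq R}(-1)^{|T|}=0$ (since $R\ne\emptyset$). If $U\subseteq R$, the constraint $U\not\subseteq R\setminus T$ becomes $U\cap T\ne\emptyset$, and inclusion--exclusion gives $c(R,U)=-\sum_{T\subseteq R\setminus U}(-1)^{|T|}$, which vanishes unless $R=U$ and equals $-1$ in that case. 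Therefore the whole LHS collapses to $-w(R)\le 0$, establishing \sveqref{eq:almost-coverage}.

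For the ``only if'' direction, I would run the same computation in reverse: define the candidate weights by
\[
w(R)\;\deq\;-\sum_{T\subseteq R}(-1)^{|T|}\,g(R\setminus T)\qquad(R\ne\emptyset),
\]
which are non-negative exactly by hypothesis \sveqref{eq:almost-coverage}. These $w(R)$'s are designed so that standard Möbius inversion on $(2^V,\subseteq)$ applied to $h(S)\deq g(\emptyset)-g(S)$ (note $h(\emptyset)=0$) yields the partial-sum identity
\[
\sum_{\emptyset\ne T\subseteq S}w(T)\;=\;g(\emptyset)-g(S)\qquad\forall\,S\subseteq V.
\]
Specializing this identity at $S=V$ identifies the total weight $\sum_{T\ne\emptyset}w(T)$ with $g(\emptyset)$, whereupon
\[
\sum_{T\not\subseteq S}w(T)\;=\;\sum_{T\ne\emptyset}w(T)\;-\;\sum_{\emptyset\ne T\subseteq S}w(T)\;=\;g(S),
\]
and adding $|S|$ to both sides recovers the target representation $X(S)=|S|+\sum_{T\not\subseteq S}w(T)$.

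The main obstacle is the sign and index bookkeeping in computing the coefficient $c(R,U)$ in the ``if'' direction; once that inclusion--exclusion case analysis is tabulated, the ``only if'' direction follows from the same underlying relation by standard Möbius inversion on the Boolean lattice, with the two specializations $S=V$ and general $S$ of the partial-sum identity doing all of the work.
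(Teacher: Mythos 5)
Your proof is correct in substance and, despite the different packaging, performs the same underlying computation as the paper. The paper passes to $F(S)=X(\overline{S})-|\overline{S}|$, writes the desired representation as $F=Mw$ for the $0$--$1$ matrix $M$ with $M_{TS}=1$ iff $T\cap S\neq\emptyset$ (indexed by non-empty sets), and explicitly inverts $M$ via Kronecker powers of $2\times 2$ matrices, obtaining $(M^{-1})_{TS}=(-1)^{|T\cap S|}$ when $T\cup S=V$ and $0$ otherwise; non-negativity of $w=M^{-1}F$ is then read off as \sveqref{eq:almost-coverage}. Your coefficient computation $c(R,U)=-\one[U=R]$ is precisely a direct inclusion--exclusion verification that the alternating-sum operator inverts the ``$U\not\subseteq R\setminus T$'' summation operator, i.e.\ it re-derives $M^{-1}M=-I$ without the complementation trick or the Kronecker machinery, and your ``only if'' direction is the corresponding M\"obius inversion with the closed form $w(R)=-\sum_{T\subseteq R}(-1)^{|T|}\bigl(X(R\setminus T)-|R\setminus T|\bigr)$ made explicit. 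This is arguably the cleaner route.

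One step deserves a flag, although the paper's own proof shares the issue. In the ``only if'' direction you specialize the partial-sum identity at $S=V$ to conclude $\sum_{T\neq\emptyset}w(T)=g(\emptyset)$; what the identity actually gives is $\sum_{T\neq\emptyset}w(T)=g(\emptyset)-g(V)$, so your final step yields $\sum_{T\not\subseteq S}w(T)=g(S)-g(V)$ rather than $g(S)$. The hypothesis \sveqref{eq:almost-coverage} does not force $g(V)=X(V)-|V|$ to vanish: for instance $X\equiv 0$ satisfies every inequality in \sveqref{eq:almost-coverage} (the left side equals $-1$ for $|R|=1$ and $0$ otherwise), yet it admits no representation of the stated form, since any such representation forces $X(V)=|V|$. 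Thus the lemma as literally stated holds only under the additional normalization $X(V)=|V|$ (equivalently, if the representation is demanded only for $S\neq V$). The paper makes the identical silent assumption when it discards the empty-set component of $F$ before writing $F=Mw$, so this is not a defect of your argument relative to the paper's; but you should state the normalization $X(V)=|V|$ explicitly at the point where you invoke $S=V$, since that is exactly where it is used.
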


\begin{proof}[Proof of Lemma~\svref{lem:coverage1}]
First, we claim that $X$ satisfies \sveqref{eq:almost-coverage} if and only if it satisfies:
\begin{equation} \svlabel{eq:middle}
\forall R \subseteq \groundset, R \neq \emptyset, \;\;
\sum_{T \subseteq R} (-1)^{|R \setminus T|} X(T) \le \left\{\begin{array}{ll}1 & \mbox{if $|R|=1$}\,, \\
    0 & \mbox{otherwise.}
\end{array}\right.
\end{equation}

Starting with the inequalities \sveqref{eq:almost-coverage}, observe that we get an equivalent set of inequalities when we switch the roles of $T$ and $R \setminus T$, as it is essentially summing over the complements of $T$ instead of $T$.    Additionally, for $|R| \ge 2$ we can remove the constant term because it is equal to the alternating sum $ \pm\sum_{i = 1}^k (-1)^k \binom{|R|}{k}k = 0$.
%\Anote{I haven't tried to prove this.  It is true according to wikipedia by the theory of finite differences.  Can I just state the fact?  Should I reference it?}
If $|R| = 1$ then the constant term is one.

Now, we show the equivalence of \sveqref{eq:middle} and \sveqref{eq:almost-LP}.  Clearly, if $X$ satisfies \sveqref{eq:almost-LP} then it satisfies \sveqref{eq:middle} because the inequalities in the latter are a subset of the inequalities in the former.  Now, we show by induction on the size of $Z$ that~\sveqref{eq:middle} implies \sveqref{eq:almost-LP}. Our base case, $|Z| = 0$ holds trivially.  We assume that~\sveqref{eq:middle} implies~\sveqref{eq:almost-LP} for $|Z| <|Z^*|$ and show the following inequality holds:
\[\sum_{T \subseteq R^*} (-1)^{|R^*\setminus T|} X(T \cup Z^*) \le \left\{\begin{array}{ll}1 & \mbox{if $|R|=1$} \\
    0 & \mbox{otherwise}
\end{array}\right.\tag{$\star$}\]
By our inductive hypothesis, Eq.~\sveqref{eq:middle} implies the following two inequalities from~\sveqref{eq:almost-LP}:
\begin{align*}
R &= R^* \cup \{z\}\,,\quad Z = Z^* \setminus \{z\} \tag{I}\,,\\
R &= R^*\,,\qquad Z = Z^* \setminus \{z\} \tag{II}
\end{align*} for some $z \in Z^*$.  It is easy to see that $(\star)-(\mathrm{II}) = (\mathrm{I})$, thus we can derive ($\star$) from $(\mathrm{I}),(\mathrm{II})$.
\end{proof}

\noindent\emph{Proof of Lemma~\svref{lem:coverage2}.}
Suppose there exists a vector of non-negative numbers $w(T)$, defined for every non-empty vertex set $T$, such that $X(S) = |S| + \sum_{T: T \not \subseteq S} w(T)$ $\forall S \subseteq V$ as in the statement of our Lemma.  Then rearranging, we have:

$$X(\overline{S})-|\overline{S}| = \sum_{T: T \not \subseteq \overline{S}} w(T) = \sum_{T:T \cap S \ne \emptyset} w(T)  \;\;\forall \overline{S} \subseteq V$$

Now, define $F(S) = X(\overline{S}) - |\overline{S}|$.
\begin{lemma}
The set function $F$ satisfies
\begin{equation} \svlabel{eq:F_almost-coverage}
\forall R \subseteq \groundset, R \neq \emptyset, \;\;
\sum_{T \subseteq R} (-1)^{|T|} \fcn(\overline{R} \cup T)
\leq 0.
\end{equation}
if and only if there exists a vector of non-negative numbers $w(T)$, defined for every non-empty vertex set $T$, such that
\begin{equation}
  \svlabel{eq:set-cover-fn}
F(S) = \sum_{T:T \cap S \ne \emptyset} w(T)  \;\;\forall S \subseteq V.
\end{equation}
\svlabel{lem:coverage-helper}
\end{lemma}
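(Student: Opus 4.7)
The plan is to apply M\"obius inversion on the Boolean lattice. The ``if'' direction is a direct calculation: assume $F(S)=\sum_{T:\,T\cap S\neq\emptyset}w(T)$ with $w\geq 0$, substitute into $\sum_{T\subseteq R}(-1)^{|T|}F(\overline{R}\cup T)$ and swap the order of summation. For each $U$, the coefficient of $w(U)$ is
\[
\sum_{T\subseteq R:\ U\cap(\overline{R}\cup T)\neq\emptyset}(-1)^{|T|}.
\]
I would split according to whether $U\cap\overline{R}\neq\emptyset$ (in which case the intersection condition is automatic and the coefficient reduces to $\sum_{T\subseteq R}(-1)^{|T|}=0$, using $R\neq\emptyset$) or $U\subseteq R$ (in which case the condition becomes $U\cap T\neq\emptyset$ and the identity $\sum_{T\subseteq R\setminus U}(-1)^{|T|}=\mathbf{1}\{R\setminus U=\emptyset\}$ yields coefficient $-\mathbf{1}\{U=R\}$). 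Hence the LHS equals $-w(R)\leq 0$, establishing \sveqref{eq:F_almost-coverage}.

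For the ``only if'' direction, the preceding calculation forces the candidate
\[
w(U)\;:=\;-\sum_{T\subseteq U}(-1)^{|T|}F(\overline{U}\cup T)\qquad(U\neq\emptyset),
\]
and the hypothesis \sveqref{eq:F_almost-coverage} specialized to $R=U$ is precisely $w(U)\geq 0$. What remains is to verify the coverage identity $F(S)=\sum_{U\cap S\neq\emptyset}w(U)$. Substituting and exchanging sums, then reindexing via $X:=U\setminus T$, $Y:=T$ (so $U=X\sqcup Y$, $\overline{U}\cup T=V\setminus X$), and setting $A:=V\setminus X$, the double sum becomes $-\sum_A F(A)\,c(A,S)$ with
\[
c(A,S)\;=\;\sum_{Y\subseteq A}(-1)^{|Y|}\,\mathbf{1}\{(\overline{A}\cup Y)\neq\emptyset,\ (\overline{A}\cup Y)\cap S\neq\emptyset\}.
\]
A short case split on whether $S\subseteq A$, combined with the identity $\sum_{Y\subseteq B}(-1)^{|Y|}=\mathbf{1}\{B=\emptyset\}$, collapses this to $c(A,S)=\mathbf{1}\{A=\emptyset\}-\mathbf{1}\{A=S\}$, so the sum telescopes to $F(S)-F(\emptyset)$.

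This leaves one technical point: the identity holds up to the additive constant $F(\emptyset)$, so strictly speaking one needs the normalization $F(\emptyset)=0$. In the ambient setting of Lemma~\svref{lem:coverage-functions} this is harmless, since adding a constant to $X(\cdot)$ preserves both the slope constraints and the $i$-th order submodularity constraints for $i\geq 2$, and $F(\emptyset)=X(V)-|V|$, so one may assume $X(V)=n$ without loss of generality. The main obstacle I anticipate is the bookkeeping in the only-if direction: the reindexing $(U,T)\leftrightarrow(X,Y)$ must simultaneously respect $T\subseteq U$, $U\neq\emptyset$, and $U\cap S\neq\emptyset$, and the boundary case $U=\emptyset$ (equivalently $A=V,\ Y=\emptyset$) must be excluded consistently in both expressions. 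Once that bookkeeping is in place, the collapse of $c(A,S)$ is a routine two-line M\"obius computation.
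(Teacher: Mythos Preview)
Your proof is correct and takes a genuinely different route from the paper's. The paper casts the equivalence as a linear-algebra statement: writing the relation $F = Mw$ (over nonempty subsets) with $M_{TS} = \mathbf{1}\{T \cap S \neq \emptyset\}$, it observes that the complementary disjointness matrix $K$ with $K_{TS}=\mathbf{1}\{T\cap S=\emptyset\}$ is the $n$-fold Kronecker power of $\left(\begin{smallmatrix}1&1\\1&0\end{smallmatrix}\right)$, inverts it explicitly, and then reads off that the entries of $M^{-1}F$ are exactly the alternating sums appearing in \sveqref{eq:F_almost-coverage}; hence $w = M^{-1}F \geq 0$ is equivalent to the hypothesis. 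Your approach computes the same inverse by hand via inclusion--exclusion on the Boolean lattice: the ``if'' direction directly evaluates the alternating sum to $-w(R)$, and the ``only if'' direction reconstructs $w$ from this formula and verifies the coverage identity by the change of variables $(U,T)\leftrightarrow(A,Y)$. Your argument is more elementary (no Kronecker machinery) and makes the M\"obius structure transparent; the paper's is more systematic and yields invertibility of $M$ as a byproduct. Both proofs share the normalization issue you flag --- the paper silently drops the $\emptyset$-coordinate when passing to $(N-1)$-dimensional vectors, which amounts to the same implicit assumption $F(\emptyset)=0$ --- and your remark that this is harmless for the intended application to Lemma~\svref{lem:coverage-functions} (since one may shift $X$ by a constant) is exactly right.
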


\begin{remark}
A set function $F$ is called a weighted set cover function if it can be written as in Eq.~\sveqref{eq:set-cover-fn}.
\end{remark}

Plugging in $X(\overline{S})-|\overline{S}|$ for $F(S)$ and noting that $\overline{\overline{R} \cup T} = R \setminus T$ it is easy to see that Lemma~\svref{lem:coverage-helper} implies our desired result.  Thus, it remains to prove Lemma~\svref{lem:coverage-helper}.

\begin{proof}[Proof of Lemma~\svref{lem:coverage-helper}]
In this proof we will be working with vectors and matrices whose rows
and columns are indexed by subsets of $\groundset$.  Let $n=|\groundset|, N = 2^n$.  Expressing $\fcn$ and $w$ as vectors with $N-1$ components
(ignoring the component corresponding to the empty set),
this equation can be written in matrix form as
\[
\fcn = M w,
\]
where $M$ is the $(N-1)$-by-$(N-1)$ matrix defined by
\[
M_{TS} = \begin{cases}1 & \mbox{if } T \cap S \neq \emptyset \\
0 & \mbox{otherwise.} \end{cases}
\]
We shall see below that $M$ is invertible.
It follows that $\fcn$ can be written as in Eq.~\sveqref{eq:set-cover-fn} if and only if $M^{-1} \fcn$ is a vector $w$ with non-negative
components.

To prove that $M$ is invertible and to obtain
a formula for the entries of the inverse matrix,
let $L$ be the $N$-by-$N$ matrix defined by
\[
L_{TS} = \begin{cases}1 & \mbox{if } T \cap S \neq \emptyset \\
0 & \mbox{otherwise.} \end{cases}
\]
In other words, $L$ is the matrix obtained from $M$
by adding a top row and a left column consisting
entirely of zeros.  Let us define another matrix
$K$ by
\[
K_{TS} = 1 - L_{TS} =
\begin{cases}1 & \mbox{if } T \cap S = \emptyset \\
0 & \mbox{otherwise.} \end{cases}
\]
We can now begin to make progress on inverting
these matrices, using the observation that both
$K$ and $K+L$ can be represented as Kronecker
products of $2$-by-$2$ matrices.  Specifically,
\begin{align*}
K &= \left( \begin{array}{rr} 1 & 1 \\ 1 & 0 \end{array}
\right)^{\otimes n} ~,\quad
 K+L = \left( \begin{array}{rr} 1 & 1 \\ 1 & 1 \end{array}
\right)^{\otimes n}.
\end{align*}
The inverse of $\left(\begin{smallmatrix} 1 & 1 \\ 1 & 0\end{smallmatrix}\right)$
is $\left(\begin{smallmatrix} 0 & 1 \\ 1 & -1\end{smallmatrix}\right)$.
%$\left( \begin{array}{rr} 1 & 1 \\ 1 & 0 \end{array} \right)$
%is $\left( \begin{array}{rr} 0 & 1 \\ 1 & -1 \end{array} \right)$.
We may now make use of the fact that Kronecker
products commute with matrix products, to deduce that
\begin{align*}
L \left( \begin{array}{rr} 0 & 1 \\ 1 & -1 \end{array}
\right)^{\otimes n} &=
(K+L)  \left( \begin{array}{rr} 0 & 1 \\ 1 & -1 \end{array}
\right)^{\otimes n} -
K  \left( \begin{array}{rr} 0 & 1 \\ 1 & -1 \end{array}
\right)^{\otimes n} \\
&=
\left[
\left( \begin{array}{rr} 1 & 1 \\ 1 & 1 \end{array} \right)
\left( \begin{array}{rr} 0 & 1 \\ 1 & -1 \end{array} \right)
\right]^{\otimes n} -
\left[
\left( \begin{array}{rr} 1 & 1 \\ 1 & 0 \end{array} \right)
\left( \begin{array}{rr} 0 & 1 \\ 1 & -1 \end{array} \right)
\right]^{\otimes n} \\
&=
\left( \begin{array}{rr} 1 & 0 \\ 1 & 0 \end{array} \right)^{\otimes n}
-
\left( \begin{array}{rr} 1 & 0 \\ 0 & 1 \end{array}
\right)^{\otimes n}.
\end{align*}
Examine the matrices occurring on the
left and right sides of the equation
above, and consider the submatrix obtained
by deleting the left column and top row.
On the right side, we obtain $-I$, where $I$
denotes the $(N-1)$-by-$(N-1)$ identity matrix.
On the left side we obtain $M \cdot A$, where
$A$ is the matrix obtained from
%$\left( \begin{array}{rr} 0 & 1 \\ 1 & -1 \end{array}\right)^{\otimes n}$
$\left(\begin{smallmatrix} 0 & 1 \\ 1 & -1\end{smallmatrix}\right)^{\otimes n}$
by deleting the left
column and top row.  This implies that
$M$ is invertible and its inverse is $-A$.
Moreover, one can verify that the
entries of $-A$ are given by
\[
-A_{TS} = \begin{cases}
0 & \mbox{if } T \cup S \neq \groundset \\
(-1)^{|T \cap S|} & \mbox{if } T \cup S = \groundset.
\end{cases}
\]

Recall that a set function $\fcn$ can be
expressed as it is in Eq.~\sveqref{eq:set-cover-fn}
if and only if $M^{-1} \fcn$ has non-negative
entries.  Now that we have derived an expression
for $M^{-1}$ we find that this criterion is
equivalent to stating that for all nonempty sets
$R \subseteq \groundset$,
\[
\sum_{S \,:\, T \cup S = \groundset}
(-1)^{|T \cap S|} \fcn(S) \leq 0.
\]
This condition is equivalent to Eq.~\sveqref{eq:almost-coverage} because every set $S$ such that $T \cup S = \groundset$
can be uniquely written as the disjoint union of
two sets $\overline{T}$ and $R = T \cap S.$
%\Anote{To get \svref{eq:almost-coverageF} we then have to do a variable switch between $R$ and $T$.  This isn't great, but on the other hand, the way it stands we always index over $T$.}
This completes the proof of Lemma~\svref{lem:coverage-helper} and subsequently proves Lemmas~\svref{lem:coverage2} and~\svref{lem:coverage-functions}.
\end{proof}

%%%%%%%%%%%%%%%%%%%%%%%%%%%%%%%%%%%%%%%%%%%%%%%%%%%%%%%%%%%%%%%%
%%% Conclusion
%%%%%%%%%%%%%%%%%%%%%%%%%%%%%%%%%%%%%%%%%%%%%%%%%%%%%%%%%%%%%%%%
%%%%%%%%%%%%%%%%%%%%%%%%%%%%%%%%%%%%%%%%%%%%
\section{Open problems}\svlabel{sec:conclusion}
%%%%%%%%%%%%%%%%%%%%%%%%%%%%%%%%%%%%%%%%%%%%
\begin{compactitem}
 % \item In this work we studied the broadcast rate $\beta$ for the Index Coding problem, which corresponds to the vector capacity and is the key parameter in terms of applications both in practice and for Network Coding.
  %     In contrast to previous approaches that analyzed this parameter by comparing it to other graph-theoretic parameters, here we provide new lower and upper bounds for $\beta$ based on a hierarchy of information-theoretic linear programs.
\item  We provide an information-theoretic lower bound $b_2$ on the broadcast rate $\beta$, enabling us to answer fundamental questions about the behavior of $\beta$.  While one can have $b_2 < \beta$, what is the largest possible gap between the two parameters? Recalling that the linear program for $b_2$ contains exponentially many constraints, is there an efficient algorithm for computing $b_2$?
\item Our results include a polynomial time algorithm for determining whether $\beta=2$ for any broadcasting network. A major open problem is establishing the hardness of determining whether $\beta < C$ for a given graph $G$ and real $C>0$. While no such hardness result is known, presumably this problem is extremely difficult e.g.\ it is unclear whether it is even decidable.
\item In an effort to approximate $\beta$, we give an efficient multiplicative $o(n)$-approximation algorithm for the general broadcasting problem. Can we obtain an approximation of $\beta$ (even for case of undirected graphs) within a multiplicative constant of $n^{1-\epsilon}$ for some fixed $\epsilon > 0$?
\item Using certain projective-Hadamard graphs introduced by Erd\H{o}s and R\'enyi, we show that the broadcast rate can be uniformly bounded while its upper bound $b_n$ is polynomially large. Is the scalar capacity $\beta_1$ of these graphs unbounded as the field characteristic $q$ tends to $\infty$?

% \item The comparison of nonlinear and linear Network Coding capacities has been studied in various recent papers. As a by-product of our results on Index Coding capacities, here we construct a Network Coding instance that improves the previously best known ratio between these capacities from $1.32$ into $\frac32-\epsilon$ for any small $\epsilon>0$.

% Achieving an unbounded ratio between the two remains an intriguing open problem.

%  \item The comparison of nonlinear and linear Network Coding capacities has been studied in various recent papers. As a by-product of our results on Index Coding capacities, here we construct a Network Coding instance that improves the previously best known ratio between these capacities from $1.32$ into $\frac32-\epsilon$ for any small $\epsilon>0$.
%      Achieving an unbounded ratio between the two remains an intriguing open problem.
\end{compactitem}

\bigskip
\noindent \textbf{Acknowledgment.} We thank Noga Alon for useful discussions.
%\newpage
%\pagebreak \setcounter{page}{1}

%%%%%%%%%%%%%%%%%%%%%%%%%%%%%%%%%%%%%%%%%%%%%%%%%%%%%%%%%%%%%%%%
%%% Bibliography
%%%%%%%%%%%%%%%%%%%%%%%%%%%%%%%%%%%%%%%%%%%%%%%%%%%%%%%%%%%%%%%%
%%%%%%%%%%%%%%%%%%%%%%%%%%%%%%%%%%%%%%%%%%%%

\begin{bibdiv}
\begin{biblist}[\normalsize]

\bib{Abbott}{article}{
   author={Abbott, H. L.},
   author={Williams, E. R.},
   title={Lower bounds for some Ramsey numbers},
   journal={J. Combinatorial Theory Ser. A},
   volume={16},
   date={1974},
   pages={12--17},
%   review={\MR{0327535 (48 \#5877)}},
}

\bib{AHJKL}{article}{
   author={Adler, Micah},
   author={Harvey, Nicholas J. A.},
   author={Jain, Kamal},
   author={Kleinberg, Robert},
   author={Lehman, April Rasala},
   title={On the capacity of information networks},
   conference={
      title={Proc.\ of the 17th Annual ACM-SIAM Symposium on Discrete Algorithms (SODA 2006)},
   },
%   book={
%      publisher={ACM},
%      place={New York},
%   },
%   date={2006},
   pages={241--250},
%   review={\MR{2368818}},
%   doi={10.1145/1109557.1109585},
}

\bib{ACLY}{article}{
   author={Ahlswede, Rudolf},
   author={Cai, Ning},
   author={Li, Shuo-Yen Robert},
   author={Yeung, Raymond W.},
   title={Network information flow},
   journal={IEEE Trans. Inform. Theory},
   volume={46},
   date={2000},
%   number={4},
   pages={1204--1216},
%   issn={0018-9448},
%   review={\MR{1768542 (2001g:90013)}},
%   doi={10.1109/18.850663},
}

\bib{AHLSW}{inproceedings}{
   author={Alon, Noga},
   author={Hassidim, Avinatan},
   author={Lubetzky, Eyal},
   author={Stav, Uri},
   author={Weinstein, Amit},
   title={Broadcasting with side information},
   conference={
      title={Proc.\ of the 49th Annual IEEE Symposium on Foundations of Computer Science (FOCS 2008)},
   },
   pages={823--832},
}

\bib{AKa}{article}{
   author={Alon, Noga},
   author={Kahale, Nabil},
   title={Approximating the independence number via the $\theta$-function},
   journal={Math. Programming},
   volume={80},
   date={1998},
   number={3, Ser. A},
   pages={253--264},
%   issn={0025-5610},
%   review={\MR{1603356 (99b:90142)}},
%   doi={10.1007/BF01581168},
}

\bib{AK}{article}{
   author={Alon, Noga},
   author={Krivelevich, Michael},
   title={Constructive bounds for a Ramsey-type problem},
   journal={Graphs Combin.},
   volume={13},
   date={1997},
   number={3},
   pages={217--225},
%   issn={0911-0119},
%   review={\MR{1469821 (98h:05136)}},
}

\bib{BBJK}{article}{
   author={Bar-Yossef, Z.},
   author={Birk, Y.},
   author={Jayram, T.S.},
   author={Kol, T.},
   title={Index coding with side information},
   booktitle={Proc.\ of the 47th Annual IEEE Symposium on Foundations of Computer Science (FOCS 2006)},
   pages={197--206},
}

\bib{BK}{article}{
   author={Birk, Y.},
   author={Kol, T.},
   title={Coding-on-demand by an informed source (ISCOD) for efficient broadcast of different supplemental data to caching clients},
   journal={IEEE Trans. Inform. Theory},
   volume={52},
   date={2006},
   pages={2825--2830},
   note={An earlier version appeared in INFOCOM 1998},
}

\bib{BKL11a}{article}{
  author={Blasiak, A.},
  author={Kleinberg, R.},
  author={Lubetzky, E.},
  title={Lexicographic Products and the Power of Non-Linear Network Coding},
  booktitle={Proc.\ of the 52nd Annual IEEE Symposium on Foundations of Computer Science (FOCS 2011)},
  note={To appear},
}

\bib{BH}{article}{
   author={Boppana, Ravi},
   author={Halld{\'o}rsson, Magn{\'u}s M.},
   title={Approximating maximum independent sets by excluding subgraphs},
   journal={BIT},
   volume={32},
   date={1992},
   number={2},
   pages={180--196},
%   issn={0006-3835},
%   review={\MR{1172185 (93g:68105)}},
%   doi={10.1007/BF01994876},
}

%\bib{CG}{article}{
%  title={{Dualities between entropy functions and network codes}},
%  author={Chan, T.},
%  author={Grant, A.},
%  journal={IEEE Trans. Inform. Theory},
%  volume={54},
%  number={10},
%  pages={4470--4487},
%  date={2008},
%}

\bib{CS}{article}{
  title={Efficient algorithms for index coding},
  author={Chaudhry, M.A.R.},
  author={Sprintson, A.},
  conference={
    title={IEEE Conference on Computer Communications Workshops (INFOCOM 2008)},
  },
  pages={1--4},
  date={2008},
}

\bib{DFZ1}{article}{
   author={Dougherty, Randall},
   author={Freiling, Christopher},
   author={Zeger, Kenneth},
   title={Insufficiency of linear coding in network information flow},
   journal={IEEE Trans. Inform. Theory},
   volume={51},
   date={2005},
%   number={8},
   pages={2745--2759},
%   issn={0018-9448},
%   review={\MR{2236245 (2007a:94096)}},
%   doi={10.1109/TIT.2005.851744},
}

\bib{DFZ2}{article}{
   author={Dougherty, Randall},
   author={Freiling, Chris},
   author={Zeger, Kenneth},
   title={Networks, matroids, and non-Shannon information inequalities},
   journal={IEEE Trans. Inform. Theory},
   volume={53},
   date={2007},
   number={6},
   pages={1949--1969},
%   issn={0018-9448},
%   review={\MR{2321860 (2008d:94016)}},
%   doi={10.1109/TIT.2007.896862},
}

%\bib{Erdos}{article}{
%   author={Erd{\H{o}}s, P.},
%   title={Some remarks on chromatic graphs},
%   journal={Colloq. Math.},
%   volume={16},
%   date={1967},
%   pages={253--256},
%%   issn={0010-1354},
%%   review={\MR{0210618 (35 \#1504)}},
%}

\bib{ER}{article}{
   author={Erd{\H{o}}s, P.},
   author={R{\'e}nyi, A.},
   title={On a problem in the theory of graphs},
   language={Hungarian, with Russian and English summaries},
   journal={Magyar Tud. Akad. Mat. Kutat\'o Int. K\"ozl.},
   volume={7},
   date={1962},
   pages={623--641 (1963)},
%   review={\MR{0193024 (33 \#1246)}},
}

\bib{Feige}{article}{
   author={Feige, Uriel},
   title={Randomized graph products, chromatic numbers, and the Lov\'asz $\vartheta$-function},
   journal={Combinatorica},
   volume={17},
   date={1997},
   number={1},
   pages={79--90},
   note={An earlier version appeared in Proc.\ of the 27th Annual ACM Symposium on Theory of computing (STOC 1995), pp. 635--640.},
%   issn={0209-9683},
%   review={\MR{1466577 (99f:05059)}},
%   doi={10.1007/BF01196133},
}

\bib{GR}{book}{
   author={Godsil, Chris},
   author={Royle, Gordon},
   title={Algebraic graph theory},
   series={Graduate Texts in Mathematics},
   volume={207},
   publisher={Springer-Verlag},
   place={New York},
   date={2001},
   pages={xx+439},
%   isbn={0-387-95241-1},
%   isbn={0-387-95220-9},
%   review={\MR{1829620 (2002f:05002)}},
}

\bib{HKL}{article}{
   author={Harvey, Nicholas J. A.},
   author={Kleinberg, Robert},
   author={Lehman, April Rasala},
   title={On the capacity of information networks},
   journal={IEEE Trans. Inform. Theory},
   volume={52},
   date={2006},
   number={6},
   pages={2345--2364},
%   issn={0018-9448},
%   review={\MR{2238546 (2007b:94131)}},
%   doi={10.1109/TIT.2006.874531},
}

\bib{HKNW}{article}{
    author = {Harvey, Nicholas J. A.},
    author={Kleinberg, Robert},
    author={Nair, Chandra},
    author={Wu, Yunnan},
    title = {A ``Chicken \& Egg{"} Network Coding Problem},
    conference={
      title={IEEE International Symposium on Information Theory (ISIT 2007)},
    },
    pages = {131--135},
%    year = {2007},
}

\bib{KRHKMC}{article}{
  title={XORs in the air: practical wireless network coding},
  author={Katti, S.},
  author={Rahul, H.},
  author={Hu, W.},
  author={Katabi, D.},
  author={M{\'e}dard, M.},
  author={Crowcroft, J.},
  journal={IEEE/ACM Trans. on Networking},
  volume={16},
%  number={3},
  pages={497--510},
  year={2008},
  note={An earlier version appeared in ACM SIGCOMM 2006.}
%  publisher={IEEE Press}
}

\bib{LaSp}{article}{
  title={On the hardness of approximating the network coding capacity},
  author={Langberg, M.},
  author={Sprintson, A.},
  conference={
    title={IEEE International Symposium on Information Theory (ISIT 2008)},
  },
  pages={315--319},
}

%\bib{LL}{article}{
%   author={Lehman, April Rasala},
%   author={Lehman, Eric},
%   title={Complexity classification of network information flow problems},
%   conference={
%      title={Proc.\ of the 15th Annual ACM-SIAM Symposium on Discrete Algorithms (SODA 2004)},
%   },
%%   book={
%%      publisher={ACM},
%%      place={New York},
%%   },
%%   date={2004},
%   pages={142--150},
%%   review={\MR{2291048}},
%}

\bib{LV}{article}{
    author = {Linial, Nathan},
    author = {Vazirani, Umesh},
    title = {Graph products and chromatic numbers},
    conference={
      title={Proc.\ of the 30th Annual IEEE Symposium on Foundations of Computer Science (FOCS 1989)},
    },
    pages = {124-128},
%   isbn = {0-8186-1982-1},
%   year = {1989},
%   doi = {http://doi.ieeecomputersociety.org/10.1109/SFCS.1989.63466},
%   publisher = {IEEE Computer Society},
%   address = {Los Alamitos, CA, USA},
}

\bib{Lovasz}{article}{
   author={Lov{\'a}sz, L.},
   title={Kneser's conjecture, chromatic number, and homotopy},
   journal={J. Combin. Theory Ser. A},
   volume={25},
   date={1978},
%   number={3},
   pages={319--324},
%   issn={0097-3165},
%   review={\MR{514625 (81g:05059)}},
%   doi={10.1016/0097-3165(78)90022-5},
}

\bib{LuSt}{article}{
   author={Lubetzky, Eyal},
   author={Stav, Uri},
   title={Non-linear index coding outperforming the linear optimum},
   journal={IEEE Trans. Inform. Theory},
   volume={55},
   date={2009},
   pages={3544--3551},
   note={An earlier version appeared in Proc.\ of the 48th Annual IEEE Symposium on Foundations of Computer Science (FOCS 2007), pp. 161--167.},
}

\bib{MW}{article}{
   author={Mubayi, Dhruv},
   author={Williford, Jason},
   title={On the independence number of the Erd\H os-R\'enyi and projective
   norm graphs and a related hypergraph},
   journal={J. Graph Theory},
   volume={56},
   date={2007},
   number={2},
   pages={113--127},
%   issn={0364-9024},
%   review={\MR{2350620 (2008g:05130)}},
%   doi={10.1002/jgt.20251},
}

\bib{RSG}{article}{
    author={El Rouayheb, S.},
    author={Sprintson, A.},
    author={Georghiades, C.},
    title={On the relation between the Index Coding and the Network Coding problems},
    conference={
        title={IEEE International Symposium on Information Theory (ISIT 2008)},
    },
%    month={july},
    pages={1823 -1827},
%    doi={10.1109/ISIT.2008.4595303},
}

\bib{SYC}{article}{
   author={Song, Lihua},
   author={Yeung, Raymond W.},
   author={Cai, Ning},
   title={Zero-error network coding for acyclic networks},
   journal={IEEE Trans. Inform. Theory},
   volume={49},
   date={2003},
   number={12},
   pages={3129--3139},
%   issn={0018-9448},
%   review={\MR{2045791 (2005a:94115)}},
%   doi={10.1109/TIT.2003.820015},
}

\bib{Wigderson}{article}{
   author={Wigderson, Avi},
   title={Improving the performance guarantee for approximate graph
   coloring},
   journal={J. Assoc. Comput. Mach.},
   volume={30},
   date={1983},
   number={4},
   pages={729--735},
%   issn={0004-5411},
%   review={\MR{819128}},
%   doi={10.1145/2157.2158},
}

\bib{Yeung}{book}{
   author={Yeung, Raymond W.},
   title={A first course in information theory},
   series={Information Technology: Transmission, Processing and Storage},
%   note={With a foreword by Toby Berger; With 1 CD-ROM},
   publisher={Kluwer Academic/Plenum Publishers, New York},
   date={2002},
   pages={xxiv+412},
%   isbn={0-306-46791-7},
%   review={\MR{2042182 (2005a:94028)}},
}

\bib{YLC}{book}{
  author={Yeung, Raymond W.},
  author={Li, Shuo-Yen Robert},
  author={Cai, Ning},
  title={Network coding theory},
  author={Yeung, R. and Li, SY and Cai, N.},
  date={2006},
  publisher={Now Publishers Inc},
}

%\bib{YZ}{article}{
%   author={Yeung, Raymond W.},
%   author={Zhang, Zhen},
%   title={Distributed source coding for satellite communications},
%   journal={IEEE Trans. Inform. Theory},
%   volume={45},
%   date={1999},
%%   number={4},
%   pages={1111--1120},
%%   issn={0018-9448},
%%   review={\MR{1686246 (2000e:94030)}},
%%   doi={10.1109/18.761254},
%}

\bib{YZ}{article}{
  author    = {Zhang, Zhen},
  author    = {Yeung, Raymond W.},
  title     = {On Characterization of Entropy Function via Information
               Inequalities},
  journal   = {IEEE Transactions on Information Theory},
  volume    = {44},
  number    = {4},
  date      = {1998},
  pages     = {1440-1452},
}

\end{biblist}
\end{bibdiv}

\end{document}